\newcommand{\Oh}{\mathcal{O}}
\newcommand{\OR}{\textsc{or}\xspace}
\newcommand{\notcontainment}{\ensuremath{\mathsf{NP \not\subseteq coNP/poly}}\xspace}
\newcommand{\eqvr}[0]{\ensuremath{\mathcal{R}}\xspace}
\newcommand{\ERBDS}{\textsc{erbds}\xspace}
\newcommand{\SERBDS}{\textsc{semi-erbds}\xspace}
\newcommand{\RBDS}{\textsc{rbds}\xspace}
\newcommand{\containment}{\ensuremath{\mathsf{NP  \subseteq coNP/poly}}\xspace}
\newcommand{\ncontainment}{\ensuremath{\mathsf{NP  \not \subseteq coNP/poly}}\xspace}
\newcommand{\yes}{\textsc{yes}}
\newcommand*{\etal}{\textit{et al.}\xspace}
\newcommand{\vect}[1]{\mathbf{#1}}
\newcommand{\Q}{\ensuremath{\mathcal{Q}}}
\newcommand{\N}{\ensuremath{\mathbb{N}}}
\definecolor{gray}{RGB}{140, 140, 140}
\definecolor{blue2}{RGB}{0, 20, 120}
\newcommand{\problem}[1]{\textsc{#1}}
\newcommand{\todo}[1][]{%
  \ifx/#1/%
    \textcolor{red}{TODO!}%
  \else%
    \textcolor{red}{todo: #1}%
  \fi%
}
\theoremstyle{plain}
\newtheorem{claim}[theorem]{Claim}
\newtheorem{observation}[theorem]{Observation}
\let\plainqed\qedsymbol
\newcommand{\claimqed}{$\lrcorner$}
\newenvironment{claimproof}{\begin{proof}\renewcommand{\qedsymbol}{\claimqed}}{\end{proof}\renewcommand{\qedsymbol}{\plainqed}}
\newenvironment{note}{%
    %\leavevmode\color{blue2}\ignorespaces%
}{%
}%
\newcommand{\defparproblem}[4]{
 \vspace{1mm}
\noindent\fbox{
 \begin{minipage}{0.96\textwidth}
 \begin{tabular*}{\textwidth}{@{\extracolsep{\fill}}lr} \problem{#1} & {\bf{Parameter:}} #3 \\ \end{tabular*}
 {\bf{Input:}} #2 \\
 {\bf{Question:}} #4
 \end{minipage}
 }
 \vspace{1mm}
}
\newcommand{\defproblem}[3]{
 \vspace{1mm}
\noindent\fbox{
 \begin{minipage}{0.96\textwidth}
 \begin{tabular*}{\textwidth}{@{\extracolsep{\fill}}lr} \problem{#1} &  \\ \end{tabular*}
 {\bf{Input:}} #2 \\
 {\bf{Question:}} #3
 \end{minipage}
 }
 \vspace{1mm}
}
\newcommand{\dSAT}{$d$-\textsc{cnf-sat}\xspace}
\newcommand{\dNAESAT}{$d$-\textsc{nae-sat}\xspace}
\newcommand{\ExactdSAT}{\textsc{Exact $d$-sat}\xspace}
\newcommand{\ExactSAT}{\textsc{Exact sat}\xspace}
\newcommand{\Ohtilde}{\ensuremath{\widetilde{\mathcal{O}}}}
\newcommand{\rootCSP}{\textsc{$d$-Polynomial root CSP}\xspace}
\newcommand{\nonrootCSP}{\textsc{$d$-Polynomial non-root CSP}\xspace}
\newcommand{\linearNonrootCSP}{\textsc{$1$-Polynomial non-root CSP}\xspace}
\newcommand{\linearrootCSP}{\textsc{$1$-Polynomial root CSP}\xspace}
\newcommand{\primefield}{\ensuremath{\mathbb{Z}/p\mathbb{Z}}\xspace}
\newcommand{\true}{\emph{true}\xspace}
\newcommand{\false}{\emph{false}\xspace}
\newcommand{\A}[4]{\begin{note}\ensuremath{a^{#1,#2}_{#3,#4}}\end{note}}
\newcommand{\B}[3]{\begin{note}\ensuremath{b^{#1}_{#2,#3}}\end{note}}
\title{Optimal Sparsification for Some Binary CSPs Using Low-Degree Polynomials\footnote{An extended abstract of this work appeared under the same title in the \emph{Proceedings of the 41st International Symposium on Mathematical Foundations of Computer Science (MFCS), Krakow, Poland, August 2016.}}}
\author{Bart M.\,P. Jansen}{Eindhoven University of Technology\\{P.O. Box 513, 5600 MB Eindhoven, The Netherlands}}{b.m.p.jansen@tue.nl}{http://orcid.org/0000-0001-8204-1268}{}%mandatory, please use full name; only 1 author per \author macro; first two parameters are mandatory, other parameters can be empty.
\author{Astrid Pieterse}{Eindhoven University of Technology\\{P.O. Box 513, 5600 MB Eindhoven, The Netherlands}}{astridpieterse@outlook.com}{http://orcid.org/0000-0003-3721-6721}{}
\authorrunning{B.\,M.\,P. Jansen and A. Pieterse}
\keywords{constraint satisfaction problem, kernelization, satisfiability, sparsification}
\begin{document}
%\acmJournal{TOCT}
%\markboth{B.M.P. Jansen and Astrid Pieterse}{Optimal Sparsification for Some Binary CSPs Using Low-degree Polynomials}
\maketitle

\begin{abstract}
This paper analyzes to what extent it is possible to efficiently reduce the number of clauses in NP-hard satisfiability problems, without changing the answer. Upper and lower bounds are established using the concept of kernelization. Existing results show that if \ncontainment, no efficient preprocessing algorithm can reduce $n$-variable instances of \textsc{cnf-sat} with~$d$ literals per clause, to equivalent instances with~$\Oh(n^{d-\varepsilon})$ bits for any~$\varepsilon > 0$. For the \textsc{Not-All-Equal sat} problem, a compression to size~$\Ohtilde(n^{d-1})$ exists. We put these results in a common framework by analyzing the compressibility of binary CSPs. We characterize constraint types based on the minimum degree of multivariate polynomials whose roots correspond to the satisfying assignments, obtaining (nearly) matching upper and lower bounds in several settings. Our lower bounds show that not just the number of constraints, but also the encoding size of individual constraints plays an important role. For example, for \textsc{Exact Satisfiability} with unbounded clause length it is possible to efficiently reduce the number of constraints to~$n+1$, yet no polynomial-time algorithm can reduce to an equivalent instance with~$\Oh(n^{2-\varepsilon})$ bits for any~$\varepsilon > 0$, unless \containment.
\end{abstract}

 % Must be in alphabetical order.

%\acmformat{Bart M.\,P. Jansen and Astrid Pieterse, 2017. Optimal Sparsification for Some Binary CSPs Using Low-degree Polynomials.}

\maketitle
\section{Introduction}

The goal of sparsification is to make an object such as a graph or logical structure less dense, without changing the outcome of a computational task of interest. Sparsification can be used to speed up the solution of NP-hard problems, by sparsifying a problem instance before solving it. The notion of kernelization, originating in the field of parameterized complexity~\cite{CyganFKLMPPS15,DowneyF13,FlumG06}, facilitates a rigorous study of polynomial-time preprocessing for NP-hard problems and can be used to reason about (the impossibility of) sparsification. Over the last few years, our understanding of the power of polynomial-time data reduction has increased tremendously, as documented in recent surveys~\cite{Bodlaender15,Gutin15a,Kratsch14,LokshtanovMS12}. By studying the kernelization complexity of a graph problem parameterized by the number of vertices, or of a logic problem parameterized by the number of variables, we can analyze its potential for sparsification.

The vast majority of the currently known results in this direction are negative~\cite{SatisfiabilityDell14,Jansen15,Jansen16,sparsificationJansenP15}, stating that no nontrivial sparsification is possible under plausible complexity-theoretic assumptions. For example, Dell and van Melkebeek~\cite{SatisfiabilityDell14} obtained such a result for \textsc{CNF-Satisfiability} with clauses of size at most~$d$ (\dSAT), for each fixed~$d \geq 3$. Assuming \ncontainment, there is no polynomial-time algorithm that compresses any $n$-variable instance of \dSAT to an equivalent instance with~$\Oh(n^{d-\varepsilon})$ bits for~$\varepsilon > 0$. Since there are~$\Oh(n^d)$ possible clauses of size at most~$d$ over~$n$ variables, the trivial compression scheme that outputs a bitstring of length~$\Oh(n^d)$, denoting for each possible clause whether it occurs in the instance or not, is optimal up to~$n^{o(1)}$ factors.

A problem for which nontrivial polynomial-time sparsification \emph{is} possible was recently discovered by the current authors~\cite{sparsificationJansenP15}. Any $n$-variable instance of the \textsc{Not-All-Equal CNF-Satisfiability} problem with clauses of size at most~$d$ (henceforth called \dNAESAT) can efficiently be compressed to an equivalent instance with~$\Oh(n^{d-1})$ clauses, which can be encoded in~$\Oh(n^{d-1} \log n)$ bits. The preprocessing algorithm is based on a linear-algebraic lemma by Lov\'asz~\cite{Lovasz76} to identify clauses that are implied by others, allowing a reduction from~$\Theta(n^d)$ clauses to~$\Oh(n^{d-1})$. This sparsification for \dNAESAT forms the starting point for this work. Since \dSAT and \dNAESAT can both be seen as constraint satisfaction problems (CSPs) with a binary domain, it is natural to ask whether the positive results for \dNAESAT extend to other binary CSPs. The difference between \dSAT and \dNAESAT shows that the type of constraints that one allows, affects the compressibility of the resulting CSP. The goal of this paper is to understand how the optimal compression size for a binary CSP depends on the type of legal constraints, with the aim of obtaining matching upper and lower bounds.

Before presenting our results, we give an example to illustrate our methods. Consider the NP-complete \textsc{Exact $d$-CNF-Satisfiability} (\ExactdSAT) problem, which asks whether there is a truth assignment that satisfies \emph{exactly one} literal in each clause; the clauses have size at most~$d$. While there are~$\Theta(n^d)$ different clauses that can occur in an instance with~$n$ variables, the exact nature of the problem makes it possible to reduce any instance to an equivalent one with~$n+1$ clauses. A clause such as~$x_1 \vee x_3 \vee \neg x_5$ naturally corresponds to an equality constraint of the form~$x_1 + x_3 + (1 - x_5) = 1$, since a $0/1$-assignment to the variables satisfies exactly one literal of the clause if and only if it satisfies the equality. To find redundant clauses, transform each of the~$m$ clauses into an equality to obtain a system of equalities~$A\vect{x} = \vect{b}$ where~$A$ is an~$m \times n$ matrix, $\vect{x}$ is the column vector~$(x_1, \ldots, x_n)$, and~$\vect{b}$ is an integer column vector. Using Gaussian elimination, one can efficiently compute a basis~$B$ for the row space of the extended matrix~$(A|b)$: a set of equalities such that every equality can be written as a linear combination of equalities in~$B$. Since~$(A|b)$ has~$n+1$ columns, its rank is at most~$n+1$ and the basis~$B$ contains at most~$n+1$ equalities. To perform data reduction, remove all clauses from the \ExactdSAT instance whose corresponding equalities do not occur in~$B$. If an assignment satisfies~$f_1(\vect{x}) = b_1$ and~$f_2(\vect{x}) = b_2$, then it also satisfies their sum $f_1(\vect{x}) + f_2(\vect{x}) = b_1 + b_2$, and any linear combination of the satisfied equalities. Since any equality not in~$B$ can be written as a linear combination of equalities in~$B$, a truth assignment satisfying all clauses from~$B$ must necessarily also satisfy the remaining clauses, which shows the correctness of the data reduction procedure. The resulting instance can be encoded in~$\Oh(n \log n)$ bits, as each of the remaining~$n+1$ clauses has~$d \in O(1)$ literals.

\subparagraph*{Our results} Our positive results are generalizations of the linear-algebraic data reduction tool for binary CSPs presented above. They reveal that the~$\Ohtilde(n)$-bit compression for \ExactdSAT, the~$\Ohtilde(n^{d-1})$-bit compression for \dNAESAT, and the~$\Oh(n^d)$-bit compression for \dSAT are samples of a gliding scale of problem complexity: more tightly constrained problems can be compressed better. We formalize this idea by considering a generic CSP whose constraints are of the form $f(\vect{x}) = 0$, where~$f$ is a bounded-degree multivariate polynomial and the constraint demands that~$\vect{x}$ is a root of~$f$. The example given earlier shows that \ExactdSAT can be expressed using degree-1 polynomials. We show that \dNAESAT and \dSAT can be expressed using equalities of polynomial expressions of degree~$d-1$ and~$d$. We therefore study the following problem:

\defparproblem{\problem{\rootCSP}}
{A list $L$ of polynomial equalities over variables $V = \{x_1, \ldots, x_n\}$. An equality is of the form $f(x_1, \ldots, x_n) = 0$, where $f$ is a multivariate polynomial of degree at most~$d$.}
{The number of variables $n$.}
{Does there exist an assignment of the variables $\tau \colon V \to \{0,1\}$ satisfying all equalities in $L$?
}

Using a generalization of the argument presented above, the number of constraints in an instance of \rootCSP can efficiently be reduced to~$\Oh(n^d)$, even when the number of variables that occur in a constraint is not restricted. The latter implies, for example, that using degree-1 polynomials one can express the \ExactSAT problem with clauses of arbitrary size. When the number of variable occurrences in a constraint can be as large as~$n$, it may take~$\Omega(n)$ bits to encode a single constraint. After reducing the number of clauses in an \ExactSAT instance to~$n+1$, one may therefore still require~$\Theta(n^2)$ bits to encode the instance. This turns out to be unavoidable: we prove that \ExactSAT has no sparsification of size~$\Oh(n^{2-\varepsilon})$ for any~$\varepsilon > 0$, unless \containment. In general, we compress instances of~\rootCSP to bitsize~$\Ohtilde(n^{d+1})$ when each constraint can be encoded in~$\Ohtilde(n)$ bits. We prove that no compression to size~$\Oh(n^{d+1-\varepsilon})$ is possible unless \containment. When each constraint can be encoded in~$\Ohtilde(1)$ bits, the constraint reduction scheme reduces the size of an instance to~$\Ohtilde(n^d)$. As we will show that \dNAESAT can be modeled using polynomials of degree~$d-1$, this method strictly generalizes our earlier results~\cite{sparsificationJansenP15} for \dNAESAT.

The linear-algebraic data reduction tool described above works over arbitrary fields~$F$, allowing us to capture constraints such as ``the number of satisfied literals in the clause is exactly two, when evaluated modulo~$3$''. We therefore extend our study to the \rootCSP problem over arbitrary fields~$F$, and obtain similar positive and negative results. We furthermore extend our previous work by showing similar upper and lower bounds for \rootCSP over the integers modulo $m$, where $m$ need not be a prime number. When $m$ is not prime, the resulting structure is not a field, which imposes technical difficulties.

Finally, we consider binary CSPs whose constraints are formed by \emph{inequalities}, rather than equalities, of degree-$d$ polynomials. This leads to the following generic problem:

\defparproblem{\nonrootCSP over $F$}
{A list $L$ of polynomial inequalities over variables $V = \{x_1, \ldots, x_n\}$. An inequality is of the form $f(x_1, \ldots, x_n) \neq 0$, where $f$ is a multivariate polynomial of degree at most~$d$.}
{The number of variables $n$.}
{Does there exist an assignment of the variables $\tau \colon V \to \{0,1\}$ satisfying all inequalities in $L$?}

We present upper and lower bounds for problems of this type. When the polynomials are evaluated over a structure that is not a field, the behavior changes significantly. For example, CSPs with constraints of the type ``the number of satisfied literals in the clause is 1 or 2, when evaluated modulo 6'' behave differently than the corresponding problem modulo 5, or modulo 7, because the integers modulo 6 do not form a field.
In contrast to \rootCSP, our lower-bound techniques for \nonrootCSP fail when defining constraints with respect to composite moduli.
We present connections to different areas of theoretical computer science where the distinction between prime and composite moduli plays a big role. More concretely, we show that obtaining polynomial sparsification upper bounds for \nonrootCSP over the integers modulo a composite is strongly tied to long-standing open problems concerning the representation of the \OR-function using low-degree polynomials (cf.~\cite{Barrington1994representing,Bhowmick2015Nonclassical,TardosB98}). Table \ref{table:results} contains a summary of our results.
\begin{table}
\begin{threeparttable}
\caption{Summary of the kernel upper and lower bounds obtained in this paper, expressed in the number of bits. The bounds depend on whether the polynomials defining the constraints are over the rationals~$\mathbb{Q}$, the integers modulo a prime~$p$, or the integers modulo a composite~$m$. The integer $r$ denotes the number of distinct prime divisors of $m$. The values of~$p$,~$m$,~$r$, and~$d$ are treated as constants in these bounds. \label{table:results}}{
\begin{tabularx}{\textwidth}{@{}r X X c l X@{}}
\toprule
Problem                 &             \multicolumn{2}{c}{\rootCSP} && \multicolumn{2}{c}{\nonrootCSP} \\
\cmidrule{2-3}\cmidrule{5-6}
                        &           Lower bound{$^1$} & Upper bound{$^2$} && Lower bound{$^1$} & Upper bound{$^2$} \\
\midrule
$\mathbb{Q}$            &$\Omega(n^{d+1-\varepsilon})$&$\widetilde{\Oh}(n^{d+1})$
&& Superpolynomial &  \\
$\mathbb{Z}/p\mathbb{Z}$&$\Omega(n^{d+1-\varepsilon})$&$\widetilde{\Oh}(n^{d+1})$
&& $\Omega(n^{d(p-1)-\varepsilon})$ & $\widetilde{\Oh}(n^{d(p-1)+1})$ \\
$\mathbb{Z}/m\mathbb{Z}$&$ \Omega(n^{d+1-\varepsilon})$&$\widetilde{\Oh}(n^{d+1})$&& $\Omega(n^{(d/2)^r-\varepsilon})$  & ? \\
\midrule
Reference & Thm. \ref{cor:LB:1-poly-root-csp-modm}, \ref{thm:drootcspmod:lb} & Thm. \ref{thm:kernel}, \ref{thm:kernel_modm}, \ref{thm:subset_kernel_modm} &  & Thm. \ref{thm:nonroot:q:lb}, \ref{thm:nonroot:prime:lb},  \ref{thm:nonroot:composite:lb} & Thm. \ref{thm:UB:non-root}\\
\bottomrule
\end{tabularx}}
\begin{tablenotes}
\item [1] \small{The lower bounds hold for any $\varepsilon > 0$, for the problems that are not polynomial-time solvable and under the assumption that \notcontainment.}
\item [2] \small{The upper bounds hold when each $n$-variate polynomial constraint in the input can be encoded in $\widetilde{\Oh}(n)$ bits.}
\end{tablenotes}
\end{threeparttable}
%\begin{tabnote}
%\tabnoteentry{$^1$}{The lower bounds hold for any $\varepsilon > 0$, for the problems that are not polynomial-time solvable and under the assumption that \notcontainment.}
%\tabnoteentry{$^2$}{The upper bounds hold when each $n$-variate polynomial constraint in the input can be encoded in $\widetilde{\Oh}(n)$ bits.}
%\end{tabnote}
\end{table}
%\end{note}

\subparagraph*{Related work} Schaefer's Theorem~\cite{Schaefer78} is a classic result relating the complexity of a binary CSP to the type of allowed constraints, separating the NP-complete from the polynomial-time solvable cases. A characterization of the kernelization complexity of min-ones CSPs parameterized by the number of variables was presented by Kratsch and Wahlstr\"om~\cite{Kratsch2010preprocessing}. There are several parameterized complexity results for CSPs~\cite{ConstraintBulatov14,ComplexityDell15,KratschMW16}.

\section{Preliminaries} \label{sec:preliminaries}

A \emph{parameterized problem} $\mathcal{Q}$ is a subset of $\Sigma^* \times \mathbb{N}$, where $\Sigma$ is a finite alphabet. Let $\mathcal{Q}, \mathcal{Q}' \subseteq \Sigma^*\times\mathbb{N}$ be parameterized problems and let $h\colon\mathbb{N}\rightarrow\mathbb{N}$  be a computable function. A \emph{generalized kernel for $\mathcal{Q}$ into $\mathcal{Q}'$ of size $h(k)$} is an algorithm that, on input $(x,k) \in \Sigma^*\times\mathbb{N}$, takes time polynomial in $|x|+k$ and outputs an instance $(x',k')$ such that:
\begin{enumerate}
\item $|x'|$ and $k'$ are bounded by $h(k)$, and
\item $(x',k')\in\mathcal{Q}'$ if and only if $(x,k) \in \mathcal{Q}$.
\end{enumerate}
The algorithm is a \emph{kernel} for $\mathcal{Q}$ if $\mathcal{Q} = \mathcal{Q'}$. It is a \emph{polynomial (generalized) kernel} if $h(k)$ is a polynomial. Since a polynomial-time reduction to an equivalent sparse instance yields a generalized kernel, we use lower bounds for the sizes of generalized kernels to prove the non-existence of sparsification algorithms.

A \emph{linear-parameter transformation} from a parameterized problem~$\mathcal{Q}$ to a parameterized problem~$\mathcal{Q'}$ is a polynomial-time algorithm that transforms any instance~$(x,k)$ of~$\mathcal{Q}$ into an equivalent instance~$(x',k')$ of~$\mathcal{Q'}$ such that~$k' \in \Oh(k)$. It is easy to see (cf.~\cite{BodlaenderTY11}) that the existence of a linear-parameter transformation from~$\mathcal{Q}$ to~$\mathcal{Q'}$, together with a (generalized) kernel of size~$\Oh(k^d)$ for~$\mathcal{Q'}$, yields a generalized kernel of size~$\Oh(k^d)$ for~$\mathcal{Q}$. By contraposition, the existence of such a transformation implies that when~$\mathcal{Q}$ does not have generalized kernels of size~$\Oh(k^{d-\varepsilon})$, then~$\mathcal{Q'}$ does not have generalized kernels of size~$\Oh(k^{d-\varepsilon})$ either. For some of our lower bounds, we use linear-parameter transformations in combination with the following result by Dell and van Melkebeek. They proved a stronger version of the following theorem in \cite{SatisfiabilityDell14}. It is rephrased here to match our terminology.

\begin{theorem}[{\cite[Theorem 1]{SatisfiabilityDell14}}]\label{thm:lower_bound:CNF}
Let $d \geq 3$ be an integer. Then \dSAT parameterized by the number of variables~$n$ does not have a generalized kernel of size $\Oh(n^{d-\varepsilon})$ for any $\varepsilon > 0$, unless \containment.
\end{theorem}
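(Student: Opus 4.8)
The plan is to establish this lower bound with the standard kernelization-lower-bound machinery: since a generalized kernel of size $\Oh(n^{d-\varepsilon})$ is in particular a polynomial-time compression of that size, it suffices to argue that such a compression would collapse the polynomial hierarchy. I would do this by constructing an \OR-cross-composition \emph{of degree $d$} into \dSAT parameterized by the number of variables $n$: a polynomial-time algorithm that, given $t$ instances $\phi_1,\dots,\phi_t$ of a fixed \NP-hard language $L$, each of bitsize at most $s$, outputs a single \dSAT instance $\Psi$ on at most $t^{1/d}\cdot\mathrm{poly}(s)$ variables such that $\Psi$ is satisfiable if and only if at least one $\phi_i$ is a \yes-instance. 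Given such a composition, a hypothetical kernel of size $\Oh(n^{d-\varepsilon})$ maps $\Psi$ to an equivalent instance on $\Oh\bigl((t^{1/d}\mathrm{poly}(s))^{d-\varepsilon}\bigr)=\Oh(t^{1-\varepsilon/d}\mathrm{poly}(s))$ bits; choosing $t$ to be a sufficiently large polynomial in $s$ (so that the $\mathrm{poly}(s)$ factor is absorbed into $t^{\varepsilon/d}$) makes this $o(t)$, which is an \OR-distillation of $L$ and hence implies \containment\ by the results of Fortnow--Santhanam and Dell--van Melkebeek. Equivalently, one can run the whole thing inside an oracle communication protocol for the AND of $t$ instances of $\overline L$ and invoke the complementary witness lemma of~\cite{SatisfiabilityDell14}. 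Either way, the entire proof reduces to constructing the composition.

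For the composition I would take $L=\textsc{3-sat}$, which is \NP-hard since $d\ge 3$, and proceed as in a typical cross-composition. After the usual normalization one may assume the $t$ input formulas are $3$-CNF formulas over a common variable set $x_1,\dots,x_m$ with a common number of clauses, where $m$ and the number of clauses are $\mathrm{poly}(s)$, and (padding with unsatisfiable dummies) that $t=r^d$, so the instances are indexed by tuples $a=(a_1,\dots,a_d)\in[r]^d$. The variables of $\Psi$ are the $x_i$ together with ``selector'' variables $y_{j,\ell}$ for $j\in[d]$, $\ell\in[r]$, whose intended meaning is that the $j$-th coordinate of the chosen address equals $\ell$; this is $m+dr\le t^{1/d}\cdot\mathrm{poly}(s)$ variables, exactly the budget we need. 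The constraints of $\Psi$ should (i) force the $y$-variables to encode a single address $a\in[r]^d$, and (ii) for every address $a$ and every clause $C$ of $\phi_a$, assert the implication $\bigl(\bigwedge_{j\in[d]} y_{j,a_j}\bigr)\to C$. With a consistent encoding the equivalence ``$\Psi$ satisfiable $\iff$ some $\phi_a$ satisfiable'' is then immediate: a satisfying assignment picks an address and a satisfying assignment of the corresponding formula, and conversely.

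The main obstacle — and the only real content beyond routine bookkeeping — is to realize (i) and (ii) with clauses of width at most $d$ while staying within the variable budget $t^{1/d}\cdot\mathrm{poly}(s)$. Naively, ``pick one of $r$ values in each coordinate'' needs width-$r$ clauses, and the implication in (ii) is a clause of width $d+|C|=d+3$; and the obvious repairs fail badly, since introducing even one auxiliary variable per address to shrink these clauses blows the variable count up to $\Theta(t)$. This is exactly where the argument is delicate: one must encode the selection and the clause checks by a single globally shared gadget so that the address genuinely ranges over all $r^d$ possibilities (using all $d$ literal slots) while each check still fits into a width-$d$ clause — a partial selection using only $d-O(1)$ slots would handle only $t^{1/(d-O(1))}$ instances and yield the much weaker $n^{d-O(1)-\varepsilon}$ bound. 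I expect essentially all of the work to go into this encoding; once width-exactly-$d$ clauses are achieved on $t^{1/d}\mathrm{poly}(s)$ variables, the collapse to \containment\ follows as sketched, giving the stated bound for every integer $d\ge 3$ and every $\varepsilon>0$.
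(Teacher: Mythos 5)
There is a genuine gap, and you have in fact pointed at it yourself: the outer machinery you describe (kernel of size $\Oh(n^{d-\varepsilon})$ plus a cost-$t^{1/d}$ \OR-cross-composition implies \containment, via the complementary witness lemma or, equivalently, Theorem~\ref{thm:cross_composition_LB}) is standard and correct, but the entire content of the theorem is the composition itself -- an encoding of the \OR of $t$ instances of an NP-hard problem into a single \emph{width-$d$} CNF formula on only $t^{1/d}\cdot\mathrm{poly}(s)$ variables -- and your proposal does not construct it. Your sketch with shared variables $x_1,\dots,x_m$ and selectors $y_{j,\ell}$ produces implication clauses of width $d+3$, you correctly observe that the obvious repairs fail (per-address auxiliary variables blow the variable count up to $\Theta(t)$, and dropping selector coordinates only yields the weaker exponent), and you then leave the crucial encoding as something you ``expect all the work to go into.'' Attempts to absorb the three original literals into the addressing (e.g.\ replacing them by literals over per-coordinate copies $z_{j,a,k}$ of the variable space, so that a clause uses $d-3$ selector literals plus three copied literals) run into a real obstruction: the clauses generated for \emph{non-selected} addresses must also be satisfied, and they reuse the same copied blocks both positively and negatively across many addresses, so their satisfaction is not automatic. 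Resolving this is exactly the nontrivial construction of Dell and van Melkebeek, not routine bookkeeping; as written, your argument reduces the theorem to an unproved statement that is essentially equivalent to it.

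For comparison: the paper does not prove this statement at all -- it is imported verbatim (rephrased) from \cite{SatisfiabilityDell14}, where the hard part is precisely the low-width, few-variables embedding of a disjunction of NP-hard instances, established through their oracle-communication/complementary-witness framework. So a blind proof would have to supply that construction in full; your proposal supplies only the surrounding reduction logic.
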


We also use the framework of cross-composition~\cite{BodlaenderJK14} to establish kernelization lower bounds, requiring the definitions of polynomial equivalence relations and \OR-cross-compositions.

\begin{definition}[{Polynomial equivalence relation, \cite[Def. 3.1]{BodlaenderJK14}}] \label{definition:eqvr}
An equivalence relation~\eqvr on~$\Sigma^*$ is called a \emph{polynomial equivalence relation} if the following conditions hold.
\begin{itemize}
\item There is an algorithm that, given two strings~$x,y \in \Sigma^*$, decides whether~$x$ and~$y$ belong to the same equivalence class in time polynomial in~$|x| + |y|$.
\item For any finite set~$S \subseteq \Sigma^*$ the equivalence relation~$\eqvr$ partitions the elements of~$S$ into a number of classes that is polynomially bounded in the size of the largest element of~$S$.
\end{itemize}
\end{definition}

\begin{definition}[{Cross-composition, \cite[Def. 3.3]{BodlaenderJK14}}]\label{definition:crosscomposition}
Let~$L\subseteq\Sigma^*$ be a language, let~$\eqvr$ be a polynomial equivalence relation on~$\Sigma^*$, let~$\Q\subseteq\Sigma^*\times\N$ be a parameterized problem, and let~$f \colon \N \to \N$ be a function. An \emph{\OR-cross-com\-position of~$L$ into~$\Q$} (with respect to \eqvr) \emph{of cost~$f(t)$} is an algorithm that, given~$t$ instances~$x_1, x_2, \ldots, x_t \in \Sigma^*$ of~$L$ belonging to the same equivalence class of~$\eqvr$, takes time polynomial in~$\sum _{i=1}^t |x_i|$ and outputs an instance~$(y,k) \in \Sigma^* \times \mathbb{N}$ such that:
\begin{itemize}
\item The parameter~$k$ is bounded by $\Oh(f(t)\cdot(\max_i|x_i|)^c)$, where~$c$ is some constant independent of~$t$, and
\item instance $(y,k) \in \Q$ if and only if there is an~$i \in [t]$ such that~$x_i \in L$.\label{property:OR}
\end{itemize}
\end{definition}

\begin{theorem}[{\cite[Theorem 6]{BodlaenderJK14}}] \label{thm:cross_composition_LB}
Let~$L\subseteq\Sigma^*$ be a language, let~$\Q\subseteq\Sigma^*\times\N$ be a parameterized problem, and let~$d,\varepsilon$ be positive reals. If~$L$ is NP-hard under Karp reductions, has an \OR-cross-composition into~$\Q$ with cost~$f(t)=t^{1/d+o(1)}$, where~$t$ denotes the number of instances, and~$\Q$ has a polynomial (generalized) kernelization with size bound~$\Oh(k^{d-\varepsilon})$, then \containment.
\end{theorem}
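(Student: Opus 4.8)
The plan is to turn the hypothesised kernelization of~$\Q$ into a ``too\nobreakdash-good\nobreakdash-to\nobreakdash-be\nobreakdash-true'' \OR-compression of the NP-hard language~$L$, and then to invoke the distillation lower bounds of Fortnow--Santhanam, as refined by Dell and van Melkebeek~\cite{SatisfiabilityDell14}, to deduce the collapse. Suppose $\Q$ has a generalized kernelization~$\mathcal{K}$ into some parameterized problem~$\Q'$ with size bound~$\Oh(k^{d-\varepsilon})$. Given $t$ arbitrary instances $x_1, \ldots, x_t$ of~$L$ with $s := \max_i |x_i|$, first partition them according to the polynomial equivalence relation~$\eqvr$ underlying the cross-composition; by Definition~\ref{definition:eqvr} this produces at most $q(s)$ classes for some fixed polynomial~$q$. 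Pad each class to exactly~$t$ instances by repeating one of its members (which does not change the \OR over that class), run the \OR-cross-composition on it to obtain an instance $(y_j, k_j)$ of~$\Q$ with $k_j \leq \Oh(f(t) \cdot s^c)$, and then apply~$\mathcal{K}$ to obtain an instance $(y_j', k_j')$ of~$\Q'$ of bitsize $\Oh(k_j^{d-\varepsilon})$. Output the tuple $\big((y_1', k_1'), \ldots, (y_{q(s)}', k_{q(s)}')\big)$. By the \OR-property of the cross-composition together with the correctness of~$\mathcal{K}$, this tuple lies in the fixed language $L'' := \{(z_1, \ldots, z_\ell) \mid \exists j\colon z_j \in \Q'\}$ if and only if some $x_i \in L$, and the whole procedure runs in polynomial time whenever $t$ is polynomially bounded in~$s$.

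Next I would bound the output size. It is at most $q(s) \cdot \Oh\big((f(t) \cdot s^c)^{d-\varepsilon}\big)$, and since $f(t) = t^{1/d + o(1)}$ we have $(f(t))^{d-\varepsilon} = t^{\,1 - \varepsilon/d + o(1)}$. Choosing $t$ to be a sufficiently large polynomial in~$s$ forces the $o(1)$ contribution below $\varepsilon/(2d)$ for all large~$s$, so the output has bitsize at most $t^{1-\delta} \cdot \mathrm{poly}(s)$ with $\delta := \varepsilon/(2d) > 0$. We have thus built a polynomial-time algorithm compressing the \OR of $t$ instances of~$L$ of size at most~$s$ into a string of bitsize only $t^{1-\delta} \cdot \mathrm{poly}(s)$, while preserving the \OR of the answers. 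As $L$ is NP-hard under Karp reductions, every instance of \textsc{cnf-sat} reduces to an instance of~$L$ with polynomial blow-up, so the same sub-linear \OR-compression becomes available for \textsc{cnf-sat}.

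Finally I would invoke the distillation machinery: no NP-hard language admits an \OR-compression of this strength unless \containment. I expect this implication to be the main obstacle, and it is exactly what the cited theorem lets us use as a black box. Its proof stores, in polynomial advice for each length~$s$, a family $W_s$ of \textsc{no}-instances of~$L''$ that jointly ``cover'' all \textsc{no}-instances of~$L$ of length at most~$s$, in the sense that every such instance~$w$ lies in some tuple~$T$ consisting of \textsc{no}-instances whose compressed image equals a member of~$W_s$; a nondeterministic machine then certifies $w \notin L$ by guessing such a~$T$ and checking that its (polynomial-time computable) compressed image occurs in the advice. A greedy counting argument shows that a family $W_s$ of size $\mathrm{poly}(s)$ suffices, and here the sub-linear blow-up $t^{1-\delta}$ is precisely what guarantees that each greedy step covers a constant fraction of the remaining \textsc{no}-instances, keeping the advice polynomially bounded. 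The remaining ingredients --- handling the equivalence classes as above, and the routine bookkeeping that keeps the exponent of~$t$ strictly below~$1$ --- are straightforward, yielding \containment.
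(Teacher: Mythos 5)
This theorem is not proved in the paper itself; it is imported as a black box from Bodlaender, Jansen and Kratsch (the cited Theorem~6), and your argument reconstructs essentially that standard proof: partition by the polynomial equivalence relation, compose and then kernelize to obtain an \OR-compression of the NP-hard language with output size $t^{1-\delta}\cdot\mathrm{poly}(s)$, and finish with the Fortnow--Santhanam/Dell--van Melkebeek complementary-witness (greedy covering) argument to place the language in \coNPpoly and conclude \containment. Your proposal is correct and matches that route; the only cosmetic difference is that the original packages the last step via Dell and van Melkebeek's lemma rather than re-deriving the covering argument.
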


For~$d \in \N$ we will refer to an \OR-cross-composition of cost~$f(t) = t^{1/d} \log (t)$ as a \emph{degree-$d$ cross-composition}. By Theorem~\ref{thm:cross_composition_LB}, a degree-$d$ cross-composition can be used to rule out generalized kernels of size~$\Oh(k^{d - \varepsilon})$. Note that when studying sparsification, we use the number of vertices or variables in the instance (which is usually denoted by~$n$) as the parameter value (which is usually denoted by~$k$).

When interpreting truth assignments as elements of a field, we equate the value \true with the $1$ element in the field (multiplicative identity), and the value \false with the $0$ element (additive identity). Consequently, for a boolean variable~$x$ its negation~$\neg x$ corresponds to~$(1-x)$. We let $\mathbb{Z}/m\mathbb{Z}$ denote the integers modulo~$m$, which form a field if~$m$ is a prime number. We use~$a \equiv_m b$ to denote that~$a$ and~$b$ are congruent modulo~$m$, and~$a \not\equiv_m b$ to denote non-congruence. We denote the greatest common divisor of a set~$S$ of non-negative integers by~$\gcd(S)$. The \emph{degree} of a multivariate polynomial is the maximum degree of its monomials. Let $f(x_1,\ldots, x_d)$ be a $d$-variate polynomial over a field $F$. The \emph{root set} of $f$ is the algebraic variety $\{(e_1,\ldots,e_d) \in F^d \mid f(e_1,\ldots,e_d) = 0\}$. For a field~$F$ and a finite set~$S \subseteq F$ of elements, the univariate polynomial~$f(x) := \prod_{s \in S} (x-s)$ over~$F$ of degree~$|S|$ has root set exactly~$S$. We say that a field~$F$ is \emph{efficient} if the field operations and Gaussian elimination can be done in polynomial time in the size of a reasonable input encoding. The field of rational numbers~$\mathbb{Q}$, and all finite fields, are efficient. We use~$[n]$ to denote~$\{1, \ldots, n\}$. We denote the positive and non-negative integers by~$\mathbb{N}$ and~$\mathbb{N}_0$, respectively. The~$\Ohtilde$-notation suppresses polylogarithmic factors: $\Ohtilde(n) = \Oh(n \log^{c} n)$ for a constant~$c$.

\section{Kernel upper bounds}
\subsection{Polynomial root CSP over a field}
We start by showing how to reduce the number of constraints in instances of \rootCSP, by extending the argument presented in the introduction.

\begin{theorem}\label{thm:kernel}
There is a polynomial-time algorithm that, given an instance~$(L,V)$ of \rootCSP over an efficient field~$F$, outputs an equivalent instance~$(L',V)$ with at most~$n^d+1$ constraints such that~$L' \subseteq L$.
\end{theorem}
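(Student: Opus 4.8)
The plan is to generalize the Gaussian-elimination argument for \ExactdSAT from the introduction, exploiting the fact that over the Boolean cube every polynomial may be assumed multilinear. First I would preprocess each constraint $f_i(\vect{x}) = 0$ by repeatedly replacing every occurrence of $x_j^k$ (for $k \geq 2$) with $x_j$. Since $e^k = e$ for all $e \in \{0,1\}$ and $k \geq 1$, this does not change the value of $f_i$ on any assignment $\tau \colon V \to \{0,1\}$, hence preserves the satisfying assignments, and it can only lower the degree. After this step every $f_i$ is a multilinear polynomial of degree at most $d$, i.e., an $F$-linear combination of the monomials $m_S := \prod_{j \in S} x_j$ over $S \subseteq [n]$ with $|S| \leq d$.

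Next I would view each $f_i$ as its coefficient vector in the $F$-vector space $M$ spanned by $\{m_S : S \subseteq [n],\ |S| \leq d\}$. The dimension of $M$ is $\sum_{i=0}^{d} \binom{n}{i}$, which is at most $n^d + 1$ (the constant monomial contributes $1$, and the nonempty subsets of size at most $d$ inject into $[n]^d$ by padding with the largest element). Run Gaussian elimination over $F$ — possible in polynomial time since $F$ is efficient — on the matrix whose rows are the coefficient vectors of $f_1, \ldots, f_m$, and extract a maximal linearly independent subset $\{f_{i_1}, \ldots, f_{i_r}\}$ of the rows; necessarily $r \leq \dim M \leq n^d + 1$. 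Let $L'$ be the corresponding sublist of constraints, so that $L' \subseteq L$ and $|L'| \leq n^d + 1$.

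For correctness, one direction is immediate: any assignment satisfying all of $L$ satisfies all of $L' \subseteq L$. For the converse, suppose $\tau$ satisfies every constraint in $L'$. Fix any original constraint $f_k(\vect{x}) = 0$ in $L$ and write its multilinearization as an $F$-linear combination $\sum_{t=1}^{r} \lambda_t f_{i_t}$ of the chosen independent set — possible because these rows span all coefficient vectors, and because the multilinear representation is unique, so an identity of coefficient vectors is an identity of polynomial functions, in particular on $\{0,1\}^n$. Hence $f_k(\tau) = \sum_{t=1}^{r} \lambda_t f_{i_t}(\tau) = \sum_{t} \lambda_t \cdot 0 = 0$, so $\tau$ satisfies the original constraint. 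Therefore $(L', V)$ is equivalent to $(L, V)$.

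The only real obstacle here is bookkeeping rather than a genuine mathematical difficulty: one must check that the input is encoded so that multilinearization and Gaussian elimination run in polynomial time (a degree-$\leq d$ polynomial over $n$ variables has at most $\binom{n+d}{d} = n^{\Oh(1)}$ monomials for fixed $d$, so the coefficient matrix has polynomially many rows and columns over an efficient field), and one must be slightly careful that it is precisely the linear dependence of the \emph{multilinearized} coefficient vectors that licenses the implication between the constraints on Boolean inputs. I would also remark that passing to multilinear polynomials, rather than arbitrary polynomials of degree at most $d$, is exactly what yields the clean bound $n^d + 1$.
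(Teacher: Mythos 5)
Your proposal is correct and follows essentially the same route as the paper's proof: multilinearize the constraints, form the coefficient matrix over the monomials of degree at most $d$, use Gaussian elimination over the efficient field to keep a row basis drawn from the original constraints, and bound its size by the number of columns $\sum_{i=0}^{d}\binom{n}{i} \leq n^d+1$ via the same tuple-padding count. The correctness argument via evaluating the linear combination of coefficient vectors at the assignment is also the paper's argument, so there is nothing further to add.
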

\begin{proof}
Given a list~$L$ of polynomial equalities over variables~$V$ for \rootCSP, we use linear algebra to find redundant constraints. Observe that~$(x_i)^c = x_i$ for all~$0/1$-assignments and~$c \geq 1$. As constraints are evaluated over~$0/1$-assignments, we may assume without loss of generality that the monomials in each of the polynomials are multilinear: each monomial consists of a coefficient from~$F$ multiplied by distinct variables.

Create a matrix $A$ with $|L|$ rows and a column for every multilinear monomial of degree at most $d$ over variables from $V$. Let position $a_{i,j}$ in $A$ be the coefficient of the monomial corresponding to column $j$ in the polynomial equality corresponding to row $i$.

Compute a basis~$B$ of the row space of matrix $A$, for example using Gaussian elimination~\cite{Hogben13}, and let~$L'$ consist of the equalities in~$L$ whose corresponding row appears in the basis.
Since $L'\subseteq L$, it follows that if the original instance has a satisfying assignment, the reduced instance has a satisfying assignment as well. The crucial part of the correctness proof is to establish the converse.

\begin{claim}
If an assignment $\tau \colon V \rightarrow \{0,1\}$ of the variables in $V$ satisfies the equalities in $L'$, then it satisfies all equalities in~$L$.
\end{claim}
\begin{claimproof}
Consider any equality $(f(\vect{x}) = 0) \in L \setminus L'$, and assume it corresponds to the $i$'th matrix row. Let $f_j(\vect{x})$ be the polynomial represented in the $j$'th row of matrix $A$ for $j \in [|L|]$. Without loss of generality, let the basis of $A$ correspond to its first $m$ rows~$\vect{a}_1, \ldots, \vect{a}_m$. We then have $i>m$, and by the definition of basis there exist $\beta_1,\ldots,\beta_m \in F$ such that $\vect{a}_i = \sum_{j=1}^m \beta_j\vect{a}_j.$ Let $\vect{t}$ be the column vector containing, for each multilinear monomial of degree $\leq d$ in variables $x_1,\ldots,x_n$, the evaluation under $\tau$. For example, for monomial $x_1x_3$ it contains $\tau(x_1) \cdot \tau(x_3)$. By using the same order of monomials as in the construction of $A$, we obtain for all $j \in [|L|]$ that $f_j(\tau(x_1),\ldots,\tau(x_n)) = \vect{a}_j\vect{t}$, the inner product of $\vect{a}_j$ and $\vect{t}$. It follows that  $\vect{a}_j\vect{t} = 0$ for all $j \in [m]$, since satisfying $L'$ implies $f_j(\tau(x_1),\ldots,\tau(x_n)) = 0$. Now observe that
\[ f_i(\vect{x}) = \vect{a}_i\vect{t} = \sum_{j=1}^m (\beta_j\vect{a}_j)\vect{t} = \sum_{j=1}^m \beta_j(\vect{a}_j\vect{t}) = \sum_{j=1}^m \beta_j\cdot 0 = 0,\]
which proves the claim.
\end{claimproof}
\begin{claim}
The number of constraints in the resulting kernel is bounded by $n^d+1$.
\end{claim}
\begin{claimproof}
The size of a basis of any matrix over a field equals its rank, which is bounded by the number of columns. As there is a column for each multilinear monomial of degree at most $d$, there are at most~$\sum_{i=0}^d \binom{n}{i}$ constraints in the basis. Now observe that~$\sum_{i=1}^d \binom{n}{i} \leq n^d$. The left side counts nonempty subsets of~$[n]$ of size at most~$d$, each of which can be mapped to a distinct $d$-tuple by repeating an element. Since there are~$n^d$ $d$-tuples, the claim follows.
\end{claimproof}
This concludes the proof of Theorem~\ref{thm:kernel}.
\end{proof}

When each constraint can be encoded in~$\Ohtilde(n)$ bits, for example when each polynomial can be represented as an arithmetic circuit of size~$\Oh(n)$, Theorem~\ref{thm:kernel} gives a kernelization of size~$\Ohtilde(n^{d+1})$. When constraints can be encoded in~$\Ohtilde(1)$ bits, which may occur when constraints have constant arity, we obtain kernels of bitsize~$\Ohtilde(n^d)$. For explicit examples consider the following problem, where optionally a prime $p$ may be chosen.

\defparproblem{\problem{Generalized $d$-Sat (mod $p$)}}
{A set of clauses $\mathcal{C}$ over variables~$V := \{x_1, \ldots, x_n\}$, and for each clause a set $S_i \subset \mathbb{N}_0$ with $|S_i|\leq d$. Each clause is a set of distinct literals of the form~$x_i$ or~$\neg x_i$.}
{The number of variables $n$}
{Does there exist a truth assignment for the variables~$V$ such that the number of satisfied literals in clause $i$ modulo~$p$ lies in $S_i$ for all $i$?}

\begin{corollary}\label{cor:S_SAT}
\problem{Generalized $d$-Sat} and \problem{Generalized $d$-Sat mod $p$} both have a kernel with~$n^{d}+1$ clauses that can be encoded in~$\Oh(n^{d+1} \log n)$ bits.
\end{corollary}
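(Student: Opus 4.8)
The plan is to reduce both problems to \rootCSP over an efficient field and then invoke Theorem~\ref{thm:kernel}. For a clause~$C_i$ with target set~$S_i$, let~$\ell_i(\vect{x})$ be the linear polynomial that counts its satisfied literals, i.e. the sum of~$x_j$ over positive literals~$x_j \in C_i$ plus the sum of~$(1-x_j)$ over negative literals~$\neg x_j \in C_i$. Under any $0/1$-assignment, $\ell_i$ evaluates to the number of satisfied literals of~$C_i$, an integer in~$\{0, \ldots, |C_i|\}$.

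The first step is to turn each clause into a single polynomial equality of degree at most~$d$. For \problem{Generalized $d$-Sat} work over~$\mathbb{Q}$ and set~$f_i(\vect{x}) := \prod_{s \in S_i} (\ell_i(\vect{x}) - s)$; for \problem{Generalized $d$-Sat mod $p$} use the same product, but over the field~$\mathbb{Z}/p\mathbb{Z}$. Since~$\ell_i$ has degree~$1$ and~$|S_i| \leq d$, each~$f_i$ has degree at most~$d$. A $0/1$-assignment~$\tau$ is a root of~$f_i$ over~$\mathbb{Q}$ exactly when~$\ell_i(\tau) \in S_i$, and is a root of~$f_i$ over~$\mathbb{Z}/p\mathbb{Z}$ exactly when~$\ell_i(\tau) \equiv_p s$ for some~$s \in S_i$ (in the modular case one may harmlessly replace~$S_i$ by~$\{s \bmod p : s \in S_i\}$ to match set membership literally). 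Hence the list~$L = \{(f_i(\vect{x}) = 0)\}_i$ together with~$V$ is an instance of \rootCSP over the efficient field~$\mathbb{Q}$ (resp.~$\mathbb{Z}/p\mathbb{Z}$) whose set of satisfying $0/1$-assignments is exactly that of the original instance.

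Next I would apply Theorem~\ref{thm:kernel} to~$(L,V)$, obtaining an equivalent instance~$(L',V)$ with~$|L'| \leq n^d + 1$ and~$L' \subseteq L$. Because~$L' \subseteq L$, every surviving equality equals~$f_i(\vect{x}) = 0$ for some original clause, so the output can be the corresponding subcollection of clauses; this is again an instance of the same problem, equivalent to the input. For the size bound, before outputting one discards from each~$S_i$ all values exceeding~$|C_i| \leq n$ (these can never be attained by~$\ell_i$, so their factors are irrelevant), after which each surviving clause is described by at most~$\Oh(n)$ literals---each a variable index plus a sign, costing~$\Oh(\log n)$ bits---and by a set~$S_i$ of at most~$d = \Oh(1)$ numbers bounded by~$n$ (bounded by the constant~$p$ in the modular case), costing~$\Oh(\log n)$ bits. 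Thus each clause takes~$\Oh(n \log n)$ bits, and the whole kernel of~$n^d + 1$ clauses fits in~$\Oh(n^{d+1} \log n)$ bits.

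The only point needing care is verifying that the single product polynomial~$f_i$, of degree~$\leq d$, faithfully encodes the ``count lies in~$S_i$'' constraint over the chosen field, and that translating the reduced polynomial instance back into a clause instance is legitimate---which it is, precisely because Theorem~\ref{thm:kernel} guarantees~$L' \subseteq L$. The remaining size estimate is a routine accounting of clause descriptions and presents no real obstacle.
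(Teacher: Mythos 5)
Your proposal is correct and follows essentially the same route as the paper: encode each clause's satisfied-literal count by a degree-$1$ polynomial, compose it with the degree-$\leq d$ univariate polynomial whose root set is $S_i$ (your explicit product $\prod_{s\in S_i}(\ell_i(\vect{x})-s)$ is exactly that polynomial), apply Theorem~\ref{thm:kernel} over $\mathbb{Q}$ or $\mathbb{Z}/p\mathbb{Z}$, and use $L'\subseteq L$ to return a clause subinstance with the same $\Oh(n^{d+1}\log n)$ size accounting. No gaps.
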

\begin{proof}
To reduce the number of clauses using Theorem~\ref{thm:kernel}, we only have to provide a polynomial of degree at most~$d$ to represent each constraint. Consider a clause involving~$k$ variables~$x_{i_1}, \ldots, x_{i_k}$, with set $S_\ell$. Let~$t_j = x_{i_j}$ if variable~$x_{i_j}$ occurs positively in the clause, and let~$t_j = (1-x_{i_j})$ if the variable occurs negatively. Then the number of satisfied literals in the clause is given by the degree-1 polynomial
$f(x_{i_1},\ldots,x_{i_k}) := \sum_{i=1}^k t_i.$
Let $F(x)$ be a polynomial with root set $S_\ell$ (mod $p$) of degree at most $|S_\ell|$. We obtain~$F(f(\vect{x})) \equiv_p 0$ if and only if~$\vect{x}$ satisfies the clause. Note that the degree of $F(f(\vect{x}))$ is at most $|S_\ell| \leq d$.

Applying Theorem~\ref{thm:kernel} to the resulting instance of \rootCSP identifies a subset of at most~$n^d+1$ constraints which preserve the answer to the \textsc{Sat} problem. Each clause contains at most~$2n$ literals, which can be encoded in~$\Oh(\log n)$ bits each. Additionally, for each clause we need to store the set~$S_\ell$ of at most~$d$ integers, which have value at most~$2n$ in relevant inputs. As~$d$ is a constant, the instance can be encoded in~$\Oh(n^{d+1} \log n)$ bits.
\end{proof}

Corollary~\ref{cor:S_SAT} yields a new way to get a nontrivial compression for \dNAESAT, which is conceptually simpler than the existing approach which requires an unintuitive lemma by Lov\'asz~\cite{Lovasz76}. The new approach gives the same size bound as given earlier~\cite{sparsificationJansenP15}.

\begin{corollary}
\dNAESAT has a kernel with $n^{d-1}+1$ clauses that can be encoded in~$\Oh(n^{d-1}\log{n})$ bits.
\end{corollary}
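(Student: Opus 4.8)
The plan is to express \dNAESAT\ as an instance of \rootCSP\ over the rationals with polynomials of degree $d-1$, and then invoke Corollary~\ref{cor:S_SAT} (or directly Theorem~\ref{thm:kernel}) to obtain the claimed bound. First I would recall that a \textsc{Not-All-Equal} constraint on a clause $C$ with literals $\ell_1, \ldots, \ell_k$ (where $k \leq d$) is satisfied by a $0/1$-assignment exactly when the number of satisfied literals is neither $0$ nor $k$; equivalently, writing $t_j = x_{i_j}$ or $t_j = (1-x_{i_j})$ according to the polarity of $\ell_j$, the sum $s := \sum_{j=1}^k t_j$ must avoid the two forbidden values $0$ and $k$. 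The key observation is that we do \emph{not} need a polynomial vanishing on the allowed values $\{1, \ldots, k-1\}$; instead we want a polynomial that vanishes precisely on the \emph{satisfying} side, so we should think of it the other way: a clause of length $k$ is satisfied iff $s \in \{1, \ldots, k-1\}$, which is $k-1$ values, so the natural univariate polynomial $F_C(x) := \prod_{a=1}^{k-1}(x-a)$ has degree $k-1 \leq d-1$. But this vanishes on the \emph{good} assignments, which is backwards for a root-CSP formulation.

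The fix, which is the one genuine step of content, is to use the complementary product: a clause of length $k$ is \emph{violated} iff $s \in \{0, k\}$, so define $g_C(x) := x \cdot (x - k)$, a degree-$2$ polynomial whose roots are exactly the two bad values. Hmm — but that gives degree $2$, not $d-1$, which would only reduce to $n^2$ clauses rather than $n^{d-1}$. So this is not the route either. The correct approach mirrors the earlier corollaries precisely: we need a single polynomial identity, in the variables themselves, of degree $d-1$ that is zero exactly on NAE-satisfying assignments. The standard trick (and the one underlying the Lov\'asz-based argument referenced in the excerpt) is that over the rationals the clause $C$ of length exactly $d$ is NAE-satisfied iff $\prod_{j=1}^{d} t_j = 0$ \emph{and} $\prod_{j=1}^{d}(1 - t_j) = 0$ — but each of these products has degree $d$, not $d-1$. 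To shave one degree, observe that $\prod_{j=1}^{d} t_j = 0$ is implied by $\prod_{j=1}^{d-1} t_j \cdot t_d$, and one instead writes the constraint using the polynomial $p_C(\vect x) := \bigl(\prod_{j=1}^{d} t_j\bigr) + \bigl(\prod_{j=1}^{d}(1-t_j)\bigr)$, then argues that on $0/1$-points this equals $0$ iff NAE is satisfied, and that after multilinear reduction and the linear-algebraic compression of Theorem~\ref{thm:kernel} one column per monomial of degree $\leq d-1$ suffices because the two top-degree monomials $\prod t_j$ and $\prod(1-t_j)$, when expanded, have leading terms that cancel in the relevant combinations. I would work out this cancellation carefully — this is where the degree genuinely drops from $d$ to $d-1$.

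Concretely, then, the steps are: (1) for each clause, pad it to length exactly $d$ if shorter (adding a fresh dummy variable fixed appropriately, or handling short clauses directly with a lower-degree polynomial); (2) write the NAE constraint as the vanishing of $p_C(\vect x) = \prod_{j} t_j + \prod_j (1-t_j)$; (3) verify on $0/1$-assignments that $p_C(\vect x) = 0$ iff the clause is NAE-satisfied, using that each product of $t_j$'s is $0/1$-valued and the two products cannot both be $1$; (4) expand $\prod_j(1-t_j) = 1 - \sum_j t_j + \sum_{j<j'} t_j t_{j'} - \cdots \pm \prod_j t_j$, so that in $p_C$ the two degree-$d$ monomials $\prod_j t_j$ appear with coefficients $+1$ and $\pm 1$; when $d$ is odd they cancel, leaving degree $d-1$, and when $d$ is even one instead subtracts the two products, $p_C(\vect x) := \prod_j t_j - \prod_j(1-t_j)$, to kill the top-degree term (adjusting the zero-set check accordingly, or more cleanly: take whichever sign makes the leading terms cancel, and re-verify the $0/1$-behavior); (5) feed the resulting list of degree-$(d-1)$ polynomial equalities to Theorem~\ref{thm:kernel}, obtaining at most $n^{d-1}+1$ surviving constraints, each a clause of $\leq d = O(1)$ literals, hence encodable in $O(\log n)$ bits, for a total of $O(n^{d-1}\log n)$ bits.

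The main obstacle I anticipate is step (4)–(5): getting the degree to come out at exactly $d-1$ rather than $d$ requires the right choice of the sign (plus or minus) in $p_C$, a correct re-verification that the zero-set is still exactly the NAE-satisfying assignments after that change (the $\{0,1\}$-valuedness of the two products is what saves this, but one must check both that satisfying assignments give $0$ and that violating ones give a nonzero value — in particular that $1 - 1 = 0$ does not accidentally occur for a violating assignment, which is precisely where "not all equal" enters), and ensuring the padding of short clauses does not introduce a genuinely degree-$d$ term. Once the degree-$(d-1)$ representation is pinned down, the kernelization bound is an immediate application of the already-proven Theorem~\ref{thm:kernel} exactly as in the proof of Corollary~\ref{cor:S_SAT}.
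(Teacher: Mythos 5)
Your final plan can be made to work, but you talked yourself out of the paper's (much shorter) proof along the way, and the reason you gave for abandoning it is wrong. In \rootCSP a constraint has the form $f(\vect{x})=0$ and must be \emph{satisfied} by the assignment, so what is needed is precisely a polynomial that vanishes exactly on the NAE-satisfying assignments. Your first construction, $F_C(s)=\prod_{a=1}^{k-1}(s-a)$ applied to the literal count $s=\sum_{j=1}^k t_j$, is therefore not ``backwards'' --- it is exactly right, has degree $k-1\le d-1$, and is the paper's proof: a clause of size $k\le d$ is NAE-satisfied iff the number of satisfied literals lies in $S=\{1,\dots,k-1\}$ with $|S|\le d-1$, so Corollary~\ref{cor:S_SAT} (equivalently, Theorem~\ref{thm:kernel} applied to the degree-$(d-1)$ polynomials $F(f(\vect{x}))$) immediately gives $n^{d-1}+1$ surviving clauses, each on $d\in\Oh(1)$ variables, hence $\Oh(n^{d-1}\log n)$ bits. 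By contrast, the construction you briefly entertained and rejected, $g_C(s)=s(s-k)$, is the genuinely backwards one: it vanishes on the \emph{violating} counts, so imposing $g_C(s)=0$ would force the clause to be falsified; its degree was not the problem.

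Your replacement construction $p_C=\prod_j t_j\pm\prod_j(1-t_j)$ is a genuinely different (if heavier) route and does go through: both sign choices have the correct zero set on $\{0,1\}$-points, since the two products are $0/1$-valued and cannot both equal $1$, and the top-degree monomial $\prod_j x_{i_j}$ appears with coefficients $(-1)^{\#\mathrm{neg}}$ in the first product and $(-1)^{\#\mathrm{pos}}$ in the second, so it cancels in the sum when the clause length $k$ is odd and in the difference when $k$ is even; note that the relevant parity is that of the individual clause length, not of $d$. Two cautions if you pursue this: padding a short clause with a dummy literal fixed to a constant does \emph{not} preserve NAE semantics (a false dummy turns the constraint into ``at least one original literal is true''), so you should handle each clause of length $k$ directly with its own degree-$(k-1)$ polynomial, as you suggest in passing; and after fixing the sign per clause the zero-set re-check is immediate from the observation above. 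With those details pinned down, feeding the resulting degree-$(d-1)$ system to Theorem~\ref{thm:kernel} yields the claimed bound exactly as in your step (5) --- but the composition-with-$F_C$ formulation you discarded gets there in two lines.
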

\begin{proof}
A clause of size $k\leq d$ is not-all-equal satisfied if and only if the number of satisfied literals lies in $S:=\{1,\ldots,k-1\}$. Using Corollary \ref{cor:S_SAT} we can reduce the number of clauses to $n^{d-1}+1$. Each clause has $d \in O(1)$ variables and can thus be encoded in $\Oh(\log n)$ bits.
\end{proof}

\begin{note}

\subsection{Polynomial root CSP modulo a non-prime}
We can generalize Theorem \ref{thm:kernel} to also obtain a sparsification for \rootCSP over the integers modulo a non-prime. We give two different approaches for sparsifying such problems. The first approach gives the smallest number of constraints after reduction, but has the disadvantage that the resulting list of constraints is not necessarily a subset of the original list of constraints. The second approach results in a larger (but still bounded) number of constraints, which form a subset of the original constraints. We first give some linear-algebraic background.

Consider an instance~$(L,V)$ of \rootCSP over a ring~$R$ with~$n$ variables and~$m$ constraints. We consider the matrix~$A$ over~$R$ with~$m$ rows and~$\sum_{i=0}^d \binom{n}{d}$ columns, in which the $i$th row contains the coefficients of the multilinear monomials in the polynomial for the $i$th constraint. The satisfiability of the constraints by a $0/1$-assignment then comes down to the following question: is there a $0/1$-assignment to the variables, such that the vector~$\vect{x}$ consisting of all multilinear monomial evaluations of the variables~$x_1, \ldots, x_n$ satisfies~$A\vect{x} = \vect{0}$ over~$R$? The key insight for the sparsification is that any matrix~$B$ for which the row-space over~$R$ is equal to that of~$A$, satisfies~$A\vect{x} = \vect{0} \Leftrightarrow B\vect{x} = \vect{0}$. (Recall that the row-space over~$R$ consists of the vectors that can be written as a linear combination of the rows, with coefficients from~$R$.) Hence we can obtain an encoding of an equivalent problem by selecting a matrix~$B$ whose row-space equals that of~$A$. When working over a field we can just extract a basis for the row-space to obtain~$B$, which is exactly what happened in Theorem~\ref{thm:kernel}. When working over the integers modulo~$m$ for composite~$m$, the existence of a basis is not guaranteed. For our first approach we therefore use the Howell normal form of the matrix, which is a canonical matrix form which has the same row-space.

\begin{theorem}\label{thm:kernel_modm}
There is a polynomial-time algorithm that, given an instance~$(L,V)$ of \rootCSP over $\mathbb{Z}/m\mathbb{Z}$ for some integer $m \geq 2$, outputs an equivalent instance~$(L',V)$ of \rootCSP over $\mathbb{Z}/m\mathbb{Z}$ with at most $n^d+1$ constraints.
\end{theorem}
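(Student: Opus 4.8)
The plan is to mimic the proof of Theorem~\ref{thm:kernel}, replacing the step ``extract a basis of the row space'' — which has no analogue over $\mathbb{Z}/m\mathbb{Z}$ when $m$ is composite, as the row-space need not be free — by ``compute the Howell normal form''. The Howell normal form is a canonical row-echelon-type form for matrices over $\mathbb{Z}/m\mathbb{Z}$ (more generally, over principal ideal rings) which spans the same submodule of rows as the original matrix and has at most as many nonzero rows as the matrix has columns. It can be computed in polynomial time in the matrix dimensions when $m$ is a constant.

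First I would, exactly as in Theorem~\ref{thm:kernel}, use the identity $x_i^c = x_i$ on $0/1$-assignments to rewrite each polynomial in $L$ as a multilinear polynomial of degree at most $d$, and build the coefficient matrix $A$ over $\mathbb{Z}/m\mathbb{Z}$ with one row per constraint and one column per multilinear monomial of degree at most $d$ over $x_1,\dots,x_n$. For an assignment $\tau\colon V\to\{0,1\}$, let $\vect{t}$ be the column vector whose entries are the evaluations under $\tau$ of these monomials. As in Theorem~\ref{thm:kernel}, the $i$-th entry of $A\vect{t}$ equals the value of the $i$-th polynomial under $\tau$, so $\tau$ satisfies all constraints of $L$ if and only if $A\vect{t} = \vect{0}$ over $\mathbb{Z}/m\mathbb{Z}$. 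Next I would compute the Howell normal form $B$ of $A$ and output the instance $(L',V)$ whose constraints are the polynomial equalities $g(\vect{x}) = 0$ read off from the nonzero rows of $B$ using the same monomial-to-column correspondence. Since $B$ has at most as many nonzero rows as $A$ has columns, the number of constraints in $L'$ is at most $\sum_{i=0}^{d}\binom{n}{i}\le n^d+1$ by the counting argument already used in Theorem~\ref{thm:kernel}.

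For correctness, the key point is that the rows of $A$ and the rows of $B$ generate the same $\mathbb{Z}/m\mathbb{Z}$-submodule of $(\mathbb{Z}/m\mathbb{Z})^{\#\mathrm{cols}}$. Hence for \emph{every} column vector $\vect{y}$ we have $A\vect{y}=\vect{0}\iff B\vect{y}=\vect{0}$: each side expresses that $\vect{y}$ is ``orthogonal'' to a generating set of the common module, so one condition implies the other. Applying this with $\vect{y}=\vect{t}$ for an arbitrary assignment $\tau$ shows that $\tau$ satisfies $L$ if and only if $\tau$ satisfies $L'$, so the two instances are equivalent; the parameter $n$ is unchanged. Note that here, in contrast to Theorem~\ref{thm:kernel}, we do not obtain $L'\subseteq L$, because the rows of $B$ are genuine $\mathbb{Z}/m\mathbb{Z}$-linear combinations of rows of $A$ rather than original rows.

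The main obstacle is supplying the linear-algebra infrastructure over $\mathbb{Z}/m\mathbb{Z}$: one must quote precisely the facts that the Howell normal form exists, preserves the row-submodule, has at most $\#\mathrm{cols}$ nonzero rows, and is polynomial-time computable (these are classical — Howell's original construction, and the algorithmic treatments by Storjohann and by Storjohann--Mulders). Once these statements are in place the remainder is routine bookkeeping. One minor point to handle explicitly is that all size and running-time bounds treat $m$ and $d$ as constants, so that arithmetic in $\mathbb{Z}/m\mathbb{Z}$ and the Howell computation run in time polynomial in $|L|+n$.
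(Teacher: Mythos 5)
Your proposal is correct and follows essentially the same route as the paper: the paper also builds the multilinear coefficient matrix over $\mathbb{Z}/m\mathbb{Z}$, computes its Howell normal form (citing Storjohann--Mulders for polynomial-time computability), keeps the nonzero rows as the new constraints, argues equivalence from the fact that the Howell form generates the same row-module (the paper phrases this via $A=PH$ with $P$ invertible), and bounds the number of nonzero rows by the number of columns, i.e.\ $\sum_{i=0}^{d}\binom{n}{i}\le n^d+1$. Your observation that $L'\subseteq L$ is lost here matches the paper's remark as well; no gaps.
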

\begin{proof}
In a similar way as in Theorem \ref{thm:kernel}, we use linear algebra to find redundant constraints. Let a list $L$ of polynomial equalities over variable set $V$ be given. We again assume that the monomials in each of the polynomials are multilinear. Construct a matrix $A$ with $|L|$ rows and a column for every multilinear monomial of degree at most $d$ over variables from $V$. Let position $a_{i,j}$ in $A$ contain the coefficient of the monomial
corresponding to column $j$ in the polynomial equality corresponding to row $i$. %By this definition, $L$ has a satisfying assignment if and only if there exists a vector $\vect{x}$ such that the $j$'th position of $\vect{x}$ corresponds to the $0/1$ value given to the  $j$'th monomial and
%\[A\vect{x} = \vect{0}.\]

We now compute the Howell form $H$ of matrix $A$, which was first defined by Howell \cite{Howell86}, such that $A = PH$, where $P$ is invertible over $\mathbb{Z}/m\mathbb{Z}$. This can be done in polynomial time, see for example \cite[\S 3]{StorjohannM98}.
% It is easy to see that for all $\vect{x}$ \[H\vect{x}=\vect{0} \Leftrightarrow A\vect{x}=\vect{0}. \]
%\todo[Use $\equiv_m$? Does that make sense?]
Let $H'$ be the matrix $H$ with all zero rows removed and let $L'$ contain the polynomial equations given by the rows of $H'$. We now prove the correctness of this procedure.%It is easy to see that $H\vect{x}=\vect{0} \Leftrightarrow H'\vect{x}=\vect{0}$ and thus any assignment to $V$ satisfies all equations in $L'$ if and only if it satisfies all equations in $L$.

\begin{claim}
An assignment $\tau:V\rightarrow \{0,1\}$ of the variables in $V$ satisfies the equalities in $L'$, if and only if it satisfies the equalities in $L$.
\end{claim}
\begin{claimproof}
$(\Rightarrow)$ Suppose assignment $\tau:V\rightarrow \{0,1\}$ satisfies all equalities in $L'$. Consider the vector $\vect{x}$ with the assignment given to the $j$'th monomial on position $j$. Then
\[H'\vect{x} = \vect{0} \Leftrightarrow H\vect{x} = \vect{0}
\Rightarrow  PH\vect{x} = P\vect{0}\Rightarrow A \vect{x} = \vect{0},\]
which implies that $\tau$ is also a satisfying assignment for $L'$.

$(\Leftarrow)$  Suppose assignment $\tau:V\rightarrow \{0,1\}$ satisfies all equalities in $L$. Consider the vector $\vect{x}$ with the assignment given to the $j$'th monomial on position $j$. Then
\begin{align*}
A\vect{x} = \vect{0} \Rightarrow
PH\vect{x} = \vect{0} \Rightarrow
P^{-1}PH\vect{x} = P^{-1}\vect{0} \Rightarrow H\vect{x} = \vect{0} \Rightarrow H'\vect{x} = \vect{0},
\end{align*}
which implies that $\tau$ is also a satisfying assignment for $L'$.
\end{claimproof}

\begin{claim}
The number of constraints in the resulting kernel $L'$ is bounded by $n^d+1$.
\end{claim}
\begin{claimproof}
The number of constraints in $L'$ equals the number of rows in $H'$. We will use the following properties of a matrix in Howell form \cite[\S 3]{StorjohannM98} to give an upper bound on the number of non-zero rows in $H$.
\begin{itemize}
\item Let $r$ be the number of non-zero rows of $H$. Then the first $r$ rows of $H$ are non-zero.
\item %\todo[rephrased]
For $1 \leq i \leq r$ let the first non-zero entry in row $i$ of $H$ be in column $j_i$. Then $j_1 < j_2 < \ldots < j_r$.
\end{itemize}
By these two properties any matrix in Howell form has at most as many non-zero rows as it has columns. Thereby there are at most $n^d + 1$ polynomial equations in $L'$.
\end{claimproof}
This concludes the proof of Theorem \ref{thm:kernel_modm}.
\end{proof}

Compared to Theorem~\ref{thm:kernel}, the sparsification of Theorem~\ref{thm:kernel_modm} has the disadvantage that it may output polynomials (representing constraints) that were not part of the input. If the input polynomials had an efficient encoding, for example as an arithmetic circuit, this property may be lost in the transformation. In general, to represent an output polynomial one may have to store all its~$\Oh(n^d)$ coefficients individually. We present an alternative approach that alleviates this issue by ensuring that the set of constraints in the output instance is a subset of the original constraints. However, it comes at the expense of increasing the number of constraints. The following lemma captures the key linear-algebraic insight behind the approach.

\begin{lemma} \label{lemma:spanningset:modm}
Let~$m \geq 2$ be an integer with~$r$ distinct prime divisors. For any~$S \subseteq \mathbb{Z}/m\mathbb{Z}$ there exists a subset~$S' \subseteq \mathbb{Z}/m\mathbb{Z}$ of size at most~$r$ such that any element in~$S$ can be written as a linear combination over~$\mathbb{Z}/m\mathbb{Z}$ of elements in~$S'$. For any fixed~$m$, one can compute~$S'$ and expressions of all~$a \in S$ as linear combinations of~$S'$ in polynomial time.
\end{lemma}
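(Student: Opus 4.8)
The plan is to exploit that $\mathbb{Z}/m\mathbb{Z}$ is a principal ideal ring. Concretely, for any $T \subseteq \mathbb{Z}/m\mathbb{Z}$, if $\widetilde T$ denotes a set of integer lifts of its elements, then the set of all $\mathbb{Z}/m\mathbb{Z}$-linear combinations of $T$ equals the principal ideal $(\gcd(\widetilde T \cup \{m\}))$: this is Bézout's identity together with the fact that $m \equiv_m 0$, so that multiples of $m$ may be added to any lift without changing the generated ideal. Writing $g := \gcd(\widetilde S \cup \{m\})$, we have $g \mid m$ and $g \mid \tilde s$ for every $s \in S$, and finding a small subset of $S$ that spans $S$ in the sense of the lemma reduces to finding a small $S' \subseteq S$ with $\gcd(\widetilde{S'} \cup \{m\}) = g$.

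For the selection, write $m = \prod_{i=1}^{r} p_i^{e_i}$ and $g = \prod_{i=1}^{r} p_i^{f_i}$ with $0 \le f_i \le e_i$ (possible since $g \mid m$), and let $v_{p}(x)$ denote the exponent of the prime $p$ in $x$. For each index $i$ with $f_i < e_i$ we have $\min_{s \in S} v_{p_i}(\tilde s) = f_i$, so there is an $s_i \in S$ with $v_{p_i}(\tilde s_i) = f_i$; fix one such $s_i$ and set $S' := \{\, s_i : f_i < e_i \,\} \subseteq S$, which has at most $r$ elements. I would then verify the key identity $\gcd(\widetilde{S'} \cup \{m\}) = g$: the left-hand side is divisible by $g$ because $S' \subseteq S$, and it divides $g$ because its $p_i$-adic valuation is at most $f_i$ for every $i$ — the witness $s_i$ forces this when $f_i < e_i$, and $m$ itself forces it when $f_i = e_i$. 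Hence $S'$ and $S$ generate the same ideal, so every $a \in S$ is a $\mathbb{Z}/m\mathbb{Z}$-linear combination of $S'$.

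To make the combinations and the running time explicit: for $a \in S$, since $g \mid \tilde a$ write $\tilde a = b g$, run the extended Euclidean algorithm on $\widetilde{S'} \cup \{m\}$ to obtain integers $\gamma_s$ with $\sum_{s \in S'} \gamma_s \tilde s \equiv_m g$, and return the coefficients $(b\gamma_s \bmod m)_{s \in S'}$, which satisfy $\sum_{s \in S'} (b\gamma_s)\, s \equiv_m bg \equiv_m \tilde a \equiv_m a$. Because $m$ is fixed, $S$ has at most $m = \Oh(1)$ elements and all integers involved are bounded by $m$, so the whole procedure — indeed even a brute-force search over all subsets of $S$ of size at most $r$ — runs in polynomial (in fact constant) time.

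The step I expect to require the most care is this selection together with its gcd identity: one has to see that whenever $f_i < e_i$ the valuation $f_i$ is genuinely attained by an element of $S$ rather than only by $m$, and then that discarding everything in $S$ except these at most $r$ witnesses leaves the generated ideal unchanged. The remaining ingredients — the principal-ideal description of the span, Bézout, and converting membership in $(g)$ into an explicit coefficient vector — are routine.
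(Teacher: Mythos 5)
Your proposal is correct and follows essentially the same route as the paper's proof: pick one witness per prime divisor of $m$ based on $p$-adic valuations (the paper minimizes $\nu_{p_i}$ over $S$, you pick an element attaining the minimal valuation $f_i$ whenever $f_i<e_i$), show that $\gcd(m,S')$ divides every element of $S$, and then obtain the explicit coefficients modulo $m$ via B\'ezout's identity and the extended Euclidean algorithm. The ideal-theoretic phrasing and the constant-time observation for fixed $m$ are only cosmetic differences from the paper's argument.
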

\begin{proof}
Let~$p_1, \ldots, p_r$ be the distinct prime divisors of~$m$, which can be found in constant time for fixed~$m$. For a prime~$p$ and positive integer~$a$, define:
\begin{align*}
\mu_p(a) &:= \max \{ k \in \mathbb{N}_0 \mid p^k \mbox{ divides $a$}\}. \\
\nu_p(a) &:= \max \{ k \in \mathbb{N}_0 \mid p^k \mbox{ divides both $a$ and $m$}\}.
\end{align*}
Observe that~$\nu_p(a) \leq \mu_p(a)$ for all~$a$. For any~$a$ that divides~$m$ we have~$\nu_p(a) = \mu_p(a)$.

Using these notions we construct the set~$S'$ as follows. For each~$i \in [r]$ select an element~$a \in S$ that minimizes~$\nu_{p_i}(a)$ and add this element to~$S'$. Since~$m$ is constant this can be done in polynomial time. The resulting set~$S'$ has size at most~$r$. We prove it spans~$S$ using the following claim.

\begin{claim} \label{claim:gcd:divides:set}
Let~$d$ be the largest integer that simultaneous divides~$m$ and all elements of~$S'$. For any~$b \in S \setminus S'$, the integer~$d$ divides~$b$. Equivalently:~$\gcd(m,S') \mid b$.
\end{claim}
\begin{claimproof}
If~$d=1$ then the claim is trivial. Suppose all prime factors~$p$ of~$d$ are also prime factors of~$b$ with~$\mu_p(d) \leq \mu_p(b)$. Then the factorization of~$b$ can be written as the factorization of~$d$ multiplied by remaining factors. Hence~$d \mid b$, and the claim follows.

Now suppose there is a prime factor~$p$ of~$d$ with~$\mu_p(d) > \mu_p(b)$. Since~$p$ is a factor of~$d = \gcd(m,S')$, we know~$p$ is a factor of~$m$ and was therefore considered during the construction of~$S'$. Since~$d$ divides~$m$ we know that~$\mu_p(d) = \nu_p(d)$. Combined with the fact that~$\nu_p(b) \leq \mu_p(b)$ it follows that~$\nu_p(b) \leq \mu_p(b) < \mu_p(d) = \nu_p(d)$. Since~$d$ divides all members of~$S'$, it follows that~$\nu_p(b) < \nu_p(d) \leq \nu_p(a)$ for all~$a \in S'$. But then~$b$ should have been added to~$S'$ during its construction; a contradiction.
\end{claimproof}

To conclude the proof, we use Claim~\ref{claim:gcd:divides:set} to show that any~$b \in S \setminus S'$ can efficiently be written as a linear combination of~$S'$ over~$\mathbb{Z}/m\mathbb{Z}$. By B\'{e}zout's identity, the greatest common divisor of a set of integers can be written as an integer linear combination of the elements in that set. Such a combination can efficiently be found using the extended Euclidean algorithm. Hence there are integer coefficients~$\alpha_{\cdot}$ such that $d = \gcd(m,S') = \alpha_m \cdot m  + \sum _{a \in S'} \alpha_a \cdot a$. Let~$b' := b \div \gcd(m,S')$, which is integral by Claim~\ref{claim:gcd:divides:set}. But then
\[b = b' \cdot \gcd(m,S') = (b' \cdot \alpha_m) m + \sum _{a \in S'} (b' \cdot \alpha_a) a,\]
which implies that
\[b \equiv_m \sum _{a \in S'} (b' \cdot \alpha_a) a \equiv_m \sum _{a \in S'} ((b' \cdot \alpha_a) \bmod m) a\]
is a linear combination over~$\mathbb{Z}/m\mathbb{Z}$ resulting in~$b$.
\end{proof}

The following lemma follows from a procedure similar to Gaussian elimination, using Lemma~\ref{lemma:spanningset:modm} as a subroutine.

\begin{lemma} \label{lemma:basis:modm}
Let~$m \geq 2$ be an integer with~$r$ distinct prime divisors. For any matrix~$A$ over $\mathbb{Z}/m\mathbb{Z}$ in which~$k \geq 1$ columns contain a nonzero element, there is a subset~$B$ of~$r \cdot k$ rows of~$A$ that spans the row-space of~$A$. For any fixed~$m$, such a subset~$B$ can be found in polynomial time.
\end{lemma}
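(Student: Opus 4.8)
The plan is to mimic Gaussian elimination over the composite ring $\mathbb{Z}/m\mathbb{Z}$, but whenever a single pivot row would be chosen in the field case, instead select a small \emph{set} of rows — at most $r$ of them — whose span accounts for all the entries that need to be eliminated in the current pivot column. First I would process the columns of $A$ from left to right, maintaining a growing subset $B$ of rows and an invariant: after processing $j$ columns, the rows in $B$ selected so far span the same subspace (restricted to the first $j$ columns) as all of $A$. When I reach a column $c$ that contains a nonzero entry among the not-yet-accounted-for rows, I look at the set $S \subseteq \mathbb{Z}/m\mathbb{Z}$ of values appearing in column $c$ across those rows, apply Lemma~\ref{lemma:spanningset:modm} to obtain a spanning subset $S' \subseteq S$ of size at most $r$, pick one row of $A$ realizing each element of $S'$ in column $c$, and add these (at most $r$) rows to $B$. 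Then, using the linear-combination expressions guaranteed by Lemma~\ref{lemma:spanningset:modm}, I subtract appropriate multiples of these chosen rows from every other live row to zero out column $c$ in those rows, exactly as in Gaussian elimination. Each such step consumes one column and adds at most $r$ rows to $B$, so after processing all columns the set $B$ has at most $r \cdot k$ rows, where $k$ is the number of columns containing a nonzero entry.

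The key steps, in order: (1) Reduce to the claim that the chosen rows plus the transformed residual rows have the same row-space as $A$ — this is immediate since all our operations (adding a scaled row to another) are invertible elementary row operations over $\mathbb{Z}/m\mathbb{Z}$, hence preserve the row-space. (2) Argue that after zeroing column $c$ in all non-chosen live rows, those rows are genuinely "dead" with respect to column $c$ forever after, so we never revisit a processed column; this needs the observation that later elimination steps only add multiples of rows that already have zero in column $c$ (an inductive bookkeeping point). (3) Conclude that when the process terminates, every row of $A$ lies in the span of $B$: a row not in $B$ has been reduced to zero through a sequence of subtractions of $B$-rows, so it equals a linear combination of $B$-rows. (4) Count: at most $r$ rows added per nonzero column, $k$ nonzero columns, giving $|B| \le rk$. (5) Polynomial-time bound: for fixed $m$, each invocation of Lemma~\ref{lemma:spanningset:modm} runs in polynomial time, and there are at most $k \le (\text{number of columns})$ invocations, each followed by $O(|L|)$ row updates of length equal to the number of columns, so the total is polynomial in the size of $A$.

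The main obstacle I anticipate is the bookkeeping in step (2): over a field, once you pivot on a column and eliminate it elsewhere, those rows have a genuine zero there and standard Gaussian elimination never reintroduces a nonzero. Over $\mathbb{Z}/m\mathbb{Z}$ the same is true for the \emph{specific} operations I perform (I only ever add $\mathbb{Z}/m\mathbb{Z}$-multiples of rows that are zero in already-processed columns), but I need to be careful that the rows I pick to add to $B$ at step $c$ are original rows of $A$ (as the statement demands $B \subseteq$ rows of $A$), not the transformed residual rows — so I should phrase the invariant in terms of: the original rows corresponding to $B$, together with the current residual rows, span the row-space, and separately track that the residual of any row is expressible via $B$. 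A clean way to handle this is to run the elimination on an augmented system that records, for each current residual row, its expression as an integer-coefficient combination of original rows; then "row $i$ has become zero" directly yields "original row $i$ equals the recorded combination of $B$-rows." I would also double-check the edge case $k = 0$ is excluded by hypothesis ($k \ge 1$) and that a column all of whose live entries are zero is simply skipped without adding anything to $B$, keeping the count at $rk$ rather than $r \cdot (\text{total columns})$.
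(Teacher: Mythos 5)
Your proposal is correct and follows essentially the same route as the paper's proof: processing the nonzero columns one at a time (the paper phrases it as induction on~$k$ rather than an explicit loop invariant), invoking Lemma~\ref{lemma:spanningset:modm} to pick at most~$r$ rows per column, eliminating that column, and counting~$r\cdot k$ rows in total. Your extra care about mapping transformed residual rows back to the corresponding \emph{original} rows of~$A$ is exactly the step the paper handles with ``let~$B_2$ be the rows of~$A$ corresponding to rows~$B'$ in~$A'$,'' so nothing genuinely different is going on.
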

\begin{proof}
Proof by induction on~$k$. Consider the first column~$c_i$ of~$A$ that contains a nonzero and let~$S$ be the elements appearing in that column. Using Lemma~\ref{lemma:spanningset:modm}, compute a subset~$S' \subseteq S$ of size at most~$r$ that spans~$S$, and find the corresponding linear combinations. For each element~$a \in S'$ select one row with value~$a$ in column~$c_i$ and add it to~$B_1$. If~$c_i$ is the only column containing a nonzero, then it is easy to see that~$B := B_1$ is a valid output for the procedure. Otherwise, since all elements of~$S$ are linear combinations of elements of~$S'$, by subtracting the relevant linear combinations of rows of~$B_1$ from rows in~$A$ we can obtain zeros at all positions in column~$c_i$, without introducing nonzeros in earlier columns. Let~$A'$ be the resulting matrix, which therefore has at most~$k-1$ nonzero columns. Apply induction to find a spanning subset~$B'$ of the rows of~$A'$ of size at most~$r \cdot (k-1)$. Let~$B_2$ be the rows of~$A$ corresponding to rows~$B'$ in~$A'$. Then~$B_1 \cup B_2$ consists of at most~$r + (k-1)r = r k$ rows of~$A$. It is easy to verify that these rows indeed span the row-space of~$A$. The inductive proof directly translates into a polynomial-time recursive algorithm, using the fact that the procedure of Lemma~\ref{lemma:spanningset:modm} provides the required linear combinations.
\end{proof}

\begin{theorem}\label{thm:subset_kernel_modm}
There is a polynomial-time algorithm that, given an instance~$(L,V)$ of \rootCSP over $\mathbb{Z}/m\mathbb{Z}$ for some fixed integer $m \geq 2$ with~$r$ distinct prime divisors, outputs an equivalent instance~$(L',V)$ of \rootCSP over $\mathbb{Z}/m\mathbb{Z}$ with at most $r\cdot(n^d+1)$ constraints such that~$L' \subseteq L$.
\end{theorem}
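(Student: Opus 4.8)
The plan is to mirror the proof of Theorem~\ref{thm:kernel_modm}, but instead of replacing the constraint matrix by its Howell form, to extract a spanning \emph{subset of its rows} using Lemma~\ref{lemma:basis:modm}; this is precisely what keeps the output constraints a subset of the input.

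First I would preprocess exactly as before. Since $(x_i)^c = x_i$ for every $0/1$ value and every $c \geq 1$, we may assume without loss of generality that every polynomial in~$L$ is multilinear. Build the matrix~$A$ over~$\mathbb{Z}/m\mathbb{Z}$ with~$|L|$ rows and one column per multilinear monomial of degree at most~$d$ over~$V$, where entry~$a_{i,j}$ is the coefficient of the $j$th monomial in the $i$th polynomial. By the counting argument from the proof of Theorem~\ref{thm:kernel}, $A$ has at most~$\sum_{i=0}^{d}\binom{n}{i} \leq n^d+1$ columns, hence at most~$n^d+1$ columns containing a nonzero entry. If~$A$ has no nonzero column, then every polynomial in~$L$ is identically zero, the instance is trivially satisfiable, and we output~$L' = \emptyset$. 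Otherwise let~$k \geq 1$ be the number of nonzero columns and invoke Lemma~\ref{lemma:basis:modm}, which yields in polynomial time (for fixed~$m$) a subset~$B$ of at most~$r\cdot k \leq r\cdot(n^d+1)$ rows of~$A$ spanning the row-space of~$A$ over~$\mathbb{Z}/m\mathbb{Z}$. Let~$L' \subseteq L$ consist of the constraints whose rows lie in~$B$; then~$|L'| \leq r\cdot(n^d+1)$ and~$L' \subseteq L$ by construction.

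For correctness I would reuse the linear-algebraic observation stated just before Theorem~\ref{thm:kernel_modm}. Given a $0/1$-assignment~$\tau$, let~$\vect{x}$ be the vector of evaluations of the multilinear monomials under~$\tau$; then~$\tau$ satisfies the $i$th equality of~$L$ if and only if the $i$th row of~$A$ has inner product~$0$ with~$\vect{x}$, and similarly for the rows in~$B$. Since the row-space of~$B$ equals that of~$A$, we get~$A\vect{x} = \vect{0} \Leftrightarrow B\vect{x} = \vect{0}$: the~$(\Rightarrow)$ direction is immediate because the rows of~$B$ are rows of~$A$, and the~$(\Leftarrow)$ direction follows by writing each row of~$A$ as a~$\mathbb{Z}/m\mathbb{Z}$-linear combination of rows of~$B$, exactly as in the claim inside Theorem~\ref{thm:kernel}. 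Hence~$\tau$ satisfies all of~$L$ iff it satisfies all of~$L'$, so the two instances are equivalent. The running time is polynomial: constructing~$A$ takes polynomial time, and Lemma~\ref{lemma:basis:modm} runs in polynomial time for fixed~$m$.

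There is no substantial obstacle here beyond careful bookkeeping: the work has already been done in Lemmas~\ref{lemma:spanningset:modm} and~\ref{lemma:basis:modm}. The only points needing a small amount of attention are the degenerate all-zero matrix (handled separately above, since Lemma~\ref{lemma:basis:modm} assumes~$k\geq 1$), and the remark that a \emph{spanning subset of rows} — rather than a genuine basis, which need not exist over~$\mathbb{Z}/m\mathbb{Z}$ — is exactly what the~$(\Leftarrow)$ direction of the equivalence requires, which is why the approach of Theorem~\ref{thm:kernel} does not transfer verbatim and Lemma~\ref{lemma:basis:modm} is needed.
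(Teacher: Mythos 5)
Your proposal is correct and follows essentially the same route as the paper: build the multilinear coefficient matrix~$A$, apply Lemma~\ref{lemma:basis:modm} to extract a spanning subset~$B$ of at most~$r\cdot(n^d+1)$ rows, keep the corresponding constraints, and argue equivalence via~$A\vect{x}=\vect{0}\Leftrightarrow B\vect{x}=\vect{0}$. Your explicit treatment of the all-zero matrix (where Lemma~\ref{lemma:basis:modm} does not apply because~$k\geq 1$ is required) is a small point the paper leaves implicit, but otherwise the arguments coincide.
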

\begin{proof}
We proceed similarly as in the proof of Theorem~\ref{thm:kernel_modm}. Consider an input~$(L,V)$ of \rootCSP over $\mathbb{Z}/m\mathbb{Z}$ with~$n := |V|$ variables. Let~$A$ be the matrix with~$|L|$ rows and~$\sum _{i=0}^d \binom{n}{i}$ columns, containing the coefficients of the multilinear monomials that form the constraints for each of the~$|L|$ constraint polynomials. A $0/1$-assignment to the variables satisfies all constraints if and only if the vector~$\vect{x}$ of all monomial evaluations satisfies~$A\vect{x} = \vect{0}$. Use Lemma~\ref{lemma:basis:modm} to compute a subset~$B$ of at most~$r \cdot \sum _{i=0}^d \binom{n}{i} \leq r \cdot (n^d + 1)$ rows of~$A$ that span the row-space of~$A$. Let~$L'$ contain the constraints whose corresponding row appears in~$B$ and output the instance~$(L',V)$ as the result of the procedure. Using the guarantee of Lemma~\ref{lemma:basis:modm} this procedure runs in polynomial time for fixed~$m$. Since~$L' \subseteq L$, the instance~$(L',V)$ can be satisfied if~$(L,V)$ can. For the reverse direction, consider a satisfying assignment for~$(L',V)$ and the corresponding vector~$\vect{x}$ of evaluations of multilinear monomials of degree at most~$d$. Then~$B\vect{x} = \vect{0}$ since the assignment satisfies all constraints in~$L'$. As any row in~$A$ can be written as a linear combination of rows in~$B$, it follows that~$A\vect{x} = \vect{0}$, showing that~$(L,V)$ is satisfiable and hence that the output instance is equivalent to the input.
\end{proof}

\end{note}
\subsection{Polynomial non-root CSP} \label{sec:UB:non-root}
In this section we consider \nonrootCSP. In Section~\ref{sec:nonroot:lb} we will show that, over the field of rational numbers, the problem cannot be compressed to size polynomial in~$n$, unless \containment. We therefore consider the field \primefield of integers modulo a prime~$p$.

\begin{theorem}\label{thm:UB:non-root}
There is a polynomial-time algorithm that, given an instance~$(L,V)$ of \nonrootCSP over \primefield, outputs an equivalent instance~$(L',V)$ with at most~$n^{d(p-1)}+1$ constraints such that~$L' \subseteq L$.
\end{theorem}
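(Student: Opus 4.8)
The plan is to reduce \nonrootCSP over \primefield to \rootCSP over the same field by replacing each inequality with an equivalent equality of somewhat higher degree, and then invoke Theorem~\ref{thm:kernel}. The key observation is Fermat's little theorem: for any $a \in \primefield$ we have $a^{p-1} = 1$ when $a \neq 0$, and $a^{p-1} = 0$ when $a = 0$. Hence for a constraint $f(\vect{x}) \neq 0$ and any $0/1$-assignment $\tau$, the value $f(\tau)$ is nonzero precisely when $f(\tau)^{p-1} - 1 = 0$ in \primefield. Thus the inequality $f(\vect{x}) \neq 0$ is satisfied by exactly the same $0/1$-assignments as the equality $g(\vect{x}) := f(\vect{x})^{p-1} - 1 = 0$, and $\deg g \leq (p-1)\deg f \leq d(p-1)$.

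Concretely, given the input instance $(L,V)$ of \nonrootCSP over \primefield, I would first construct in polynomial time the list $\widehat{L}$ of equalities obtained by mapping each inequality $(f(\vect{x}) \neq 0) \in L$ to the equality $(f(\vect{x})^{p-1} - 1 = 0)$; since $p$, and hence $p-1$, is a constant, each $g$ can be computed and expanded into multilinear monomials of degree at most $d(p-1)$ in polynomial time. By the discussion above, $(\widehat{L}, V)$ is an equivalent instance of \rootCSP of degree $d(p-1)$ over the efficient finite field \primefield. Applying Theorem~\ref{thm:kernel} yields an equivalent sublist $\widehat{L}' \subseteq \widehat{L}$ with at most $n^{d(p-1)}+1$ equalities. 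Since the map $L \to \widehat{L}$ is a bijection on constraints, each equality retained in $\widehat{L}'$ corresponds to a unique inequality of $L$; letting $L'$ be the set of those inequalities and outputting $(L',V)$ gives the desired instance, with $L' \subseteq L$ by construction.

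Equivalence of $(L',V)$ with $(L,V)$ then follows by chaining the two equivalences: a $0/1$-assignment $\tau$ satisfies all inequalities in $L'$ iff it satisfies all equalities in $\widehat{L}'$ (per-constraint equivalence), iff it satisfies all equalities in $\widehat{L}$ (Theorem~\ref{thm:kernel}), iff it satisfies all inequalities in $L$ (per-constraint equivalence again). Hence the output instance is equivalent, has at most $n^{d(p-1)}+1$ constraints, and is a sublist of the input.

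I do not anticipate a genuine obstacle here. The one point requiring a little care is that Theorem~\ref{thm:kernel} returns a sublist of the \emph{transformed} equalities, whereas we need a sublist of the original inequalities; this is handled by tracking the bijection between $L$ and $\widehat{L}$. A secondary point is the degree bookkeeping: raising to the $(p-1)$-st power multiplies the degree by $p-1$, which is exactly why the constraint bound degrades to $n^{d(p-1)}+1$, and why the statement is restricted to prime moduli—over a composite modulus no single exponent sends all nonzero elements to a common value while fixing $0$, so this Fermat-style linearization of the inequality is unavailable.
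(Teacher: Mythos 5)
Your proposal is correct and follows essentially the same route as the paper: the paper composes each $f$ with a degree-$(p-1)$ polynomial $F$ whose root set is $\{1,\ldots,p-1\}$ modulo $p$ and then applies Theorem~\ref{thm:kernel}, and your Fermat polynomial $y^{p-1}-1$ is exactly such an $F$ (indeed $\prod_{s=1}^{p-1}(y-s)\equiv y^{p-1}-1 \pmod p$). Your explicit handling of the bijection back to the original inequalities is a detail the paper leaves implicit, but the argument is the same.
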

\begin{proof}
Suppose we are given a list of polynomial inequalities $L$ over variables $V$. Observe that an inequality $f(\vect{x}) \not \equiv_p 0$ is equivalent to
$f(\vect{x}) \bmod p \in \{1,\ldots,p-1\}$.

Let $F \colon \primefield \to \primefield$ be a polynomial of degree $p-1$ with root set $\{1,\ldots,p-1\}$ modulo $p$, which exists since~$\primefield$ is a field. Then~$f(\vect{x}) \not \equiv_p 0$ can equivalently be stated as~$F(f(\vect{x})) \equiv_p 0$. It is easy to see that $F(f(\vect{x}))$ is a polynomial of degree at most $d(p-1)$. Therefore, $L$ can be written as an instance of \problem{$d(p-1)$-Polynomial root CSP} by replacing every polynomial $f$ by $F\circ f$. By Theorem~\ref{thm:kernel}, the theorem follows.
\end{proof}

\noindent In Section~\ref{sec:nonroot:lb} we will establish a nearly-matching lower-bound counterpart to Theorem~\ref{thm:UB:non-root}. We do not have upper bounds for \nonrootCSP modulo a composite number~$m$. The difficulties in obtaining these are described in Section~\ref{sect:conclusion}.

\section{Kernel lower bounds}
\subsection{Polynomial root CSP over the rationals}
We now turn our attention to lower bounds, starting with \rootCSP over~$\mathbb{Q}$ and over $\mathbb{Z}/m\mathbb{Z}$. We start by proving that \problem{Exact Red-Blue Dominating Set} does not have generalized kernels of bitsize~$\Oh(n^{2-\varepsilon})$ for any~$\varepsilon > 0$, unless \containment. The same lower bound for both variants of \linearrootCSP will follow by a linear-parameter transformation. We then show how to generalize this result to \rootCSP. As a starting problem for the cross-composition we will use the NP-hard \problem{Red-Blue Dominating Set (\RBDS)}~\cite{DomLS14,Karp72}.

\defproblem{Red-Blue Dominating Set (\RBDS)}
{A bipartite graph $G = (R \cup B, E)$ containing red ($R$) and blue ($B$) vertices, and an integer~$k$.}
%{The number of vertices $n$}
{Does there exist a set $D \subseteq R$ with $|D| \leq k$ such that every vertex in $B$ has at least one neighbor in $D$?}

\problem{Exact Red Blue Dominating Set (\ERBDS)} is defined similarly, except that every vertex in $B$ must have \emph{exactly one} neighbor in $D$. Furthermore we will not bound the size of such a set, but merely ask for the existence of any \ERBDS. 
%\begin{note}
Finally, we define a weakening of the notion of an \ERBDS of a graph, called a \SERBDS. Given a bipartite graph $G$ and set $S \subseteq V(G)$, a set $X \subseteq V(G)$ is a \SERBDS of $G$ with respect to $S$ if it is a \RBDS of $G$ and furthermore, any blue vertex $x \notin S$ has exactly one neighbor in $X$. Vertices from $S$ may be dominated multiple times.

The following lemma gives a degree-$2$ cross-composition from \RBDS to \SERBDS, which will be used to prove Theorem \ref{thm:exact_RBDS}. It is proven separately because the construction will also be used in the proofs of Theorems \ref{cor:LB:1-poly-root-csp-modm} and \ref{thm:drootcspmod:lb}.

\begin{lemma}\label{lem:cross_composition_erbds}
There exists a polynomial-time algorithm that, given $t$ instances of \RBDS with $\sqrt{t} \in \mathbb{N}$, labeled $X_{\ell_1,\ell_2}$ with $\ell_1,\ell_2 \in [\sqrt{t}]$, which all ask for a solution of size $k$ and all have $m_R$ red and $m_B$ blue vertices, constructs a bipartite graph $G'$ with vertices partitioned into red $(R)$ and blue $(B)$ vertices, and a subset $V$ of the blue vertices, such that the following holds:
\begin{enumerate}
\item \label{req:size} $|R| + |B| \leq \Oh (\sqrt{t} \cdot (m_R + m_B)^3)$.
\item \label{req:RBDS_ERBDS} If there exist $\ell_1,\ell_2\in[\sqrt{t}]$ such that $X_{\ell_1,\ell_2}$ has a \RBDS of size $k$, then $G'$ has an \ERBDS.
\item \label{req:SERBDS_RBDS} If $G'$ has a \SERBDS with respect to $V$, then there exist $\ell_1,\ell_2 \in[\sqrt{t}]$ such that $X_{\ell_1,\ell_2}$ has a \RBDS of size $k$.
\item \label{req:degrees} There are at most $2$ vertices in $B\setminus V$ with degree more than $m_R + k + 2$.
\end{enumerate}
\end{lemma}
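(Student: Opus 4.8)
# Proof Proposal for Lemma~\ref{lem:cross_composition_erbds}

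\textbf{Overall approach.} The plan is to build $G'$ by a standard "selector gadget over a grid" construction. I would index the $t$ input instances of \RBDS as $X_{i,j}$ with $i,j \in [\sqrt{t}]$, place a single disjoint union of all the bipartite input graphs inside $G'$, and then add auxiliary red and blue vertices whose job is to (a) pick one row index $i^\star$ and one column index $j^\star$ via the exactness condition, (b) force the chosen instance $X_{i^\star,j^\star}$ to be dominated genuinely (i.e.\ as an ordinary \RBDS of size $k$), while (c) neutralizing all the other $t-1$ instances so that their blue vertices are trivially dominated. The set $V$ of "exceptional" blue vertices will consist exactly of those blue vertices that we allow to be dominated more than once — these are the blue vertices of the non-selected instances and some internal bookkeeping vertices — so that a \SERBDS with respect to $V$ still behaves like an \ERBDS on the part that matters.

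\textbf{Key steps, in order.} First I would set up row-selection and column-selection gadgets: for each $i \in [\sqrt{t}]$ a "row vertex" $r_i \in R$ and for each $j$ a "column vertex" $c_j \in R$, together with blue "choice enforcers" that are adjacent to all $r_i$ (resp.\ all $c_j$) and must be dominated exactly once, forcing exactly one row vertex and exactly one column vertex into any \ERBDS. Second, I would wire each instance $X_{i,j}$ so that its blue vertices become "switched off" (pre-dominated, hence placed in $V$, or given an extra private red neighbor that is forced in) unless both $r_i$ and $c_j$ are selected; the cleanest way is to route the domination demand of $X_{i,j}$'s blue vertices through auxiliary paths of length~2 that are short-circuited by $r_i$ and $c_j$. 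Third, on the selected instance I must ensure that a genuine size-$k$ solution $D \subseteq R$ can be chosen and that exactness on the non-$V$ blue vertices of that instance is consistent with $D$ being an ordinary (not exact) dominating set — this is handled by putting the blue vertices of every instance into $V$, so that exactness is \emph{never} required inside the instances themselves, and only the selector/enforcer gadgets carry the exactness load. Fourth, I would verify the four numbered requirements: requirement~\ref{req:size} by counting — the disjoint union contributes $t(m_R+m_B) = O(\sqrt t \cdot \sqrt t (m_R+m_B)) = O(\sqrt t (m_R+m_B)^2)$ and the gadgets add lower-order terms, comfortably within $O(\sqrt t (m_R+m_B)^3)$; requirements~\ref{req:RBDS_ERBDS} and~\ref{req:SERBDS_RBDS} by the forward/backward correctness arguments sketched above; and requirement~\ref{req:degrees} by observing that the only high-degree blue vertices outside $V$ are the $O(k)$ choice-enforcer vertices attached to the selector gadgets (at most $k+2$ of them), every other non-$V$ blue vertex having degree bounded by $m_R + k + 2$ because it sees only the red vertices of one instance plus a constant number of gadget vertices.

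\textbf{Main obstacle.} The delicate point is reconciling the \emph{exactness} requirement of \ERBDS with the fact that an input \RBDS solution need not be exact: if a blue vertex of $X_{i^\star,j^\star}$ has two neighbors in $D$, that is fine for \RBDS but would violate exactness. The fix — declaring all instance-blue-vertices to be in $V$ — means the entire exactness burden (and hence the index-selection logic) must be carried by gadget vertices alone, so I have to design the selector so that \emph{exactly one} $r_i$ and \emph{exactly one} $c_j$ are forced purely through "dominate-exactly-once" enforcer vertices, and simultaneously guarantee that picking $r_{i^\star}$ and $c_{j^\star}$ does not over-dominate any enforcer. Getting the enforcer adjacencies right (a single blue enforcer adjacent to all $\sqrt t$ row vertices forces \emph{at least} one; preventing \emph{two} requires a more careful construction, e.g.\ a matching of enforcers or pairwise "conflict" blue vertices, though pairwise conflicts would blow up the degree and jeopardize requirement~\ref{req:degrees}) is where the real care is needed. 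I expect the right design uses a logarithmic-width binary encoding of the row and column indices so that only $O(\log \sqrt t)$ enforcer vertices are needed and all of them can be kept in $V$ or given bounded degree, leaving only a constant number of genuinely high-degree non-$V$ blue vertices — consistent with the bound $k+2$ in requirement~\ref{req:degrees} once $k$ absorbs these $O(\log t)$ gadget costs.
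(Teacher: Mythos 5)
There is a genuine gap, and it is in the very first design decision. You place ``a single disjoint union of all the bipartite input graphs inside $G'$''; that alone has $t(m_R+m_B)$ vertices, and your count $t(m_R+m_B)=\Oh(\sqrt{t}\,(m_R+m_B)^2)$ silently assumes $\sqrt{t}\leq m_R+m_B$, which is not available -- in a cross-composition $t$ may be super-polynomial in the size of the individual instances, and that is exactly the regime that matters. So requirement~\ref{req:size} fails outright. The key idea of the paper's proof, which your proposal is missing, is that the instances are \emph{not} placed disjointly: one builds $\sqrt{t}$ shared red blocks $U_1,\dots,U_{\sqrt t}$ and $\sqrt{t}$ shared blue blocks $V_1,\dots,V_{\sqrt t}$ (each block containing $k$ copies of the red, respectively blue, side of a single instance), and instance $X_{\ell,\ell'}$ is encoded solely in the edges between $U_\ell$ and $V_{\ell'}$. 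This ``row/column table'' packing is what brings the vertex count down to $\Oh(\sqrt t\cdot\mathrm{poly}(m_R+m_B))$. The $k$ copies per block are not incidental either: together with $k$ exactly-once blue vertices $w_1,\dots,w_k$ (one per copy-row) and per-block vertices $d^\ell_i$, they are the mechanism that enforces the \emph{budget} -- exactly $k$ red vertices, all from one block $U_{\ell_1}$ and in distinct rows, so that a \SERBDS can be read back as a size-$k$ \RBDS. Your sketch has no counting mechanism at all: once every instance blue vertex is in $V$ and the instance reds are freely available, nothing prevents a solution from using arbitrarily many reds of the selected instance, so requirement~\ref{req:SERBDS_RBDS} (which must recover a solution of size $k$) is not established.

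Your ``main obstacle'' paragraph also misjudges the power of the exactness condition and would break requirement~\ref{req:degrees}. A single blue enforcer outside $V$ whose neighborhood is exactly the set of selector reds already forces \emph{exactly} one selector into any \SERBDS -- this is how the paper selects the two indices, using just two such vertices $s$ and $s'$ -- so no pairwise-conflict vertices or binary encoding of indices is needed. Conversely, your fallback of $\Oh(\log t)$ enforcer vertices, each adjacent to about half of the $\sqrt t$ selectors, would give $\Oh(\log t)$ non-$V$ blue vertices of degree far exceeding $m_R+k+2$, and $k$ is fixed by the inputs and cannot ``absorb'' these, so requirement~\ref{req:degrees} (at most $k+2$ such vertices) would be violated. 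Finally, note that selection of the column index in the paper works by \emph{blocking} one of the $k$ free selector gadgets attached to each blue copy in the chosen block (via the vertices $y_\ell$), which is what forces the chosen block to be genuinely dominated from $U$; your ``short-circuit by $r_i$ and $c_j$'' idea needs to be made concrete in a way that achieves this forcing while keeping all auxiliary high-degree vertices within the $k+2$ allowance.
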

In particular, the lemma shows how to embed a series of $t$ size-$n$
instances $X_{\ell_1, \ell_2} = (G_{\ell_1, \ell_2}, k)$ for $\ell_1, \ell_2 \in [\sqrt t]$ that share the same target value $k$,  into a single graph $G'$ with $\Oh(\sqrt{t} \cdot \text{poly}(n))$ vertices such that $G'$ has an \ERBDS if and only if some input instance has a size-$k$ \RBDS. This straightforwardly gives a sparsification lower bound for \ERBDS: since the number of output vertices is roughly $\sqrt{t}$, by choosing a suitable polynomial equivalence relation we get a degree-$2$ cross composition. Now the actual lemma statement is even stronger than the statement ``some input has a \RBDS $\Leftrightarrow$ $G'$ has an \ERBDS'', because the $(\Leftarrow)$ implication already holds when $G'$ has a \SERBDS. The fact that it is only required to be exact on a set of vertices $B\setminus V$ that has almost only small-degree vertices, will be used later. Later constructions ``pay extra'' for checking exactness of large-degree vertices, and the bound in (\ref{req:degrees}) guarantees this does not happen too often.

Before proving the lemma, let us give the main ideas.
The standard approach to give a degree-$2$ cross composition~\cite{DellM12Kernelization,DomLS09Incompressibility,sparsificationJansenP15} is to have a table-like structure with sets of vertices $U_\ell$ consisting of~$m_R$ vertices and~$V_\ell$ consisting of~$m_B$ vertices for all $\ell \in [\sqrt{t}]$. In this way we can add connections between $U$ and $V$ such that $G'[U_{\ell_1} \cup V_{\ell_2}]$ is isomorphic to $G_{\ell_1,\ell_2}$, thereby embedding the adjacency information of all $t$ individual inputs while only needing $\sqrt{t} \cdot (m_R + m_B)$ vertices in the graph. Selector gadgets are then used to ensure that the part of a \textsc{(semi)-erbds} in $G'$ in $U_{\ell_1}$ for some $\ell_1$ corresponds to a \RBDS of size $k$ in $G_{\ell_1,\ell_2}$ for some $\ell_2$. In our case however, difficulties arise when we try to use this type of construction. Given a \RBDS for some input instance $G_{\ell_1,\ell_2}$, finding an \ERBDS in $G'$ can be problematic. The issue is that adding the vertices in $U_{\ell_1}$ corresponding to a solution in $G_{\ell_1,\ell_2}$ to an \ERBDS in $G'$, may dominate some of the vertices from $V$ multiple times. This is not easy to avoid, as there is simply no guarantee on how many times a vertex in the set $V_{\ell}$ with $\ell \neq \ell_2$ will be dominated by this choice of red vertices.

To resolve this problem, every set $U_{\ell}$ and $V_{\ell}$ has $k$ copies of each vertex. Connections are made such that the $i$'th copy of a vertex may only connect to the $i$'th copy of another vertex, such that $G[U_{\ell_1} \cup V_{\ell_2}]$ contains $k$ disjoint copies of $G_{\ell_1,\ell_2}$. To translate a \RBDS in $G_{\ell_1,\ell_2}$ to a \ERBDS in $G'$, we take at most one vertex from the $i$'th set of copies in $U_{\ell_1}$. Hereby, any vertex in $V$ is dominated at most once.
Furthermore, for each vertex in $V_{\ell_2}$, at least one of its copies is dominated. We add additional gadgets to ensure that the remaining vertices can also be dominated.

\begin{proof}[{Proof of Lemma \ref{lem:cross_composition_erbds}.}]
Let instance $X_{\ell_1,\ell_2}$ have graph $G_{\ell_1,\ell_2}$, with red vertices $R_{\ell_1,\ell_2}$ and blue vertices $B_{\ell_1,\ell_2}$. For each input graph $G_{\ell_1,\ell_2}$ enumerate the red vertices as $r_1,\ldots,r_{m_R}$ and the blue vertices as $b_1,\ldots,b_{m_B}$, arbitrarily. Create a graph $G'$ by the following steps. Figure \ref{fig:overview} shows a sketch of $G'$.

\begin{figure}
\begin{center}
\includegraphics{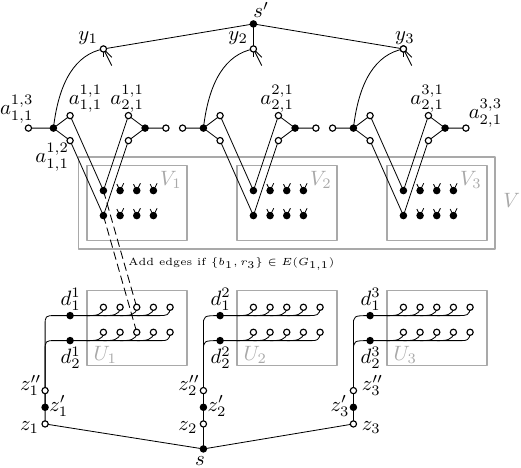}
 %\Description[A depiction of the graph $G'$ created in the proof of Lemma 4.1.]{A depiction of the graph $G'$ created in the proof of Lemma 4.1.}
\caption{The graph $G'$ created in the proof of Lemma \ref{lem:cross_composition_erbds}, for $k=2$, $m_R=5$, $m_B=4$, and $t=9$. Edges between $U$ and $V$ are left out for simplicity. Of the $24$ gadgets in $C$ only $c^\ell_{1,1}$ and $c^\ell_{2,1}$ are shown for all $\ell \in [\sqrt{t}]$. Vertices in $R$ are shown in white and vertices in $B$ are shown in black. The set $V$ of vertices that may be dominated multiple times by a \SERBDS wrt. $V$ is highlighted by a rectangle. }
\label{fig:overview}
\end{center}
\end{figure}

\begin{enumerate}
\item \label{step:create_red} Create $\sqrt{t}$ sets $U_1,\ldots,U_{\sqrt{t}}$ each consisting of $k \cdot m_R$ red vertices, with $U_\ell~:= \{u^\ell_{i,j}~\mid~i~\in~[k],$
    $j~\in~[m_R]\}$ for each~$\ell \in [\sqrt{t}]$. Let $U$ be the union of all sets $U_\ell$, for $\ell \in[\sqrt{t}]$.
\item \label{step:create_blue} Similarly create $\sqrt{t}$ sets $V_1,\ldots,V_{\sqrt{t}}$, each consisting of $k \cdot m_B$ blue vertices, and define $V_\ell~:=~\{v^{\ell}_{i,j'}\mid i\in[k],j'\in[m_B]\}$ for all $\ell \in [\sqrt{t}]$. Let $V$ be the union of all sets $V_\ell$. Note that a \SERBDS wrt. $V$ must dominate all blue vertices that are created in the remainder of the construction exactly once.
\item \label{step:connect_red_blue} For each $i \in [k]$ add the edge from $u^{\ell_1}_{i,j}$ to $v^{\ell_2}_{i,j'}$ if  $\{r_j,b_{j'}\}$ is an edge in instance $X_{\ell_1,\ell_2}$ with $\ell_1,\ell_2 \in [\sqrt{t}]$, $j\in[m_R]$, and $j'\in[m_B]$.
\end{enumerate}
    By Steps \ref{step:create_red} to \ref{step:connect_red_blue}, the subgraph of $G'$ induced by the vertices in $U_{\ell_1} \cup V_{\ell_2}$ consists of $k$ vertex-disjoint copies of $G_{\ell_1,\ell_2}$. The next steps are used to ensure that there are exactly $k$ vertices from $U$ in any \SERBDS, which must all belong to the same set $U_\ell$. These vertices will correspond to a \RBDS in one of the input instances.
\begin{enumerate}
\setcounter{enumi}{3}
\item\label{step:d} Create blue vertices  $d^\ell_i$ for $\ell \in [\sqrt{t}]$ and $i \in [k]$. Connect vertex $d^\ell_i$ to all vertices $u^\ell_{i,j}$ with $j\in[m_R]$. Define~$D:=\{d^\ell_i\mid \ell \in [\sqrt{t}], i \in [k]\}$. These blue vertices ensure that a \SERBDS wrt. $V$, which dominates each vertex of $D$ exactly once, cannot contain two vertices~$u^\ell_{i,j}$ and~$u^\ell_{i,j'}$ belonging to the same row of the same set~$U_\ell$.
\item Add blue vertex $s$ and  add the vertices $Z := \{z_\ell,z_\ell',z_\ell'' \mid \ell \in [\sqrt{t}]\}$. Let $z_\ell$ and $z''_\ell$ be red and let $z'_\ell$ be blue for all $\ell \in [\sqrt{t}]$. Connect $z_\ell''$ to $d^{\ell}_i$ for $i \in [k]$ and $\ell \in [\sqrt{t}]$. Add the edges $\{z_\ell,z_\ell'\}$ and $\{z_\ell',z_\ell''\}$ for all $\ell \in [\sqrt{t}]$.
     Connect each vertex $z_\ell$ to $s$ for $\ell \in [\sqrt{t}]$, thereby ensuring that exactly one vertex $z_\ell$ is contained in a \SERBDS wrt.~$V$.
    Intuitively, the index~$\ell_1$ for which~$z_{\ell_1}$ belongs to a \SERBDS controls the first index of the input instance~$X_{\ell_1, \ell_2}$ to which the solution corresponds. \label{step:makeS} %2019 had to change this step
\end{enumerate}
The next steps ensure that some of the blue vertices in one set $V_{\ell_2}$ need to be dominated by vertices from $U$, while all other vertices in $V$ can be dominated ``for free''.
This will control the second index of the input instance $X_{\ell_1, \ell_2}$ to which the solution corresponds.
\begin{enumerate}
\setcounter{enumi}{6}
\item \label{step:select_blue} Add sets of gadgets $C_\ell$ for $\ell \in [\sqrt{t}]$. Each set~$C_\ell$ consists of $m_B \cdot k$ selector gadgets $c^\ell_{i,j'}$ for $i\in[k]$, $j'\in[m_B]$. Selector gadget $c^\ell_{i,j'}$ consists of $k+1$ red vertices labeled $\A{\ell}{1}{i}{j'},\ldots,\A{\ell}{k+1}{i}{j'}$ that are all connected to a blue vertex $\B{\ell}{i}{j'}$ that is the only blue vertex inside the gadget. Furthermore, for $j' \in [m_B]$, $\ell \in [\sqrt{t}]$ and $i\in[k]$, in gadget $c^\ell_{i,j'}$ the vertex $\A{\ell}{x}{i}{j'}$ for $x \in [k]$ is connected to $v^\ell_{x,j'}$. We refer to the vertex set of gadget $c^\ell_{i,j'}$ by $V(c^\ell_{i,j'})$.
\end{enumerate}
By Step \ref{step:select_blue} of the construction a \SERBDS uses at most one red vertex from each gadget, which can be used to dominate one vertex from $V$. Using vertex $\A{\ell}{k+1}{i}{j'}$ of a gadget, the blue vertex of that gadget can be dominated without dominating any other blue vertices. Using the $k$ gadgets introduced for $j' \in [m_B],\ell\in[\sqrt{t}]$, we can thus precisely dominate all vertices in $\{v_{i,j'}^\ell \mid i \in [k]\}$. Now we will ensure that there is a $\ell_2 \in [\sqrt{t}]$ such that in $V_{\ell_2}$, for each $j' \in [m_B]$, one of the vertices $\{v_{i,j'}^{\ell_2} \mid i \in [k]\}$ is not dominated by a gadget and must therefore be dominated by a vertex from $U$.
\begin{enumerate}
\setcounter{enumi}{7}
\item Add red vertices $Y:=\{y_1,\ldots,y_{\sqrt{t}}\}$. For each $\ell\in[\sqrt{t}]$ connect $y_\ell$ to the blue vertices of the gadgets $c^\ell_{1,j'}$ for all $j'\in [m_B]$, thereby making gadget $c^\ell_{i,j'}$ special for $i=1$. Connect $y_1,\ldots,y_{\sqrt{t}}$ to the new blue vertex $s'$, which ensures that exactly one vertex~$y_{\ell_2} \in Y$ belongs to any \SERBDS wrt.~$V$. \label{step:makeSprime}
\end{enumerate}
%
%2019 had to change definitions of R and B below
This concludes the construction of graph $G'$, with red vertices $R := (U \cup Y \cup \{z_\ell,z_{\ell}'' \mid \ell \in [\sqrt{t}]\}\cup \{\A{\ell}{x}{i}{j'} \mid {\ell\in[\sqrt{t}]},\ {x\in[k+1]},\ {i\in[k]},\ {j'\in[m_B]}\})$, and blue vertices $B := (V \cup D \cup \{s,s'\} \cup \{\B{\ell}{i}{j'} \mid {\ell\in[\sqrt{t}]},{i\in[k]},\ j'\in[m_B]\}\cup \{z_\ell ' \mid \ell \in [\sqrt{t}]\})$. The following observation follows immediately from the construction above.

\begin{observation}\label{obs:same_nbhood}
Let $x \notin V$ be a blue vertex in $G'$. The neighborhood of $x$ depends only on $m_{R}$, $m_{B}$, $t$, and $k$; it is independent of the structure of the given input instances.
\end{observation}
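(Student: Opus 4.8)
The plan is to reduce the statement to a single observation about the construction: among steps~\ref{step:create_red}--\ref{step:makeSprime}, the only one whose output depends on \emph{which} pairs $\{r_j,b_{j'}\}$ are edges of the input instances is step~\ref{step:connect_red_blue}, and every edge added there joins a vertex of $U$ to a vertex of $V$. All the other steps create a fixed vertex set together with a fixed edge set once $m_R$, $m_B$, $t$, and $k$ are known. Since $G'$ is bipartite with color classes $R \supseteq U$ and $B \supseteq V$, and a blue vertex $x \notin V$ lies in $B \setminus V$ and hence is not in $U$, none of the edges produced by step~\ref{step:connect_red_blue} is incident to $x$. Therefore every edge incident to $x$ originates from an input-independent step, and the neighborhood of $x$ is determined by $m_R$, $m_B$, $t$, and $k$ alone.

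To corroborate this and to pin down the neighborhoods explicitly, I would enumerate the blue vertices outside $V$, which by the definition of $B$ are precisely the vertices of $W$, the vertices of $D$, the two special vertices $s$ and $s'$, and the gadget vertices $b^{\ell}_{i,j}$. Reading the neighbor sets off the construction gives $N(w_i) = \{u^{\ell}_{i,j} \mid \ell \in [\sqrt t],\ j \in [m_R]\}$; $N(d^{\ell}_i) = \{u^{\ell}_{i,j} \mid j \in [m_R]\} \cup \{z_j \mid j \in [\sqrt t] \setminus \{\ell\}\}$; $N(s) = Z$; $N(s') = Y$; and $N(b^{\ell}_{i,j}) = \{a^{\ell,x}_{i,j} \mid x \in [k+1]\}$, together with the single extra edge to $y_\ell$ in the case $i = 1$. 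In each of these descriptions the neighbor set is specified purely through the index ranges $[\sqrt t]$, $[k]$, $[k+1]$, $[m_R]$, $[m_B]$ --- that is, purely through $t$, $k$, $m_R$, and $m_B$ --- and never refers to the edge relation of any instance $X_{\ell,\ell'}$. Combined with the typing argument of the previous paragraph, this establishes the observation.

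I do not expect a genuine obstacle: the statement is a bookkeeping verification. The only point that needs care is completeness of the case distinction --- confirming that every step other than step~\ref{step:connect_red_blue} is indeed input-independent, and that no blue vertex outside $V$ has been overlooked. This is why I would organize the argument around the single structural fact ``the unique input-dependent step has both endpoints in $U \cup V$'' before listing the explicit neighborhoods, keeping the structural reason separate from the routine case-by-case check.
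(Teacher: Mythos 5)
Your proposal is correct and matches the paper's reasoning: the paper simply asserts the observation "follows immediately from the construction," and your argument makes that explicit by noting that the only input-dependent edges (step~\ref{step:connect_red_blue}) run between $U$ and $V$, so they are never incident to a blue vertex outside $V$, and by listing the index-determined neighborhoods of the vertices in $W$, $D$, $\{s,s'\}$, and the gadget blue vertices. The explicit enumeration is a slightly more thorough write-up of the same inspection, with no gap.
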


Furthermore, we can show that  requirement \ref{req:degrees} of this lemma is satisfied.
\begin{claim}\label{obs:degree_of_nbhood}
There are at most $2$ vertices in $B\setminus V$ with degree more than~$m_R + k + 2$.
\end{claim}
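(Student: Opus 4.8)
The plan is to go through the blue vertices of $G'$ (those in $B \setminus V$) one category at a time, as they are introduced in the construction, and bound the degree of each. The categories of blue vertices outside $V$ are: the vertices $W = \{w_i \mid i \in [k]\}$ from step~\ref{step:select_red}; the vertices $D = \{d^\ell_i\}$ from step~5; the single vertex $s$ from step~\ref{step:makeS}; the gadget vertices $\B{\ell}{i}{j}$ from step~\ref{step:select_blue}; and the single vertex $s'$ from step~\ref{step:makeSprime}. I would argue that all but at most $k+2$ of these have degree at most $m_R + k + 2$.

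First I would handle the ``small degree'' categories. A gadget vertex $\B{\ell}{i}{j}$ is connected only to the $k+1$ vertices $\A{\ell}{x}{i}{j}$ inside its own gadget, plus possibly $y_\ell$ (when $i=1$), so its degree is at most $k+2 \le m_R + k + 2$. A vertex $d^\ell_i$ is connected to the $m_R$ vertices $u^\ell_{i,j}$ with $j \in [m_R]$, plus the vertices $z_j$ for $j \ne \ell$; but wait, that gives degree $m_R + \sqrt{t} - 1$, which is not bounded by $m_R + k + 2$ in general. So the vertices in $D$ are exactly the ones with possibly large degree. However, $|D| = k \cdot \sqrt{t}$, which is far more than $k+2$, so this would break the claim as I have stated it. I must have the direction of the edges wrong: re-reading step~6, it is $z_j$ connected to $d^\ell_i$ for $\ell \ne j$, so indeed $d^\ell_i$ sees $\sqrt{t}-1$ vertices of $Z$. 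The resolution must be that the claim is stated for $B \setminus V$ but that the intended reading excludes something, or that I am miscounting — so the careful bookkeeping of exactly which vertices in $D$ and which other vertices can have large degree is precisely the delicate part, and I would recheck the construction to pin down that only the $k+2$ ``hub'' vertices (the $w_i$'s together with $s$ and $s'$) have unbounded degree while re-examining whether the $d^\ell_i$ edges to $Z$ were meant to run the other way.

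Assuming the construction is as I believe it should be for the claim to hold (so that each $d^\ell_i$ has degree $\Oh(m_R + k)$, small enough), the remaining work is: the vertex $s$ is connected to all $\sqrt{t}$ vertices of $Z$, giving degree $\sqrt{t}$, which exceeds $m_R + k + 2$ in general; similarly $s'$ is connected to all $\sqrt{t}$ vertices of $Y$; and each $w_i$ is connected to all $u^\ell_{i,j}$ over $\ell \in [\sqrt{t}]$, $j \in [m_R]$, hence has degree $\sqrt{t} \cdot m_R$, also large. That is $k + 2$ vertices total with potentially large degree: $w_1, \ldots, w_k$, $s$, and $s'$. All other blue vertices outside $V$ — the $d^\ell_i$'s and the $\B{\ell}{i}{j}$'s — would then be shown to have degree at most $m_R + k + 2$ by the direct edge-counting above.

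So the proof skeleton is: enumerate the blue vertices of $B \setminus V$ by construction step; for the $\B{\ell}{i}{j}$ and $d^\ell_i$ families, count incident edges explicitly and check the bound $m_R+k+2$ holds; observe that the only remaining blue vertices are $w_1,\dots,w_k,s,s'$, which is exactly $k+2$ vertices, and these may have larger degree. I expect the main obstacle to be exactly the one I hit above: getting the edge orientations and index ranges in steps~5 and~6 right so that the $d^\ell_i$ vertices genuinely fall under the degree bound — this requires care because a naive reading makes $|D|$ violate the claim, and the whole point of requirement~\ref{req:degrees} (which feeds into later theorems) is that only a constant-in-$t$ number of high-degree vertices remain.
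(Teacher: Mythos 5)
Your enumeration of $B\setminus V$ by construction step is exactly the route the paper's proof takes, and your counts for the gadget vertices (degree at most $k+2$), for the vertices of $W$ (degree $m_R\sqrt{t}$), and for $s$ and $s'$ (degree $\sqrt{t}$) agree with it; like you, the paper concludes that the $k+2$ vertices $w_1,\dots,w_k,s,s'$ are the only ones permitted to exceed the bound. But as a proof your proposal is incomplete at precisely the place you flag yourself: for the $k\sqrt{t}$ vertices of $D$ you never establish the bound $m_R+k+2$. You observe that, reading Step~\ref{step:makeS} literally, each $d^\ell_i$ is adjacent to the $m_R$ vertices $u^\ell_{i,j}$ and also to the $\sqrt{t}-1$ vertices $z_j$ with $j\neq\ell$, hence has degree $m_R+\sqrt{t}-1$, which exceeds $m_R+k+2$ as soon as $\sqrt{t}>k+3$; you then continue ``assuming the construction is as I believe it should be for the claim to hold.'' That conditional step is the entire content of the claim, since $|D|=k\sqrt{t}\gg k+2$, so the proposal as written does not prove the statement.

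To be fair, the difficulty is not a misreading on your side. Step~\ref{step:makeS} really does connect $z_j$ to $d^\ell_i$ for all $\ell\neq j$ (this is what makes Claim~\ref{claim:all_from_one_square} work), so $\deg(d^\ell_i)=m_R+\sqrt{t}-1$ under the construction as stated. The paper's own proof of the claim disposes of $D$ with the single assertion that these vertices ``have degree $m_R$,'' i.e.\ it counts only the edges to $U_\ell$ and omits the $\sqrt{t}-1$ edges to $Z$. So the obstacle you hit is exactly the step the paper glosses over, and it is not cosmetic: in Theorem~\ref{cor:LB:1-poly-root-csp-modm} every blue vertex outside $V$ whose degree exceeds $m_R+k+2$ may cost $\Theta(\sqrt{t}\cdot\mathrm{poly}(m_R+m_B))$ dummy variables, and $k\sqrt{t}$ vertices of degree $m_R+\sqrt{t}-1$ would drive the parameter to $\Theta(t)$ and invalidate the degree-$2$ cross-composition. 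The verdict is therefore: your write-up is a gap-containing (conditional) argument rather than a proof, because the $D$ case is assumed rather than shown; but the gap you could not close coincides with an undercount in the paper's own justification, which would need the construction, or the degree bookkeeping for $D$, to be repaired before either argument goes through as stated.
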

\begin{claimproof}%2019 had to change this proof
We list all vertices in $B \setminus V$, together with an upper bound on their degree.
\begin{description}
\item[{\em Vertices $s$ and $s'$}:] It follows from Steps \ref{step:makeS} and \ref{step:makeSprime} that these two vertices both have large degree, namely~$\sqrt{t}$.
\item[{\em Vertices in $D$}:]  It follows from Steps \ref{step:d} and \ref{step:makeS} that these vertices have degree $m_R + 1$.
\item[{\em Vertices in $Z \cap B$:}] It follows from Step \ref{step:makeS} that vertex $z_\ell'$ has degree two for all $\ell \in [\sqrt{t}]$.
\item[{\em Vertices in gadgets}:]   The blue vertex of any gadget has degree at most $k+2$, the incident edges are added in Steps \ref{step:select_blue} and \ref{step:makeSprime}.
\end{description}
Thus there are at most $2$ vertices of degree larger than $m_R + k + 2$ in $B \setminus V$.
\end{claimproof}

\begin{claim}\label{claim:all_from_one_square}
For any \SERBDS $E$ of $G'$ wrt.~$V$, there exists an index $\ell_1 \in [\sqrt{t}]$ such that $U_{\ell} \cap E = \emptyset$ for all $\ell \neq \ell_1 \in [\sqrt{t}]$ and $|E \cap \{u^{\ell_1}_{i,j}\mid  j \in [m_R]\}| = 1$ for all $i \in [k]$.
\end{claim}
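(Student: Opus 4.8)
The plan is to isolate three families of ``control'' blue vertices from the construction (Steps~\ref{step:select_red}--\ref{step:makeS}) and exploit that, since each of them lies in~$B\setminus V$, a \SERBDS must dominate it \emph{exactly} once. These are: the set~$W=\{w_i \mid i\in[k]\}$, where~$w_i$ is adjacent exactly to~$\{u^\ell_{i,j}\mid j\in[m_R],\ \ell\in[\sqrt{t}]\}$; the single vertex~$s$, adjacent exactly to~$Z=\{z_j\mid j\in[\sqrt{t}]\}$; and the set~$D=\{d^\ell_i\}$, where~$d^\ell_i$ is adjacent exactly to~$\{u^\ell_{i,j}\mid j\in[m_R]\}\cup\{z_{j'}\mid j'\neq\ell\}$. (The gadgets~$C_\ell$, the vertices~$Y$, and~$s'$ play no role here; they control the second index, not~$U$.)

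First I would use~$s$: its unique neighbour in~$E$ lies in~$Z$, so there is exactly one index~$j^\ast\in[\sqrt{t}]$ with~$z_{j^\ast}\in E$. Next I would plug this into the~$D$-gadgets: for every~$\ell\neq j^\ast$ and every~$i\in[k]$, the vertex~$d^\ell_i$ is adjacent to~$z_{j^\ast}\in E$, so~$z_{j^\ast}$ already accounts for its single allowed dominator; hence no~$u^\ell_{i,j}$ (with~$j\in[m_R]$) lies in~$E$. Ranging over~$i$ gives~$U_\ell\cap E=\emptyset$ for all~$\ell\neq j^\ast$, which is the first assertion of the claim with~$\ell:=j^\ast$.

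Then I would use~$W$ for the second assertion: for each~$i\in[k]$, the vertex~$w_i$ has exactly one neighbour in~$E$ among~$\{u^\ell_{i,j}\mid j\in[m_R],\ \ell\in[\sqrt{t}]\}$, so exactly one vertex of that form is in~$E$, and by the previous paragraph the superscript of that vertex must be~$j^\ast$. Hence~$|E\cap\{u^{j^\ast}_{i,j}\mid j\in[m_R]\}|=1$ for all~$i\in[k]$, and~$\ell:=j^\ast$ witnesses the claim.

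I do not anticipate a real obstacle; the work is bookkeeping about neighbourhoods, and both the \RBDS property (``dominated at least once'') and the semi-exactness (``at most once outside~$V$'') are used to turn the neighbourhood structure into equalities. The only point demanding care is to verify from the construction that~$w_i$, $s$, and each~$d^\ell_i$ have \emph{no} neighbours beyond those listed above, so that ``dominated exactly once'' yields the stated equalities rather than merely upper bounds. Conceptually, the vertices in~$Z$ are precisely what force a single common index~$j^\ast$ across all~$k$ rows, rather than an independent choice of~$U_\ell$ per row.
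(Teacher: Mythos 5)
Your proposal is correct and follows essentially the same route as the paper's proof: exact domination of~$s$ selects a unique~$z_{\ell}\in E$, the vertices~$d^{x}_i$ (which lie outside~$V$ and are already dominated by~$z_{\ell}$) then exclude all of~$U_x$ for~$x\neq\ell$, and exact domination of each~$w_i$ forces exactly one vertex per row, necessarily from~$U_\ell$. The neighbourhood bookkeeping you flag as the only delicate point indeed checks out against Steps~\ref{step:select_red}--\ref{step:makeS} of the construction.
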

\begin{claimproof}%2019 had to change this proof
By Step~\ref{step:makeS}, blue vertex~$s \notin V$ has neighborhood $\{z_\ell \mid\ \ell\in[\sqrt{t}]\}$. Since the semi-exact RBDS is exact on blue vertices outside~$V$, exactly one neighbor of~$s$ is contained in $E$; let this be~$z_{\ell_1}$. Thereby, for all $\ell \in [\sqrt{t}]$ with $\ell \neq \ell_1$ we obtain $z_\ell \notin E$. Since blue vertex $z_\ell'$ has neighborhood exactly $N_{G'}(z_\ell') = \{z_\ell,z_\ell ''\}$, it follows that $z_\ell'' \in E$ for all $\ell \neq \ell_1$ with $\ell \in [\sqrt{t}]$.

Let $\ell \in [\sqrt{t}]$ with $\ell \neq \ell_1$, we show that no vertex in $U_\ell$ is in $E$. Consider vertex $u_{i,j}^\ell$ with $i \in [k]$, $j\in [m_R]$. Then $u_{i,j}^\ell \in N_{G'}(d_{i}^\ell)$ for blue vertex $d_i^\ell$. Since $z_\ell'' \in N_{G'}(d_i^\ell)$ and $z_\ell'' \in E$, it follows that $u_{i,j}^\ell \notin E$.

It remains to show that $|E \cap \{u_{i,j}^{\ell_1} \mid j \in [m_r]\}| = 1$ for all $i \in [k]$. Since $N_{G'}(z_{\ell_1}') = \{z_{\ell_1},z_{\ell_1} ''\}$ and $z_{\ell_1} \in E$, it follows that $z_{\ell_1}'' \notin E$. As $d_i^{\ell_1} \in B \setminus V$ and $E$ is an exact RBDS on vertices outside $V$, it follows that $|E \cap N_{G'}(d^{\ell_1}_i)| = 1$ for all $i\in[k]$. Since $z_{\ell_1}'' \notin E$, it thereby follows that $|E \cap \{u^{\ell_1}_{i,j}\mid  j \in [m_R]\}| = 1$ for all $i \in [k]$.
\end{claimproof}

\begin{claim}\label{claim:gadgets_not_usable}
For any \SERBDS $E$ of $G'$ wrt.~$V$, there exists an index $\ell_2 \in [\sqrt t]$ such that $E \cap V(c^{\ell_2}_{1,j'}) = \emptyset$ for all $j' \in [m_B]$.
\end{claim}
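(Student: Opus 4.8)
The plan is to reuse the vertex $s'$ and the set $Y$ introduced in Step~\ref{step:makeSprime} in exactly the way $s$ and $Z$ were used in Claim~\ref{claim:all_from_one_square}. First I would observe that $s'$ is a blue vertex with $s' \notin V$, and that by construction its neighborhood is precisely $Y = \{y_1,\ldots,y_{\sqrt t}\}$. Since $E$ is a \SERBDS with respect to $V$, the vertex $s'$ has exactly one neighbor in $E$; call it $y_{\ell_2}$. I claim the index $\ell := \ell_2$ witnesses the statement.

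Next I would fix an arbitrary $j \in [m_B]$ and consider the unique blue vertex $\B{\ell_2}{1}{j}$ of the gadget $c^{\ell_2}_{1,j}$. From Steps~\ref{step:select_blue} and~\ref{step:makeSprime}, its full neighborhood is $\{\A{\ell_2}{1}{1}{j},\ldots,\A{\ell_2}{k+1}{1}{j}\} \cup \{y_{\ell_2}\}$. Since $\B{\ell_2}{1}{j} \notin V$, it has exactly one neighbor in $E$; as $y_{\ell_2}\in E$ is already such a neighbor, none of the red vertices $\A{\ell_2}{x}{1}{j}$ with $x \in [k+1]$ can lie in $E$.

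Finally, because a \SERBDS is in particular a \RBDS, the set $E$ consists solely of red vertices, so $\B{\ell_2}{1}{j} \notin E$ as well. As $V(c^{\ell_2}_{1,j}) = \{\A{\ell_2}{1}{1}{j},\ldots,\A{\ell_2}{k+1}{1}{j}, \B{\ell_2}{1}{j}\}$, this yields $E \cap V(c^{\ell_2}_{1,j}) = \emptyset$, and since $j$ was arbitrary the claim follows. I expect no genuine obstacle here: the only point requiring care is to enumerate the neighborhood of $\B{\ell_2}{1}{j}$ completely — in particular not to overlook the edge to $y_{\ell_2}$ added in Step~\ref{step:makeSprime}, and to use that solutions are restricted to red vertices — so the argument is a direct mirror of Claim~\ref{claim:all_from_one_square}.
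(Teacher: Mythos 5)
Your proof is correct and follows essentially the same route as the paper: use the fact that $s'\notin V$ has neighborhood $Y$ to pin down a unique $y_{\ell}\in E$, then use that each gadget blue vertex $\B{\ell}{1}{j}\notin V$ already has the neighbor $y_{\ell}$ in $E$, so no red gadget vertex of $c^{\ell}_{1,j}$ can be in $E$. Your additional remark that the gadget's blue vertex itself cannot lie in $E$ because a \SERBDS consists of red vertices only is a detail the paper leaves implicit, and it is fine.
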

\begin{claimproof}
By Step~\ref{step:makeSprime}, blue vertex~$s'$ has neighborhood $\{y_{\ell} \mid \ell \in [\sqrt{t}]\}$. Since $s' \notin V$, exactly one of these vertices is contained in~$E$; let this be $y_{\ell_2}$. It is connected to the blue vertex of all gadgets $c^{\ell_2}_{1,j'}$ for $j'\in[m_B]$. Since all red vertices in a gadget~$c^{\ell_2}_{1,j'}$ for~$j' \in [m_B]$ have the blue neighbor~$\B{\ell_2}{1}{j'}$ that is also adjacent to~$y_{\ell_2} \in E$, the red vertices in these gadgets are not present in~$E$, as $\B{\ell_2}{1}{j'}$ has exactly one red neighbor in $E$.
\end{claimproof}

\begin{claim}\label{claim:one_from_each_col}
For any \SERBDS $E$ of $G'$ wrt.~$V$, there exists an index $\ell_2 \in [\sqrt t]$ such that for every $j'\in[m_B]$ at least one of the vertices in $\{v^{\ell_2}_{i,j'} \mid i \in [k]\}$ has a neighbor in $E \cap U$.
\end{claim}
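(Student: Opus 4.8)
The plan is to prove the statement for one specific index, namely an index $\ell_2 \in [\sqrt t]$ as supplied by Claim~\ref{claim:gadgets_not_usable}, i.e.\ one with $E \cap V(c^{\ell_2}_{1,j}) = \emptyset$ for all $j \in [m_B]$ (in that claim's proof, $\ell_2$ is the unique index with $y_{\ell_2}\in E$). The whole argument will be a short counting argument inside the column gadgets, where the key leverage is that $E$ is \emph{exact} on the blue gadget vertices, since those lie outside $V$.

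First I would record which vertices can dominate a vertex of the form $v^{\ell}_{x,j}$ with $x \in [k]$ and $j \in [m_B]$. By the construction, the only edges incident to a $v$-vertex are added in Step~\ref{step:connect_red_blue} and Step~\ref{step:select_blue}; hence the neighbours of $v^\ell_{x,j}$ are certain vertices of $U$, together with exactly the $k$ gadget vertices $a^{\ell,x}_{1,j}, a^{\ell,x}_{2,j}, \ldots, a^{\ell,x}_{k,j}$, one taken from each of the column gadgets $c^\ell_{1,j}, \ldots, c^\ell_{k,j}$ (all carrying the same label $x$). Since $v^\ell_{x,j}$ is blue, any \SERBDS $E$ contains at least one of its neighbours; if that neighbour is not in $U$, it is one of these $k$ gadget vertices.

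Next comes the contradiction step. Suppose the claim fails for $\ell = \ell_2$: there is a column $j^\star \in [m_B]$ such that none of $v^{\ell_2}_{1,j^\star}, \ldots, v^{\ell_2}_{k,j^\star}$ has a neighbour in $E \cap U$. By the previous paragraph, each $v^{\ell_2}_{x,j^\star}$ is then dominated by some gadget vertex $a^{\ell_2,x}_{i',j^\star} \in E$ with $i' \in [k]$, and by Claim~\ref{claim:gadgets_not_usable} in fact $i' \in \{2,\ldots,k\}$. Now the budgeting: each gadget $c^{\ell_2}_{i',j^\star}$ contains the single blue vertex $b^{\ell_2}_{i',j^\star} \in B \setminus V$, so $E$ contains \emph{exactly} one of its neighbours; as $a^{\ell_2,1}_{i',j^\star}, \ldots, a^{\ell_2,k+1}_{i',j^\star}$ are all among those neighbours, $E$ contains at most one red vertex of $c^{\ell_2}_{i',j^\star}$, and that one vertex dominates at most one of $v^{\ell_2}_{1,j^\star}, \ldots, v^{\ell_2}_{k,j^\star}$ (it dominates $v^{\ell_2}_{x,j^\star}$ if its label is $x \le k$, and none of them if its label is $k+1$). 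Hence the $k-1$ gadgets $c^{\ell_2}_{2,j^\star}, \ldots, c^{\ell_2}_{k,j^\star}$ together dominate at most $k-1$ of the $k$ vertices $v^{\ell_2}_{x,j^\star}$, while all $k$ of them require such a dominator; a contradiction. (For $k=1$ the index set $\{2,\ldots,k\}$ is empty and $v^{\ell_2}_{1,j^\star}$ would have no dominator at all.) So no bad column exists, and the claim holds with $\ell = \ell_2$.

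I do not expect a genuine obstacle here; it is a pigeonhole argument on $k$ targets versus $k-1$ usable gadgets. The two points that need care are (i) pinning down the neighbourhood of a $v$-vertex — in particular that the gadget vertices adjacent to $v^{\ell}_{x,j}$ all carry the same label $x$ but come from the $k$ different gadgets of column $j$ — and (ii) invoking exactness of the \SERBDS on the blue gadget vertices $b^{\ell_2}_{i',j^\star}\notin V$ to cap at one the number of a gadget's red vertices that $E$ can use. Both are immediate from Steps~\ref{step:connect_red_blue}--\ref{step:makeSprime} of the construction together with Claim~\ref{claim:gadgets_not_usable}. This claim, combined with Claims~\ref{claim:all_from_one_square} and~\ref{claim:gadgets_not_usable}, is what will eventually yield requirement~\ref{req:SERBDS_RBDS}, since a vertex $v^{\ell_2}_{i,j}$ having a neighbour in $E\cap U$ translates exactly to the selected red vertex for row $i$ dominating $b_j$ in the input instance indexed by the two chosen indices.
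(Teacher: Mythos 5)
Your proof is correct and follows essentially the same route as the paper: fix the index $\ell$ supplied by Claim~\ref{claim:gadgets_not_usable}, note that a $v$-vertex can only be dominated from $U$ or from the label-matching red vertices of its column's gadgets, and use exactness on the blue gadget vertices (which lie outside $V$) to cap each gadget's contribution at one, so the $k-1$ usable gadgets cannot cover all $k$ column vertices. The only difference is presentational — you argue by contradiction on a bad column $j^\star$, whereas the paper counts directly — so nothing further is needed.
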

\begin{claimproof}
By Claim \ref{claim:gadgets_not_usable} there exists $\ell_2 \in [\sqrt{t}]$ such that $E \cap V(c^{\ell_2}_{1,j'}) = \emptyset$ for all $j' \in [m_B]$.
Consider an arbitrary~$j' \in [m_B]$. The $k$ vertices in $\{v^{\ell_2}_{i,j'} \mid i \in [k]\}$ are connected to vertices of the $k$ gadgets $c^{\ell_2}_{1,j'},c^{\ell_2}_{2,j'},\ldots,c^{\ell_2}_{k,j'}$, and to some vertices in~$U$. From each gadget, at most one red vertex is in $E$, since the red vertices have a common blue neighbor that is not in $V$. Any red gadget vertex is connected to only one vertex in $V$. Since no vertex of gadget $c^{\ell_2}_{1,j'}$ is in $E$, at most $k-1$ of the vertices in $\{v^{\ell_2}_{i,j'} \mid i \in [k]\}$ have a neighbor in $E \cap C_{\ell_2}$. Consequently, at least one of these vertices has a neighbor in $E \cap U$ for each~$j' \in [m_B]$.
\end{claimproof}
We can now prove that $G'$ and $V$ fulfill requirement \ref{req:SERBDS_RBDS} of the lemma statement.
\begin{claim}\label{claim:OR_right}
If $G'$ has a \SERBDS wrt.~$V$, then some input $X_{\ell_1,\ell_2}$ has a \RBDS of size at most $k$.
\end{claim}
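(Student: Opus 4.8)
The plan is to take a \SERBDS $E$ of $G'$ with respect to $V$ and extract from it a \RBDS of size $k$ for some input instance $X_{\ell_1,\ell_2}$. First I would invoke Claims~\ref{claim:all_from_one_square}, \ref{claim:gadgets_not_usable}, and~\ref{claim:one_from_each_col} to pin down the relevant indices: by Claim~\ref{claim:all_from_one_square} there is an index $\ell_1 \in [\sqrt{t}]$ such that $E \cap U \subseteq U_{\ell_1}$ and $E$ contains exactly one vertex $u^{\ell_1}_{i,j_i}$ from each row $i \in [k]$ of $U_{\ell_1}$; by Claim~\ref{claim:gadgets_not_usable} there is an index $\ell_2$ with $E$ disjoint from all gadgets $c^{\ell_2}_{1,j}$, and combining this with Claim~\ref{claim:one_from_each_col} (whose proof already uses exactly this $\ell_2$), for every $j \in [m_B]$ at least one vertex in $\{v^{\ell_2}_{i,j} \mid i \in [k]\}$ has a neighbor in $E \cap U = E \cap U_{\ell_1}$.

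Next I would define the candidate solution $D := \{ r_{j_i} \mid i \in [k] \} \subseteq R_{\ell_1,\ell_2}$, where $j_i$ is the column index of the unique $U_{\ell_1}$-vertex of row $i$ chosen by $E$. Clearly $|D| \leq k$. To see that $D$ dominates all of $B_{\ell_1,\ell_2}$, fix an arbitrary blue vertex $b_{j'}$ of $X_{\ell_1,\ell_2}$. By the previous paragraph there is some $i \in [k]$ such that $v^{\ell_2}_{i,j'}$ has a neighbor in $E \cap U_{\ell_1}$; by Step~\ref{step:connect_red_blue} of the construction the only $U_{\ell_1}$-neighbors of $v^{\ell_2}_{i,j'}$ are vertices $u^{\ell_1}_{i,j}$ with $\{r_j, b_{j'}\}$ an edge of $X_{\ell_1,\ell_2}$, and within row $i$ the only vertex of $U_{\ell_1}$ in $E$ is $u^{\ell_1}_{i,j_i}$. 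Hence $\{r_{j_i}, b_{j'}\}$ is an edge of $X_{\ell_1,\ell_2}$, so $b_{j'}$ is dominated by $r_{j_i} \in D$. Since $b_{j'}$ was arbitrary, $D$ is a \RBDS of $X_{\ell_1,\ell_2}$ of size at most $k$, as required.

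The step I expect to be the most delicate is the bookkeeping connecting the two indices $\ell_1$ and $\ell_2$: the vertex $u^{\ell_1}_{i,j}$ lies in $U_{\ell_1}$ while the vertex $v^{\ell_2}_{i,j'}$ lies in $V_{\ell_2}$, and Step~\ref{step:connect_red_blue} adds the edge between them precisely when $\{r_j, b_{j'}\}$ is an edge of the cross instance $X_{\ell_1,\ell_2}$ — so it is essential that the $\ell_1$ from Claim~\ref{claim:all_from_one_square} (coming from the selector gadget $s$ and the set $Z$) and the $\ell_2$ from Claim~\ref{claim:gadgets_not_usable} (coming from $s'$ and the set $Y$) can be chosen independently, which is exactly why the construction uses two separate selection mechanisms. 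Everything else is a direct unwinding of the adjacency rules, so no further nontrivial calculation is needed.
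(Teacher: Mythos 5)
Your proposal is correct and follows essentially the same route as the paper's own proof: both use Claim~\ref{claim:one_from_each_col} to fix $\ell_2$ and Claim~\ref{claim:all_from_one_square} to fix $\ell_1$ and bound the solution size by one chosen vertex per row of $U_{\ell_1}$, then unwind Step~\ref{step:connect_red_blue} to show every blue vertex of $X_{\ell_1,\ell_2}$ is dominated. The only cosmetic difference is that the paper defines the solution by projecting $E \cap U_{\ell_1}$ onto column indices rather than picking the unique vertex per row, which yields the same set.
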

\begin{claimproof}
Assume~$G'$ has a \SERBDS wrt.~$V$, say~$E$. By Claim \ref{claim:one_from_each_col}, there exists $\ell_2 \in [\sqrt{t}]$, such that for every $j'\in[m_B]$ at least one of the vertices in $\{v^{\ell_2}_{i,j'} \mid i \in [k]\}$ has a neighbor in $E \cap U$.
By Claim~\ref{claim:all_from_one_square}, there exists $\ell_1 \in [\sqrt{t}]$ such that for all $\ell \neq \ell_1$ we have $U_\ell \cap E = \emptyset$, so these neighbors lie in $U_{\ell_1}$.

We now construct a \RBDS $E'$ for instance $X_{\ell_1,\ell_2}$. For each~$j \in [m_R]$, add $r_j$ to $E'$ if $E \cap \{ u^{\ell_1}_{i,j}\mid i\in[k] \} \neq \emptyset$. By Claim~\ref{claim:all_from_one_square}, it follows that $E'$ has size at most $k$, as required. It remains to show that every vertex in $B_{\ell_1,\ell_2}$ has a neighbor in $E'$.
If some vertex $b_{j'}$ from $B_{\ell_1,\ell_2}$ does not have a neighbor in $E'$, then none of the vertices $\{v^{\ell_2}_{i,j'}\mid i\in[k]\}$ have a neighbor in $E \cap U_{\ell_1}$. This contradicts our choice of $\ell_2$. Hence~$E'$ is an \RBDS of size at most~$k$ for instance~$X_{\ell_1, \ell_2}$.
\end{claimproof}
Furthermore we show that requirement \ref{req:RBDS_ERBDS} is fulfilled in the following claim.
\begin{claim}\label{claim:OR_left}
If some input instance has a \RBDS of size at most $k$, then $G'$ has an \ERBDS.
\end{claim}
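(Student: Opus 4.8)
The plan is to turn a size-$k$ \RBDS of some input instance $X_{\ell_1,\ell_2}$ directly into an exact red-blue dominating set $E$ of $G'$. First I would normalise the given solution: if it has fewer than $k$ vertices, pad it with arbitrary red vertices to obtain indices $j_1,\dots,j_k\in[m_R]$ (not necessarily distinct) so that $\{r_{j_1},\dots,r_{j_k}\}$ still dominates all of $B_{\ell_1,\ell_2}$. This padding is harmless, since $u^{\ell_1}_{1,j_1},\dots,u^{\ell_1}_{k,j_k}$ lie in distinct rows of $U_{\ell_1}$ and are therefore pairwise distinct regardless of whether the $j_i$ are. Put $U^*:=\{u^{\ell_1}_{i,j_i}\mid i\in[k]\}$ into $E$, together with the two ``switch'' vertices $z_{\ell_1}\in Z$ and $y_{\ell_2}\in Y$; all of these are red.

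I would then check, directly from the adjacency rules of Steps~\ref{step:create_red}--\ref{step:makeSprime}, that these initial choices already dominate several classes of blue vertices exactly once: each $w_i\in W$ via its unique row representative $u^{\ell_1}_{i,j_i}$; each $d^{\ell_1}_i$ via $u^{\ell_1}_{i,j_i}$ (with $z_{\ell_1}$ not adjacent to it, since $z_j$ only reaches $d^{\ell}_i$ for $\ell\neq j$); each $d^{\ell}_i$ with $\ell\neq\ell_1$ via $z_{\ell_1}$ only; the vertex $s$ via $z_{\ell_1}$; the vertex $s'$ via $y_{\ell_2}$; and each gadget-blue vertex $b^{\ell_2}_{1,j}$, $j\in[m_B]$, via $y_{\ell_2}$ (Step~\ref{step:makeSprime}). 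What remains is to dominate, exactly once each and using only gadget red vertices, every vertex of $V$ and every gadget-blue vertex $b^{\ell}_{i,j}$ with $(\ell,i)\neq(\ell_2,1)$.

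The key observation is that a vertex $v^{\ell}_{x,j'}$ has at most one neighbour in $U^*$ --- namely $u^{\ell_1}_{x,j_x}$, present precisely when $\{r_{j_x},b_{j'}\}$ is an edge of $X_{\ell_1,\ell}$ --- because distinct rows of $U$ attach (Step~\ref{step:connect_red_blue}) to distinct $v$-rows. So for each block $\ell\in[\sqrt{t}]$ and column $j'\in[m_B]$, let $T^{\ell}_{j'}$ be the set of rows $x\in[k]$ for which $\{r_{j_x},b_{j'}\}$ is not an edge of $X_{\ell_1,\ell}$, i.e.\ the rows whose $v$-vertex is not yet dominated. I would fix an injection $\sigma_{\ell,j'}$ from $T^{\ell}_{j'}$ into the ``usable'' gadget rows ($[k]$ if $\ell\neq\ell_2$, and $\{2,\dots,k\}$ if $\ell=\ell_2$, where row $1$ is already blocked by $y_{\ell_2}$), add $a^{\ell,x}_{\sigma_{\ell,j'}(x),j'}$ to $E$ for each $x\in T^{\ell}_{j'}$, and for every usable row $i$ outside the image of $\sigma_{\ell,j'}$ add the pendant vertex $a^{\ell,k+1}_{i,j'}$, which (Step~\ref{step:select_blue}) is adjacent only to $b^{\ell}_{i,j'}$. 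Such an injection exists because $|T^{\ell}_{j'}|\le k$ always, and $|T^{\ell_2}_{j'}|\le k-1$ since $\{r_{j_1},\dots,r_{j_k}\}$ dominates $b_{j'}$ in $X_{\ell_1,\ell_2}$. A short case check then yields exactness: each $v^{\ell}_{x,j'}$ has exactly one neighbour in $E$ (its $U^*$-neighbour if $x\notin T^{\ell}_{j'}$, the single gadget vertex $a^{\ell,x}_{\sigma_{\ell,j'}(x),j'}$ if $x\in T^{\ell}_{j'}$, never both), each gadget-blue vertex $b^{\ell}_{i,j'}$ with $(\ell,i)\neq(\ell_2,1)$ has exactly one (either the $a^{\ell,x}_{i,j'}$ with $\sigma_{\ell,j'}(x)=i$, or the pendant), and the vertices treated in the previous paragraph are untouched, since a gadget red vertex is adjacent only to its own gadget-blue vertex and to a single $v$-vertex.

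The real work is this exactness bookkeeping rather than mere domination: one must verify that $U^*$ never over-dominates a $v$-vertex (the ``at most one $U^*$-neighbour'' observation, which uses that distinct rows of $U$ attach to distinct $v$-rows), that the injections $\sigma_{\ell,j'}$ fit (the $|T^{\ell_2}_{j'}|\le k-1$ bound, which is exactly where the dominating-set hypothesis enters), and that the gadget picks never leak onto a vertex already served by $z_{\ell_1}$, $y_{\ell_2}$, or $U^*$. The degenerate case $\ell_1=\ell_2$ needs only a brief sanity check, as block $\ell_2$ is handled uniformly by the rule above; everything else follows directly from the adjacency rules of the construction.
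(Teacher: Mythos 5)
Your construction is essentially the paper's own proof: you place one $u$-vertex of $U_{\ell_1}$ per row according to the dominating set, add the switch vertices $z_{\ell_1}$ and $y_{\ell_2}$, and use the gadgets (with their pendant vertices $\A{\ell}{k+1}{i}{j}$) to exactly dominate the remaining $V$-vertices and gadget-blue vertices, with the same key counting argument that in block $\ell_2$ only $k-1$ gadgets per column are usable but at most $k-1$ vertices there still need domination because the \RBDS dominates $b_{j'}$. Your explicit injections $\sigma_{\ell,j'}$ and the padding to size exactly $k$ just spell out the bookkeeping the paper leaves as ``straight-forward to verify,'' so the proposal is correct and matches the paper's approach.
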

\begin{claimproof}
Suppose instance $X_{\ell_1,\ell_2}$ has a \RBDS $E'$ of size $k$ consisting of vertices $r_{i_1},\ldots,r_{i_k} \subseteq R_{\ell_1, \ell_2}$. We construct an \ERBDS $E$ for $G'$. Start by choosing vertices $u^{\ell_1}_{x,i_x}$ for $x\in[k]$, so for every vertex in $E'$ we pick one vertex in the \ERBDS for $G'$. Add the red vertex $z_{\ell_1}$ and the vertices $z_{\ell}''$ for all $\ell \neq \ell_1$ to $E$. Furthermore, we let the vertex~$y_{\ell_2}$ be in $E$.

To exactly dominate the blue vertices in $V$, we use the gadgets in $C$ as follows. For $\ell \neq \ell_2 \in [\sqrt{t}]$, add red vertex $\A{\ell}{x}{x}{j'}$ of gadget $c^\ell_{x,j'}$ if vertex $v^\ell_{x,j'}$ does not yet have a neighbor in $E$, for $j' \in [m_B]$ and~$x \in [k]$. Else, add vertex $\A{\ell}{k+1}{x}{j'}$ of gadget $c^\ell_{x,j'}$ to $E$, in order to exactly dominate the blue vertex of this gadget.

To exactly dominate the vertices in $V_{\ell_2}$ we apply a similar procedure, except that gadget $c^{\ell_2}_{1,j'}$ cannot be used since its blue vertex~$\B{\ell_2}{1}{j'}$ is already dominated by~$y_{\ell_2}$. Since $E'$ is a \RBDS of instance $X_{\ell_1,\ell_2}$, for each~$j' \in [m_B]$ at least one vertex from set $\{v^{\ell_2}_{i,j'} \mid i \in [k]\}$ has a neighbor in $E\cap U$. As such, the $k-1$ remaining gadgets can be used to each dominate one of the $k-1$ remaining vertices in this set, if they do not already have a neighbor in $E \cap U$. If no red vertex of a gadget $c^{\ell_2}_{x,j'}$ is needed to dominate, we choose vertex $\A{\ell_2}{k+1}{x}{j'}$ of the gadget in~$E$ to dominate the blue vertex in the gadget.

It is straight-forward to verify that this results in an \ERBDS for~$G'$.
\end{claimproof}
From Claims~\ref{claim:OR_right} and~\ref{claim:OR_left} it follows that graph $G'$ has a \SERBDS wrt.~$V$ if and only if at least one of the input instances has a \RBDS of size at most $k$. The graph $G'$ has $\Oh(\sqrt{t} \cdot (m_R+m_B)^3)$ vertices and can be constructed in polynomial time.
\end{proof}
Using the lemma above, we now prove the kernel lower bound for \ERBDS.
\begin{note}
\begin{theorem}\label{thm:exact_RBDS}
\problem{Exact Red-Blue Dominating Set} parameterized by the number of vertices $n$ does not have a generalized kernel of size $\Oh(n^{2-\varepsilon})$, unless \containment.
\end{theorem}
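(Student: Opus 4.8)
The plan is to derive the bound from a degree-$2$ \OR-cross-composition of the NP-hard problem \RBDS into \ERBDS, after which Theorem~\ref{thm:cross_composition_LB} with $d=2$ gives the claim. Essentially all the combinatorial work is already contained in Lemma~\ref{lem:cross_composition_erbds}: it turns $t$ instances of \RBDS that share the parameters $m_R$, $m_B$, $k$ into a single bipartite graph $G'$ together with a distinguished set $V$ of blue vertices. The key observation is that requirements~\ref{req:RBDS_ERBDS} and~\ref{req:SERBDS_RBDS} of that lemma already pin down exactly when $G'$ has an \ERBDS, once one notes that every \ERBDS of $G'$ is in particular a \SERBDS of $G'$ with respect to $V$ (being exact on \emph{all} blue vertices is stronger than being exact only outside $V$, and exact domination implies domination). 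Chaining the implications: if some $X_{i,j}$ has a \RBDS of size at most $k$, then $G'$ has an \ERBDS by requirement~\ref{req:RBDS_ERBDS}; conversely, an \ERBDS of $G'$ is a \SERBDS of $G'$ with respect to $V$, which by requirement~\ref{req:SERBDS_RBDS} forces some $X_{i,j}$ to have a \RBDS of size at most $k$. Hence $G'$ has an \ERBDS if and only if $\bigvee_{i,j}\big(X_{i,j}\text{ is a \yes-instance}\big)$, so $G'$ alone --- discarding $V$ --- is a valid \OR-composed instance of \ERBDS.

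In detail, I would first fix the polynomial equivalence relation \eqvr on encodings of \RBDS instances that puts two well-formed instances in the same class exactly when they agree on the triple $(m_R, m_B, k)$, and collects all malformed strings into a single extra class; this is a polynomial equivalence relation since $m_R$, $m_B$, $k$ are all bounded by the instance size, so each finite set $S$ is split into polynomially many classes. Given $t$ instances of \RBDS from one class, I pad $t$ up to the next perfect square by duplicating the first instance; duplicates lie in the same class and the \OR is preserved, while $t$ grows by at most a constant factor, so we may assume $\sqrt{t}\in\N$ as Lemma~\ref{lem:cross_composition_erbds} requires. Apply the lemma to obtain $G'$ (and $V$), and output $G'$ as an instance of \ERBDS with parameter $n := |R| + |B|$. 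By requirement~\ref{req:size}, $n \in \Oh\!\big(\sqrt{t}\cdot(m_R+m_B)^3\big) = t^{1/2}\cdot(\max_i|x_i|)^{\Oh(1)}$, so this is an \OR-cross-composition of cost $f(t) = \Oh(\sqrt{t}) = t^{1/2+o(1)}$, i.e.\ a degree-$2$ cross-composition in the terminology of the paper; correctness of the \OR-property is exactly the equivalence established above. (Requirement~\ref{req:degrees} of the lemma is not used here; it is stated for the later modular \rootCSP lower bounds.)

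Finally, since \RBDS is NP-hard under Karp reductions~\cite{DomLS14,Karp72} and we have exhibited a degree-$2$ cross-composition into \ERBDS, Theorem~\ref{thm:cross_composition_LB} with $d=2$ shows that a generalized kernelization for \ERBDS of size $\Oh(n^{2-\varepsilon})$ would imply \containment, which proves the theorem. I do not anticipate a real obstacle: the only point needing care is the bookkeeping that an \ERBDS of $G'$ genuinely is a \SERBDS of $G'$ with respect to $V$, so that requirement~\ref{req:SERBDS_RBDS} can legitimately be applied in the backward direction; everything else is a direct assembly of Lemma~\ref{lem:cross_composition_erbds} with the cross-composition framework.
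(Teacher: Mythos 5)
Your proposal is correct and follows essentially the same route as the paper's proof: the same polynomial equivalence relation on $(m_R,m_B,k)$, padding $t$ to a perfect square, invoking Lemma~\ref{lem:cross_composition_erbds} and the observation that every \ERBDS is a \SERBDS with respect to $V$ to get the \OR-equivalence, and concluding via Theorem~\ref{thm:cross_composition_LB} with $d=2$. No gaps to report.
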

\begin{proof}
We will prove this result by giving a degree-$2$ cross-composition from \RBDS to \ERBDS. We start by giving a polynomial equivalence relation \eqvr on inputs of \RBDS. Let two instances of \RBDS be equivalent under \eqvr if they have the same number of red vertices $m_R$, the same number of blue vertices $m_B$, and the same maximum size~$k$ of a \RBDS. It is easy to check that \eqvr is a polynomial equivalence relation.

Assume we are given $t$ instances of \RBDS, labeled $X_{i,j}$ for $i,j\in[\sqrt{t}]$, from the same equivalence class of \eqvr. If the number of instances given is not a square, we duplicate one of the input instances until a square number is reached. Since this changes the number of inputs by at most a factor four, this does not influence the cross-composition.  Call the number of red vertices in every instance $m_R$, the number of blue vertices $m_B$, and the required size of the dominating set $k$. By Lemma \ref{lem:cross_composition_erbds}, we can in polynomial time construct graph $G'$ such that
\begin{itemize}
\item $|V(G)| \leq \sqrt{t}\cdot\text{poly}(m_B+m_R)$ and
\item $G'$ has an \ERBDS if and only if at least one input instance has a \RBDS. This follows from requirements \ref{req:SERBDS_RBDS} and \ref{req:RBDS_ERBDS} from Lemma \ref{lem:cross_composition_erbds}, and the fact that any \ERBDS is also a \SERBDS.
\end{itemize}
Thereby we have given a degree-$2$ cross-composition and the lower bound follows from Theorem \ref{thm:cross_composition_LB}.
\end{proof}
\end{note}

Using Theorem~\ref{thm:exact_RBDS} we provide lower bounds for constraint satisfaction problems.
It is easy to give a linear parameter transformation from \ERBDS to both \linearrootCSP and \ExactSAT, by introducing a variable for each red vertex and adding a constraint for each blue vertex such that exactly one of its neighbors is chosen in any assignment. This results in the following corollary.

\begin{corollary}\label{cor:LB:1-poly-root-csp}
The problems \ExactSAT and \linearrootCSP over~$\mathbb{Q}$, parameterized by the number of variables~$n$, do not have a generalized kernel of size $\Oh(n^{2-\varepsilon})$ for any~$\varepsilon > 0$, unless \containment.
\end{corollary}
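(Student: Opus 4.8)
The plan is to derive both lower bounds from Theorem~\ref{thm:exact_RBDS} by means of linear-parameter transformations, invoking the fact recalled in Section~\ref{sec:preliminaries} that a linear-parameter transformation from $\mathcal{Q}$ to $\mathcal{Q'}$, together with a generalized kernel of size $\Oh(k^{2-\varepsilon})$ for $\mathcal{Q'}$, would yield one of size $\Oh(k^{2-\varepsilon})$ for $\mathcal{Q}$. Since \ERBDS has no generalized kernel of size $\Oh(n^{2-\varepsilon})$ unless \containment, it suffices to give linear-parameter transformations from \ERBDS to \linearrootCSP over $\mathbb{Q}$ and to \ExactSAT.

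Given an instance $G = (R \cup B, E)$ of \ERBDS, I would introduce one $0/1$-variable $x_v$ for every red vertex $v \in R$, with the intended meaning that $x_v = 1$ if and only if $v$ is placed in the dominating set. For every blue vertex $u \in B$, add a constraint expressing that exactly one neighbor of $u$ is selected: for \linearrootCSP this is the degree-$1$ polynomial equality $\bigl(\sum_{v \in N_G(u)} x_v\bigr) - 1 = 0$ over $\mathbb{Q}$, and for \ExactSAT this is the clause consisting of the positive literals $\{x_v \mid v \in N_G(u)\}$, which is exact-satisfied precisely when exactly one of these literals is true. Note that \ExactSAT permits clauses of arbitrary length, so there is no obstruction coming from blue vertices of large degree.

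Correctness is immediate: a $0/1$-assignment $\tau$ satisfies all the constructed constraints if and only if the set $D := \{v \in R \mid \tau(x_v) = 1\}$ has exactly one vertex in the neighborhood of every blue vertex, i.e.\ if and only if $D$ is an \ERBDS of $G$; recall that \ERBDS places no bound on $|D|$, so no cardinality constraint is needed. The number of variables of the produced instance equals $|R| \leq |V(G)|$, so the parameter grows by at most a constant factor (in fact it does not grow at all), and the construction clearly runs in polynomial time, so both maps are linear-parameter transformations. There is no real obstacle here; the only points to keep straight are that the exactness of the constraint matches the ``exactly one neighbor'' requirement of \ERBDS, and that no size bound on the solution needs to be — or can be — encoded. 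Applying Theorem~\ref{thm:exact_RBDS} through the transformation then gives the claimed $\Omega(n^{2-\varepsilon})$ lower bounds for both problems.
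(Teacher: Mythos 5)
Your proposal is correct and matches the paper's argument exactly: the paper also obtains the corollary by a linear-parameter transformation from \ERBDS, introducing one variable per red vertex and, for each blue vertex, a degree-$1$ equality (respectively an exact clause over its neighbors' positive literals) enforcing that exactly one neighbor is selected, then invoking Theorem~\ref{thm:exact_RBDS}. No further comment is needed.
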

%\begin{proof}
%By Theorem~\ref{thm:exact_RBDS} and the discussion in Section~\ref{sec:preliminaries}, it suffices to give linear-parameter transformations from \ERBDS parameterized by the number of vertices to the two mentioned problems. Consider an instance $G = (R \cup B, E)$ of \ERBDS. Create a binary variable $x_r$ for each $r \in R$. For each blue vertex $b \in B$ create a clause of the form~$\bigvee _{r \in N(b)} x_r$ (to build an instance of \ExactSAT), or create a constraint~$\sum_{r \in N(b)} x_r = 1$ (to build an instance of \textsc{csp}). The resulting instance has a satisfying $0/1$-assignment if and only if~$G$ has an \ERBDS. Since the number of variables is~$|R| \leq n$, these are valid linear-parameter transformations.
%\end{proof}

\subsection{Polynomial root CSP modulo an integer}

In order to also establish a lower bound for \linearrootCSP over the integers modulo $m$, we will need the following lemma. It allows us to enforce a linear equality constraint over~$\mathbb{Q}$ using constraints over~$\mathbb{Z}/m\mathbb{Z}$, through the use of auxiliary $0/1$-dummy variables. Since~$\sum_i x_i = 1$ implies~$\sum_i x_i \equiv_m 1$, the nontrivial part is to add extra constraints which, together with~$\sum_i x_i \equiv_m 1$, also imply~$\sum_i x_i = 1$.

\begin{lemma}\label{lem:rewrite_to_mod_m}
Let $m \geq 3$ be an integer. Given a linear equality $\sum_{i \in [N]} x_i = 1$ over $\mathbb{Q}$, there exists a system $S$ of linear equalities over $\mathbb{Z}/m\mathbb{Z}$ using the variables $\{x_i\mid i\in [N]\}$ and at most $4N$ additional variables, such that
\begin{enumerate}
\item Any $0/1$-solution to the system $S$ sets exactly one of the variables $\{x_i\mid i\in[N]\}$ to $1$,\label{rewrite:solution:sets:one}
\item any assignment to $\{x_i\mid i\in[N]\}$ setting exactly one variable $x_i$ to $1$ can be extended to a $0/1$-solution of $S$, and\label{rewrite:condition:extend:to:solution}
\item $S$ can be constructed in polynomial time.\label{rewrite:condition:polytime}
\end{enumerate}
\end{lemma}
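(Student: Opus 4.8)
The plan is to encode the rational equality $\sum_{i\in[N]} x_i = 1$ by the congruence $\sum_{i\in[N]} x_i \equiv_m 1$ together with auxiliary constraints that force the integer value of $\sum_i x_i$ to be small enough that congruence to $1$ modulo $m$ pins it down exactly. Over $0/1$-variables the sum $\sum_i x_i$ is an integer in $\{0,1,\dots,N\}$; if we can additionally certify that this sum lies in $\{0,1,\dots,m-1\}$, then $\sum_i x_i \equiv_m 1$ forces $\sum_i x_i = 1$ (it cannot be $0$ since $0 \not\equiv_m 1$ for $m\ge 3$ — here we use $m\ge 3$, not $m\ge 2$). Of course $N$ may be much larger than $m$, so the real work is ``shrinking'' the potential range of the sum down below $m$ using linear congruences and dummy variables.

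The key device I would use is a chain of ``carry'' variables that repeatedly divides the sum by a base. Fix the base to be $2$ (any constant $\ge 2$ works; $2$ keeps the bookkeeping simplest). Introduce auxiliary $0/1$-variables that play the role of bits: group the $x_i$ in pairs and introduce, for each pair, a ``sum bit'' and a ``carry bit'' satisfying the congruence (over $\mathbb{Z}/m\mathbb{Z}$) $x_{2j-1} + x_{2j} \equiv_m s_j + 2 c_j$; since the left side is $0$, $1$, or $2$, over $0/1$-values this congruence already forces $(s_j,c_j)$ to be the correct binary encoding as long as $m\ge 3$ (the three possible left-hand values $0,1,2$ are distinct mod $m$, and $s_j+2c_j$ ranges over $\{0,1,2,3\}$, whose residues are also distinct mod $m$, so the only consistent $0/1$-choices are $(0,0),(1,0),(0,1)$). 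Now $\sum_i x_i = \sum_j s_j + 2\sum_j c_j$, and iterating the same pairing-and-carrying construction on the list of $s_j$'s and (separately, shifted) on the $c_j$'s, after $O(\log N)$ rounds the sum is represented as $\sum_{b} 2^b y_b$ where the $y_b$ are $0/1$ and there are only $O(\log N)$ of them. At that point I add one final congruence $\sum_b 2^b y_b \equiv_m 1$. Because there are only $O(\log N)$ bits $y_b$, the value $\sum_b 2^b y_b$ is at most $2^{O(\log N)} = \mathrm{poly}(N)$ — which unfortunately is \emph{not} automatically below $m$.

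So the naive binary chain is not quite enough, and the genuine obstacle — the step I expect to be the crux — is getting the auxiliary-variable count down to the promised $4N$ while still forcing the sum to be \emph{exactly} $1$ rather than just $\equiv_m 1$. The clean fix is to not reduce modulo an abstract base but to carry with base chosen so that each gadget's local sum is guaranteed $< m$, and to cap the recursion the moment the partial sum provably lies in $\{0,\dots,m-1\}$: since $m$ is a constant, once we are down to $O(1)$ surviving bits with small coefficients we can enforce their weighted sum lies in $\{0,1\}$ by a constant-size gadget (e.g.\ explicitly a disjunction-like set of congruences ruling out each forbidden small value, or a single congruence over $\mathbb{Z}/m'\mathbb{Z}$ for a larger constant modulus — but we must stay over $\mathbb{Z}/m\mathbb{Z}$, so instead: introduce one more dummy $z$ and the congruence ``(weighted sum) $+ (m-2)\cdot(\text{indicator it is }\ge 2) \equiv_m 1$'', handled by constantly many auxiliary $0/1$-variables). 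Tracking the variable budget: the first level introduces $\le N$ sum-bits and $\le N/2$ carry-bits; subsequent levels introduce geometrically fewer, and the geometric series $N(1 + \tfrac12 + \tfrac12 + \cdots)$ telescopes to at most $3N$ or so, leaving room within $4N$ for the $O(1)$ closing gadget. Conditions (1) and (2) of the lemma then follow by unwinding the gadget: any $0/1$-solution has all sum/carry bits forced to the true binary representation (by the mod-$m\ge 3$ argument at each gadget) and the top congruence then forces $\sum_i x_i = 1$, giving (1); conversely, given an assignment with exactly one $x_i=1$, propagate the correct bit values up the chain, which satisfies every gadget congruence and the final one, giving (2). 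Condition (3), polynomial-time constructibility, is immediate since the construction is a fixed recursion of depth $O(\log N)$ producing $O(N)$ equalities.
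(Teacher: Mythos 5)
There is a genuine gap, and it sits exactly where you flagged it: the step that converts ``$\sum_i x_i \equiv_m 1$'' into ``$\sum_i x_i = 1$''. Your carry-save adder reduces the sum to a weighted bit representation $\sum_b 2^b y_b$, but as you observe this value can still be as large as $N \gg m$, so the closing congruence does not pin the sum down. The repair you then sketch is not a construction: the constraint ``(weighted sum) $+ (m-2)\cdot(\text{indicator it is} \geq 2) \equiv_m 1$'' is not a linear equality over $\mathbb{Z}/m\mathbb{Z}$ in $0/1$-variables --- an ``indicator of $\geq 2$'' is itself a threshold condition of precisely the kind the lemma is asking you to enforce, so asserting it can be ``handled by constantly many auxiliary $0/1$-variables'' is circular unless you exhibit the gadget. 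A second concrete error: your claim that the pair gadget $x_{2j-1}+x_{2j} \equiv_m s_j + 2c_j$ forces the correct binary encoding relies on $0,1,2,3$ having distinct residues modulo $m$, which fails for $m=3$ (where $3 \equiv_3 0$, so a pair summing to $0$ admits the spurious solution $(s_j,c_j)=(1,1)$); since the lemma starts at $m \geq 3$, the forcing argument breaks in an allowed case. The variable budget (``telescopes to at most $3N$ or so'') is also left loose, though that is secondary.

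The paper's proof avoids binary representation entirely, and the missing idea is worth internalizing: you do not need to \emph{compute} the sum, only to certify that at most one $x_i$ equals $1$, and the fact that every auxiliary variable is itself $0/1$ can be used to \emph{cap} counts rather than to carry digits. Concretely, build a complete binary tree whose leaves are $x_1,\ldots,x_N$ (padded with dummy leaves), introduce a variable for every internal node, and add $d_\ell + d_r \equiv_m d$ for each internal node $d$ with children $d_\ell, d_r$. Since $d \in \{0,1\}$, $d_\ell + d_r \in \{0,1,2\}$ and $m \geq 3$, the congruence forces exact equality, and an easy induction shows each node's value equals the number of leaves in its subtree set to $1$; the root being $0/1$ then caps this number at $1$, while $\sum_i x_i \equiv_m 1$ (which you also use) guarantees at least one. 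This gives all three properties with fewer than $4N$ extra variables and with no case distinction on $m$. Your adder route could likely be salvaged (e.g.\ by forcing every bit of weight $\geq 2$ to zero via $y_b \equiv_m 0$ and arguing that spurious mod-$3$ gadget solutions only over-count), but as written the argument does not go through.
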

\begin{proof}
Given the linear equality $\sum_{i \in [N]} x_i = 1$, first of all add the equation
\[\sum_{i \in [N]} x_i \equiv_m 1\] to $S$.
 Any choice of $x_1,\ldots, x_{N}$ satisfying $\sum_{i \in [N]} x_i = 1$ also satisfies the equality modulo $m$. Furthermore, any $0/1$-assignment of $x_1,\ldots, x_{N}$ satisfying $\sum_{i \in [N]} x_i \equiv_m 1$ ensures that \emph{at least} one variable $x_i$ is set to $1$.

To ensure that \emph{at most} one of these variables is set to $1$, we add additional constraints in the following way. Construct a complete binary tree with $N' := 2^{\lceil\log N \rceil}$ leaves, implying $N \leq N' < 2N$. Identify the first $N$ leaves with variables $x_1,\ldots,x_N$ and introduce dummy variables for all other vertices. For every non-leaf $d$ in the tree with children $d_\ell$ and $d_r$, each corresponding to a unique variable, add the equation
\[d_\ell + d_r \equiv_m d.\]
It is clear that this construction can be done in polynomial time, thus Property~\ref{rewrite:condition:polytime} holds. To show that Properties~\ref{rewrite:solution:sets:one} and~\ref{rewrite:condition:extend:to:solution} hold, we prove the following claim.
\begin{note}
\begin{claim}\label{claim:at_most_one_leaf}
Let a $0/1$-assignment satisfying all equalities in $S$ be given. The value assigned to any variable $x$ corresponds to the number of leaves in the subtree rooted in $x$ that are assigned value~$1$.
\end{claim}
\begin{claimproof}
We prove this by induction on the height of the tree rooted in $x$. If $x$ is a leaf, the result is obvious. Suppose the tree has height larger than one and let $x_\ell$ and $x_r$ be the left and right child of $x$. By the induction hypothesis, the values of $x_\ell$ and $x_r$ correspond to the number of leaves in the left (respectively, right) subtree that were assigned $1$. Since $x_\ell,x_r \in \{0,1\}$ and $x \equiv_m x_\ell + x_r$ with $m > 2$, the result follows.
\end{claimproof}

Suppose we are given any $0/1$-assignment satisfying all equalities in $S$. Hence the variable corresponding to the root $r$ of the binary tree has value~$0$ or~$1$. By Claim \ref{claim:at_most_one_leaf}, it follows that the number of leaves (and thus the number of variables in $\{x_1,\ldots,x_N\}$) that are assigned the value $1$ is at most one. As we have seen earlier, at least one variable $x_i$ is set to $1$, to fulfill $\sum_{i \in [N]} x_i \equiv_m 1$, and thus $\sum_{i \in [N]} x_i = 1$. Hence Property~\ref{rewrite:solution:sets:one} holds.
\end{note}

Given a $0/1$-assignment to $x_1,\ldots,x_N$ such that $\sum_{i \in [N]} x_i = 1$, it can be extended to a satisfying assignment of $S$ by setting all dummy leaves to $0$. For every other dummy vertex, let its value be the number of variables corresponding to leaves in its subtree, that are set to $1$. Note that this number is always either~$0$ or~$1$ since there is only one leaf whose corresponding variable is set to~$1$. Therefore Property~\ref{rewrite:condition:extend:to:solution} holds as well.
\end{proof}

For $m=2$, an input to the problem \linearrootCSP over the integers mod $m$ only consists of linear equations over the two-element field $\{0,1\}$ and is thus polynomial time solvable by Schaefer's dichotomy theorem \cite[Theorem 2.1]{Schaefer78}. For larger moduli, we use Lemma~\ref{lem:rewrite_to_mod_m} to prove the following result.

\begin{note}
\begin{theorem}\label{cor:LB:1-poly-root-csp-modm}
Let $m \geq 3$ be an integer. The problem \linearrootCSP over~$\mathbb{Z}/m\mathbb{Z}$, parameterized by the number of variables~$n$, does not have a generalized kernel of size $\Oh(n^{2-\varepsilon})$ for any~$\varepsilon > 0$, unless \containment.
\end{theorem}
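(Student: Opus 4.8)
The plan is to give a degree-$2$ cross-composition from \RBDS into \linearrootCSP over $\mathbb{Z}/m\mathbb{Z}$ and then invoke Theorem~\ref{thm:cross_composition_LB}; since \RBDS is NP-hard, this yields the claimed bound. I would reuse the apparatus behind Theorem~\ref{thm:exact_RBDS}: take the same polynomial equivalence relation (two \RBDS instances are equivalent if they agree on the number $m_R$ of red vertices, the number $m_B$ of blue vertices, and the solution size $k$), pad the number of instances to a perfect square, and feed the $t$ equivalent instances $X_{i,j}$ into the construction of Lemma~\ref{lem:cross_composition_erbds} to obtain the bipartite graph $G'$ with red part $R$, blue part $B$, and distinguished subset $V \subseteq B$.

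The second step is to encode $G'$ as a \linearrootCSP instance over $\mathbb{Z}/m\mathbb{Z}$. Introduce a $0/1$-variable $x_v$ for every red vertex $v \in R$, meaning ``$v$ is in the solution''. For each blue vertex $b \in B \setminus V$ the condition that $b$ is dominated \emph{exactly once} is the rational equality $\sum_{v \in N(b)} x_v = 1$; apply Lemma~\ref{lem:rewrite_to_mod_m} (which is why $m \geq 3$ is assumed) to replace it by an equivalent system of linear congruences over $\mathbb{Z}/m\mathbb{Z}$ on $\{x_v : v \in N(b)\}$ using at most $4\deg(b)$ fresh auxiliary variables. For each blue vertex $b \in V$ simply add the single congruence $\sum_{v \in N(b)} x_v \equiv_m 1$, with no auxiliary variables. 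All constraints have degree $1$. Correctness: any $0/1$-solution of the CSP, restricted to $\{x_v\}_{v \in R}$, dominates every $b \in B \setminus V$ exactly once (property~\ref{rewrite:solution:sets:one} of Lemma~\ref{lem:rewrite_to_mod_m}) and every $b \in V$ at least once (since $0 \not\equiv_m 1$), hence is a \SERBDS of $G'$ with respect to $V$; by requirement~\ref{req:SERBDS_RBDS} of Lemma~\ref{lem:cross_composition_erbds} some $X_{i,j}$ then has a \RBDS of size $k$. Conversely, if some $X_{i,j}$ has one, requirement~\ref{req:RBDS_ERBDS} gives an \ERBDS $E$ of $G'$; putting $x_v = 1$ iff $v \in E$ makes every sum $\sum_{v \in N(b)} x_v$ equal to $1$, so the congruences on $V$ hold and, by property~\ref{rewrite:condition:extend:to:solution} of Lemma~\ref{lem:rewrite_to_mod_m}, each system for a $b \in B \setminus V$ extends on its private auxiliary variables to a full satisfying assignment. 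Thus the CSP is a \yes-instance iff some input is.

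It remains to check that the parameter --- the number $n$ of variables --- is $\Oh(\sqrt t \cdot (\max_i|x_i|)^c)$ for a constant $c$, which is the cost of a degree-$2$ cross-composition. By requirement~\ref{req:size} there are $|R| = \Oh(\sqrt t \cdot (m_R + m_B)^{3})$ red-vertex variables, and the number of auxiliary variables is $4\sum_{b \in B \setminus V}\deg(b)$. Here requirement~\ref{req:degrees} is the crucial ingredient: all but $k + 2$ vertices of $B \setminus V$ have degree at most $m_R + k + 2$, whereas the remaining $k + 2$ vertices have degree at most $|R|$; since we may assume $k \leq m_R$, this gives $\sum_{b \in B \setminus V}\deg(b) = \Oh(\sqrt t \cdot (m_R + m_B)^{\Oh(1)})$. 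As $m_R, m_B \leq \max_i|x_i|$, we obtain $n = \Oh(\sqrt t \cdot (\max_i|x_i|)^{\Oh(1)})$, so Theorem~\ref{thm:cross_composition_LB} rules out generalized kernels of size $\Oh(n^{2-\varepsilon})$ unless \containment.

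The step I expect to be the genuine obstacle is exactly this variable count. A direct linear-parameter transformation from \ERBDS --- which sufficed over $\mathbb{Q}$ in Corollary~\ref{cor:LB:1-poly-root-csp} --- breaks over $\mathbb{Z}/m\mathbb{Z}$, because simulating a single ``exactly one neighbor chosen'' constraint through Lemma~\ref{lem:rewrite_to_mod_m} now costs auxiliary variables proportional to the degree of the corresponding blue vertex, and $\sum_b \deg(b)$ can be quadratic in $|V(G')|$, which would only rule out $\Oh(n^{4-\varepsilon})$-size kernels. Two features rescue the argument: the cross-composition graph of Lemma~\ref{lem:cross_composition_erbds} was built to have only $k + 2$ high-degree blue vertices outside $V$ (requirement~\ref{req:degrees}), and the high-degree vertices in $V$ need not be simulated faithfully --- a single auxiliary-free congruence $\sum x_v \equiv_m 1$ already forces them to be dominated, which is enough precisely because a \SERBDS may dominate vertices of $V$ more than once. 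The rest is bookkeeping.
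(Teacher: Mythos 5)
Your proposal is correct and follows essentially the same route as the paper's proof: a degree-$2$ cross-composition from \RBDS built on Lemma~\ref{lem:cross_composition_erbds}, encoding blue vertices in $V$ by a single congruence and those outside $V$ via Lemma~\ref{lem:rewrite_to_mod_m}, with the variable count controlled exactly as in the paper by requirement~\ref{req:degrees} and $k \leq m_R$. You also correctly identified the variable-count issue as the reason a direct transformation from \ERBDS (as over $\mathbb{Q}$) does not suffice here.
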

\begin{proof}
We will use the graph constructed in Lemma \ref{lem:cross_composition_erbds}, by transforming the constructed instance $G'$ of (\textsc{semi})-\ERBDS of size $\Oh(\sqrt{t}\cdot\text{poly}(m_R+m_B))$ to an instance $I$ of  \linearrootCSP over~$\mathbb{Z}/m\mathbb{Z}$ with $\Oh(\sqrt{t}\cdot\text{poly}(m_R + m_B))$ variables. In this way we obtain a degree-$2$ cross-composition from \RBDS to \linearrootCSP over~$\mathbb{Z}/m\mathbb{Z}$, proving the lower bound.

Suppose we are given $t$ instances of \RBDS, such that $\sqrt{t}$ is integer and such that every instance has $m_B$ blue vertices and $m_R$ red vertices and asks for a \RBDS of size $k\leq m_R$. This can be assumed by choosing an appropriate polynomial equivalence relation. Apply Lemma \ref{lem:cross_composition_erbds} to obtain graph $G'$ and $V \subseteq V(G')$. By requirements \ref{req:RBDS_ERBDS} and \ref{req:SERBDS_RBDS} of Lemma \ref{lem:cross_composition_erbds}, it is sufficient to ensure that $G'$ has a \SERBDS with respect to $V$ if $I$ is satisfiable, and that~$I$ is satisfiable if $G$ has an \ERBDS, to obtain the cross-composition.

Recall that a \SERBDS of $G'$ with respect to $V$ contains at least one neighbor of each blue vertex, and contains \emph{exactly} one neighbor of each blue vertex in $V(G') \setminus V$.

 First of all introduce a variable~$v_r$ for every red vertex~$r$ in $G'$. For every blue vertex $b$, we add the following equation to ensure that it has at least one neighbor in the \SERBDS:
\begin{equation}\label{eqn:at_leat_one_nb}
\sum_{r \in N_{G'}(b)} v_r \equiv_m 1.
\end{equation}

For every blue vertex $b \notin V$, we add a number of linear equations that ensure $b$ has exactly one neighbor in a \SERBDS, using at most $4\cdot|N_{G'}(b)|$ additional variables. This is done by applying Lemma \ref{lem:rewrite_to_mod_m} to the equation $\sum_{r \in N_{G'}(b)} v_r = 1$.

This completes the construction. If~$G'$ has an \ERBDS, then $I$ can be satisfied by setting the variables corresponding to the \ERBDS to $1$ and all other variables corresponding to vertices to $0$. The dummy variables can then be chosen in such a way that all equations are satisfied according to Lemma \ref{lem:rewrite_to_mod_m}.

For the opposite direction, suppose $I$ has a satisfying assignment. Define set $Y$ to contain the vertices whose corresponding variable is set to $1$. From Equation~(\ref{eqn:at_leat_one_nb}) it follows that every blue vertex has at least one neighbor in the set $Y$. Furthermore every blue vertex not in $V$ has exactly one neighbor in  $Y$ by Lemma \ref{lem:rewrite_to_mod_m}.  It follows that $Y$ is a \SERBDS of $G'$.

It remains to bound the number of used variables. The key idea is that we have only few variables outside of $V$ whose corresponding vertex has a large neighborhood, and for which the number of dummy variables added depends on $\sqrt{t}$. Furthermore there are many variables whose corresponding vertices have small neighborhoods, with size depending only on  $m_B + m_R$. %More precisely, we add one variable for each red vertex and a number of dummy variables for some of the blue vertices not in $V$.  %For every vertex $x \in C \cup \{d_i^\ell \mid \ell \in [\sqrt{t}], i \in [k]\}$,  the size of the neighborhood is independent of $t$  and we add at most $4\cdot|N_{G'}(x)|$ per variable, resulting in $O(\sqrt{t}\cdot\text{poly}(m_{B}+m_{R}))$ additional variables in total. Finally for each vertex $x \in \{s,s',w_1,\ldots,w_k\}$ we add at most $4\cdot|N_{G'}(x)| = \Oh(\sqrt{t}\cdot\text{poly}(m_{B} + m_{R}))$ new variables.
Note that the degree of any vertex, and the total number of vertices, is bounded by the order of the graph~$\Oh (\sqrt{t} \cdot (m_R + m_B)^3)$.

For every blue vertex in $V(G)\setminus V$ with a degree larger than $m_R + k + 2$ we add $\Oh(\sqrt{t} (m_B+m_R)^3)$ dummy variables. By requirement \ref{req:degrees} of Lemma \ref{lem:cross_composition_erbds}, there are at most $2$ such vertices. Furthermore for any vertex with a degree smaller than $m_R + k + 2$ we add $\Oh(m_R + k)$ dummy vertices. This together results in using $\Oh(\sqrt{t}\cdot \text{poly}((m_{B} + m_{R})))$ variables, which is properly bounded for a degree-$2$ cross-composition.
\end{proof}
\end{note}

We now generalize this result to polynomial equalities of higher degree.
\begin{theorem} \label{thm:drootcspmod:lb}
Let $m \geq 2$, and $d \geq 2$ be integers. The problem \rootCSP over~$\mathbb{Z}/m\mathbb{Z}$ and \rootCSP over $\mathbb{Q}$ parameterized by the number of variables $n$ do not have a generalized kernel of size $\Oh(n^{d+1-\varepsilon})$ for any~$\varepsilon > 0$, unless \containment.
\end{theorem}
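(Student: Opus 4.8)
The plan is to prove both lower bounds simultaneously by a degree-$d$ cross-composition from \RBDS. The base case $d = 2$ is exactly Corollary~\ref{cor:LB:1-poly-root-csp} (for $\mathbb{Q}$) and Theorem~\ref{cor:LB:1-poly-root-csp-modm} (for $\mathbb{Z}/m\mathbb{Z}$), so we may think of this as a bootstrapping construction that trades one dimension of the cross-composition grid for one unit of polynomial degree. Concretely, I would start from $t$ instances of \RBDS partitioned (via a suitable polynomial equivalence relation fixing $m_R$, $m_B$, and $k$) and index them as $X_{i_1, \ldots, i_d}$ with each $i_j \in [t^{1/d}]$; the goal is to produce a single instance of \rootCSP of degree~$d$ on $\Ohtilde(t^{1/d} \cdot \mathrm{poly}(m_R + m_B))$ variables that is satisfiable iff some $X_{\vect{i}}$ is a \yes-instance. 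By Theorem~\ref{thm:cross_composition_LB} this rules out generalized kernels of size $\Oh(n^{d+1-\varepsilon})$.

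The construction reuses the gadget machinery of Lemma~\ref{lem:cross_composition_erbds} for two of the $d$ index-coordinates, handling the remaining $d-2$ coordinates through the extra polynomial degree. The idea is as follows. As in Lemma~\ref{lem:cross_composition_erbds}, introduce a variable $v_r$ for each red vertex of the graph built from the "inner" $2$-dimensional slice, and for each of the remaining $d-2$ coordinates introduce a fresh family of $0/1$-selector variables $y^{(\ell)}_1, \ldots, y^{(\ell)}_{t^{1/d}}$ together with a linear "exactly-one" constraint $\sum_{a} y^{(\ell)}_a = 1$ (over $\mathbb{Q}$, or enforced modulo $m$ via Lemma~\ref{lem:rewrite_to_mod_m} in the modular case). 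Then, for each blue vertex $b$ whose neighbourhood encodes edges of the input instances, replace the degree-$1$ domination constraint by the product $\bigl(\sum_{r \in N(b)} v_r - 1\bigr) \cdot \prod_{\ell=1}^{d-2} (\text{indicator that coordinate }\ell\text{ is selected consistently with }b) = 0$; each indicator is itself a linear form in the $y$-variables, so the whole constraint has degree $1 + (d-2) = d-1$, and an extra linear factor can be inserted to reach degree exactly~$d$ if desired. The effect is that a domination constraint is "switched off" unless all $d-2$ outer coordinates point to the slice containing $b$, so only the constraints of one chosen instance $X_{\vect{i}}$ remain active; the inner $2$-dimensional argument of Lemma~\ref{lem:cross_composition_erbds} then finishes the job exactly as in Theorem~\ref{cor:LB:1-poly-root-csp-modm}. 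The number of variables is $\Oh(t^{1/d} \cdot \mathrm{poly}(m_R + m_B))$ (the $t^{2/d}$-blowup of the inner slice is absorbed because we only instantiate it once, indexed by the $y$-selectors), which meets the degree-$d$ cross-composition budget $t^{1/d}\log t$.

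The main obstacle I anticipate is the modular case: over $\mathbb{Z}/m\mathbb{Z}$ with composite $m$, a product of linear forms being zero does \emph{not} force one of the factors to be zero (zero divisors), so the "switching-off" argument must be set up carefully — e.g. by using the $0/1$-restricted selector variables and ensuring each factor takes values in $\{0,1\}$ (or in a set avoiding zero divisors) on all valid assignments, and by appealing to Lemma~\ref{lem:rewrite_to_mod_m} to force genuine $\{0,1\}$ behaviour of the selector variables rather than mere congruence. A second, more routine point is bounding the number of auxiliary (dummy) variables: the few blue vertices of large degree (the $\Oh(k)$ exceptional vertices guaranteed by requirement~\ref{req:degrees} of Lemma~\ref{lem:cross_composition_erbds}) each contribute $\Oh(t^{1/d}\,\mathrm{poly}(m_R+m_B))$ dummies, while all the many low-degree blue vertices contribute only $\Oh(\mathrm{poly}(m_R+m_B))$ each, so the total stays within budget — this is the same accounting as in the proof of Theorem~\ref{cor:LB:1-poly-root-csp-modm}, now repeated across the $d-2$ extra selector families. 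Once the degree-$d$ cross-composition is verified, NP-hardness of \RBDS together with Theorem~\ref{thm:cross_composition_LB} yields the claimed lower bound under \ncontainment.
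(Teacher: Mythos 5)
There is a genuine gap, and it is quantitative rather than structural: your cross-composition degree is off by one, so the construction as described cannot deliver the claimed bound. By Theorem~\ref{thm:cross_composition_LB} (and the paper's convention that a degree-$d'$ cross-composition has cost $t^{1/d'}\log t$), a composition whose output parameter is $n \in \Oh(t^{1/d}\cdot\mathrm{poly}(m_R+m_B))$ only rules out generalized kernels of size $\Oh(n^{d-\varepsilon})$; to exclude $\Oh(n^{d+1-\varepsilon})$ you need cost $t^{1/(d+1)+o(1)}$, i.e.\ a degree-$(d+1)$ cross-composition. Your $d$-dimensional grid $X_{i_1,\ldots,i_d}$ with $i_j \in [t^{1/d}]$ therefore proves at best the weaker $\Oh(n^{d-\varepsilon})$ lower bound, even if every detail works out. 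The symptom is already visible in your own degree count: with $d-2$ selector families the gated constraints have degree $d-1$, and you propose to pad with a dummy linear factor; that spare unit of degree should instead buy one more grid dimension. This is exactly what the paper does: it takes $t = r^{d+1}$ inputs, splits them into $r^{d-1}$ groups of size $r^2$, applies Lemma~\ref{lem:cross_composition_erbds} to each group while sharing the red-vertex variables across all resulting graphs (legitimate because the produced graphs are label-consistent and, by Observation~\ref{obs:same_nbhood}, blue vertices outside $V$ have instance-independent neighborhoods), and introduces $d-1$ selector families $Y_1,\ldots,Y_{d-1}$ of size $r$, so that the gated constraints $\bigl(\sum_{x \in N(b)} v_x\bigr)\cdot\prod_{z} y^z_{i_z} \equiv_m \prod_{z} y^z_{i_z}$ have degree exactly $d$ and the number of variables is $\Oh(t^{1/(d+1)}(m_R+m_B)^3)$, as a degree-$(d+1)$ cross-composition requires.

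Two further points. First, the zero-divisor obstacle you anticipate over $\mathbb{Z}/m\mathbb{Z}$ is not actually there: assignments are $0/1$ by the definition of the CSP, so every selector product takes values in $\{0,1\}$, and when it equals $1$ the gated constraint collapses to the plain domination constraint; no appeal to Lemma~\ref{lem:rewrite_to_mod_m} is needed for the selectors, and indeed ``at least one selector per family is $1$'' (which $\sum_j y^i_j \equiv_m 1$ already forces) suffices for the reverse direction. Second, enforcing exactness for blue vertices outside $V$ via Lemma~\ref{lem:rewrite_to_mod_m}, as in Theorem~\ref{cor:LB:1-poly-root-csp-modm}, only works for $m \geq 3$, whereas the statement covers all $m \geq 2$. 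Since $d \geq 2$ is available here, the paper instead enforces ``at most one neighbour'' by the ungated degree-$2$ constraints $v_x \cdot v_{x'} \equiv_m 0$ for distinct $x,x' \in N(b)$, which work for every modulus (and over $\mathbb{Q}$), require no dummy variables, and make the careful degree accounting of requirement~\ref{req:degrees} of Lemma~\ref{lem:cross_composition_erbds} unnecessary in this theorem. Incorporating the extra grid dimension and this exactness gadget essentially turns your sketch into the paper's proof.
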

\begin{proof}\begin{note}
We will only provide the proof over $\mathbb{Z}/m\mathbb{Z}$, the result for \rootCSP over $\mathbb{Q}$ can be obtained in the same way (using the same equations without the moduli). Let $m \geq 2, d \geq 2$ be given. The result will be proven by a degree-$(d+1)$ cross-composition from \RBDS, using Lemma \ref{lem:cross_composition_erbds}. Suppose we are given $t = r^{d+1}$ instances of \RBDS, all having~$m_R$ red vertices,~$m_B$ blue vertices, and the same target size~$k$. By a similar padding argument as before, we may assume~$r$ is an integer. Split the inputs into $r^{d-1}$ groups of size~$r^2$ each and apply the algorithm given by Lemma \ref{lem:cross_composition_erbds} to each group. We obtain $r^{d-1}$ instances of \problem{(semi)-\ERBDS} with $\Oh(r \cdot (m_R+m_B)^3)$ vertices each, such that the answer to each composed instance is the logical \OR of the answers to the \RBDS instances in its group. Label the instances resulting from the group compositions~$X_{i_1,\ldots, i_{d-1}}$ with $i_1,\ldots,i_{d-1} \in [r]$. Let instance $X_{i_1,\ldots,i_{d-1}}$ have graph $G_{i_1,\ldots,i_{d-1}}$ and let the set on which the \RBDS is not required to be exact be~$V_{i_1, \ldots, i_{d-1}}$. All produced graphs have the same number of red and blue vertices; let the number of red vertices in each graph be~$N \leq |V(G_{i_1, \ldots, i_{d-1}})| \leq \Oh(r \cdot (m_R + m_B)^3)$. We create~$N$ new variables and identify each red vertex $x$ with one variable $v_x$. It is essential for the remaining part of this proof that vertices from the produced \problem{(semi)-\ERBDS} instances that had the same label are mapped to the same new variable and vice versa. Since the set~$V_{i_1, \ldots, v_{d-1}}$ that is produced by Lemma~\ref{lem:cross_composition_erbds} does not depend on the structure of the input graphs, only on their size, all produced graphs have the same labeled vertices in the set~$V_{i_1, \ldots, i_{d-1}}$. Hence we can treat it as a single set~$V$ of vertex labels. Create an instance for \rootCSP as follows.

\begin{enumerate}
\item Add sets $Y_1,\ldots, Y_{d-1}$ of~$r$ variables each, where $Y_i := \{y^i_j\mid j\in[r]\}$. Add the requirement
$\sum_{j \in [r]} y^i_j \equiv_m 1$ to $L'$ for each $i \in [d-1]$.
\item \label{step:add_eq_mod} Consider each graph $G_{i_1,\ldots, i_{d-1}}$ for~$i_1, \ldots, i_{d-1} \in [r]$. For each blue vertex $b$ in this instance, add the following equation to $L'$:
\begin{equation}\label{eqn:nb-hood}    \left( \sum_{x \in N_{G_{i_1, \ldots, i_{d-1}}}(b)} v_x\right)\cdot\prod_{z \in [d-1]} y^z_{i_z}  \equiv_m \prod_{z \in [d-1]} y^z_{i_z}.\end{equation}
Furthermore, if $b$ is not an element of $V$, then for every pair of distinct vertices $x,x' \in N(b)$ add the following constraint to~$L'$:
\begin{equation}\label{eqn:mult_two_nb_is_zero}v_x\cdot v_{x'} \equiv_m 0.\end{equation}
Note that, by Observation~\ref{obs:same_nbhood}, the neighborhood of a blue vertex~$b \not \in V$ does not depend on the graphs to which Lemma~\ref{lem:cross_composition_erbds} is applied, but only on the number of red and blue vertices and the target size of the \RBDS. As these are identical for all applications of the lemma, it does not matter in which of the graphs we evaluate~$N(b)$ when finding relevant pairs~$x,x'$.
\end{enumerate}
The polynomial equalities have degree~$\leq d$ as $d$ is at least two. The number of variables, which is the parameter of the CSP, is suitably bounded for a degree-$(d+1)$ cross-composition:
\[N + (d-1) \cdot r \in \Oh(r \cdot (d + (m_R + m_B)^3)) = \Oh(t^{1/(d+1)} (m_R + m_B)^3).\]

As the construction can easily be performed in polynomial time, it remains to show that the constraints in~$L'$ can be satisfied if and only if one of the input instances of \RBDS has a solution of size~$k$. First assume that some input instance or \RBDS indeed has a solution of size~$k$. Consider the indices~$i_1, \ldots, i_{d-1}$ of the group containing the satisfiable \RBDS instance. Then Lemma~\ref{lem:cross_composition_erbds} ensures that~$G_{i_1, \ldots, i_{d-1}}$ has an \ERBDS. Set the variables corresponding to vertices in the \ERBDS of $G_{i_1,\ldots,i_{d-1}}$ to $1$ and the others to~$0$. Furthermore, set variables $y^z_{i_z}$ for $z\in[d-1]$ to~$1$. Set all other variables to~$0$. Thereby the sum of variables in each set $Y_i$ is~$1$, as required. Furthermore, each equation defined by (\ref{eqn:nb-hood}) is satisfied in the following way. If it was defined for $X_{i_1,\ldots,i_{d-1}}$, it is satisfied since the large summation equals one (exactly one neighbor is in the exact dominating set) and the product term is one on both sides. Equations belonging to any other instance are trivially satisfied since their term~$\prod_z y^z_{\cdot}$ is zero on both sides. It remains to show that the equations defined by (\ref{eqn:mult_two_nb_is_zero}) are satisfied. This is follows from Observation \ref{obs:same_nbhood} and the fact that an \ERBDS contains at most one neighbor of each blue vertex.

For the reverse direction, suppose the constraints in $L'$ are satisfied by some $0/1$-assignment to the variables. Then from each set $Y_i$ with $i \in [d-1]$, at least one variable is set to $1$. So suppose variables $y^z_{i_z}$ are set to $1$ for $z \in [d-1], i_z \in [r]$. We show instance $X_{i_1,\ldots,i_{d-1}}$ has a \SERBDS wrt.~$V$ consisting of the vertices whose corresponding variable is set to $1$. Since the product~$\prod_{z \in [d-1]} y^z_{i_z}$ is~$1$ on both sides of the equations defined by (\ref{eqn:nb-hood}) for~$G_{i_1, \ldots, i_{d-1}}$, for each blue vertex~$b$ in the graph we have:
\begin{align*}
\sum_{x \in N_{G_{i_1, \ldots, i_{d-1}}}(b)} v_x \equiv_m 1
\end{align*}
implying all blue vertices have at least one neighbor in the \SERBDS. Furthermore if $x \notin V$, we know that it has at most one neighbor in the \SERBDS since the multiplication of any two of its neighbors yields zero by (\ref{eqn:mult_two_nb_is_zero}). Hence~$G_{i_1, \ldots, i_{d-1}}$ has an \SERBDS wrt.~$V$. By Lemma~\ref{lem:cross_composition_erbds}, this implies the group of \RBDS instances from which it was constructed contained a satisfiable instance. Hence there was a \yes-instance among the inputs of the cross-composition.
\end{note}
\end{proof}

Observe that the polynomials constructed in Theorem~\ref{thm:drootcspmod:lb} have a simple form: each polynomial is a product of $(d-1)$ $Y$-variables multiplied by a sum of variables corresponding to red vertices, or simply a multiplication of two variables corresponding to red vertices. Each polynomial can therefore be encoded in~$\Ohtilde(n)$ bits, where~$n$ is the number of variables in the constructed CSP. The sparsification of Theorem~\ref{thm:kernel} therefore encodes such instances in~$\Ohtilde(n^{d+1})$ bits. The lower bound shows that this is optimal up to~$n^{o(1)}$ factors.

\subsection{Polynomial non-root CSP} \label{sec:nonroot:lb}

We start our lower bound discussion for \nonrootCSP by considering polynomials over the rationals. Using existing kernel lower bounds for \textsc{CNF-Satisfiability} parameterized by the number of variables, we first show that \linearNonrootCSP over~$\mathbb{Q}$ does not have a generalized kernel of size bounded by any polynomial in~$n$, unless \containment.

\begin{theorem} \label{thm:nonroot:q:lb}
\linearNonrootCSP over~$\mathbb{Q}$ parameterized by the number of variables $n$ does not have a generalized kernel of polynomial size unless \containment.
\end{theorem}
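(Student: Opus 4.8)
The plan is to prove a stronger non-compressibility statement: \linearNonrootCSP over $\mathbb{Q}$ does not admit a polynomial generalized kernel, by a polynomial-parameter transformation (or a degree-$d$ cross-composition for every $d$) from \dSAT, invoking Theorem~\ref{thm:lower_bound:CNF}. The key observation is that a single linear inequality $f(\vect{x}) \neq 0$ over $\mathbb{Q}$ can encode an arbitrary disjunction, because over the rationals a $0/1$-assignment can make a weighted sum of literals nonzero in many ways. Concretely, I would represent a clause $\ell_{i_1} \vee \ell_{i_2} \vee \cdots \vee \ell_{i_d}$ (where each $\ell_{i_j}$ is $x_{i_j}$ or $1-x_{i_j}$) by the inequality $\sum_{j=1}^{d} 2^{j-1} \ell_{i_j} \neq 0$: the left-hand side is a sum of distinct powers of two with $0/1$ coefficients, hence it equals $0$ if and only if every $\ell_{i_j}$ evaluates to $0$, i.e. if and only if the clause is falsified. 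Thus the clause is satisfied precisely when the inequality holds. This is a degree-$1$ polynomial, and it uses only the original $n$ variables with no auxiliary variables.

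The steps would be as follows. First I would fix an integer $d \geq 3$ and take an arbitrary instance $\varphi$ of \dSAT on variable set $\{x_1,\ldots,x_n\}$. Second, for each clause $C$ of $\varphi$ I would build the linear polynomial $f_C(\vect{x})$ by summing $2^{j-1}$ times the literal term $t_j$ (where $t_j = x_{i_j}$ for a positive occurrence and $t_j = 1 - x_{i_j}$ for a negative one), and add the inequality $f_C(\vect{x}) \neq 0$ to the list $L$. Third I would argue equivalence: a $0/1$-assignment $\tau$ satisfies all inequalities in $L$ if and only if it satisfies $\varphi$, using the binary-representation uniqueness argument above for each clause. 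Fourth, since the number of variables is unchanged, this is a polynomial-parameter transformation from \dSAT parameterized by $n$ to \linearNonrootCSP over $\mathbb{Q}$ parameterized by $n$; combined with Theorem~\ref{thm:lower_bound:CNF} (which rules out kernels of size $\Oh(n^{d-\varepsilon})$ for \dSAT for every $d \geq 3$) and the transitivity of kernel lower bounds under linear-parameter transformations, this rules out a generalized kernel of size $\Oh(n^{d-\varepsilon})$ for \linearNonrootCSP over $\mathbb{Q}$ for every fixed $d$ and every $\varepsilon > 0$. Since $d$ can be taken arbitrarily large, no polynomial-size generalized kernel can exist unless \containment.

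The part requiring the most care is making sure the coefficients used to encode a clause do not blow up the instance size beyond what a linear-parameter transformation allows, and confirming that \linearNonrootCSP as defined does not implicitly bound clause arity or coefficient magnitude in a way that would obstruct the construction — but since clauses of \dSAT have at most $d$ literals, the coefficients $2^{0},\ldots,2^{d-1}$ are constants for fixed $d$, so each inequality has $\Oh(\log n)$-bit description and the whole transformation is polynomial-time with the parameter (number of variables) preserved exactly. A secondary subtlety is that the argument must quantify over all $d$ simultaneously: a single fixed transformation gives only a fixed polynomial lower bound, so I would phrase it as ``for every $d \geq 3$ there is a transformation,'' and then conclude the absence of \emph{any} polynomial kernel. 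No genuine mathematical obstacle arises here; the whole content is the binary-weighting trick that exploits the freedom of working over $\mathbb{Q}$ rather than a small finite field, which is precisely why the finite-field case (Theorem~\ref{thm:UB:non-root}) behaves so differently.
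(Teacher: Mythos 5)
Your proposal is correct, and its core is the same translation the paper uses: a clause becomes a degree-$1$ inequality over $\mathbb{Q}$ on the original variables, so the parameter is preserved and SAT lower bounds transfer. Two differences are worth noting. First, your powers-of-two weights $2^{j-1}$ are unnecessary: since every literal term evaluates to $0$ or $1$ and all coefficients are positive, the plain unit-coefficient sum $\sum_j t_j$ is already zero exactly when every literal is falsified; this is what the paper does, and it removes your (correctly resolved, but superfluous) worry about coefficient magnitude. Second, the paper reduces in a single step from \textsc{CNF-Satisfiability} with \emph{unbounded} clause length parameterized by the number of variables, citing the known result that this problem admits no polynomial-size generalized kernel unless \containment; the unit-coefficient encoding handles arbitrarily long clauses with no change. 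You instead give one transformation per fixed $d\geq 3$ from \dSAT and invoke Theorem~\ref{thm:lower_bound:CNF}, then quantify over all $d$ to exclude every polynomial size bound. That quantification is logically sound (a hypothetical kernel of size $\Oh(n^c)$ would contradict the $\Oh(n^{d-\varepsilon})$ bound for any $d>c$), so your route works; the paper's single reduction is just a bit more economical, reaching the superpolynomial conclusion in one shot rather than via a family of reductions.
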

\begin{proof}
We present a linear-parameter transformation from \textsc{CNF-Satisfiability} with unbounded clause length parameterized by the number of variables. Existing results~\cite{SatisfiabilityDell14,FortnowS11} imply that this problem does not have a generalized kernel of polynomial size. The linear-parameter transformation will transfer this lower bound to \linearNonrootCSP over~$\mathbb{Q}$.

A clause in conjunctive normal form can directly be translated into a non-root constraint of a degree-1 polynomial over~$\mathbb{Q}$. For example, the clause $(x_1 \vee \neg x_3 \vee x_4)$ is satisfied by a $0/1$-assignment if and only if~$x_1 + (1 - x_3) + x_4 \neq 0$ over~$\mathbb{Q}$. More generally, a clause~$(x_{i_1} \vee \ldots \vee x_{i_k} \vee \neg x_{i_{k+1}} \vee \ldots \vee \neg x_{i_\ell})$ translates into the constraint~$(\sum _{j=1}^k x_{i_j}) + (\sum _{j=k+1}^\ell (1-x_{i_j})) \neq 0$. Hence the system of inequalities derived by transforming all clauses in a CNF formula is satisfiable if and only if the formula is. As the number of variables is preserved by this transformation, the theorem follows.
\end{proof}

We now turn our attention to \nonrootCSP over finite rings and fields. In Theorem \ref{thm:UB:non-root} we provided a kernel for \nonrootCSP over \primefield for primes~$p$. It is natural to ask whether similar results can be obtained when working with polynomials modulo an arbitrary integer~$m$. When~$m$ is composite, our kernelization fails. We can show that this is not a shortcoming of our proof strategy, but a necessity due to the fact that constraints expressed by degree-$d$ polynomials modulo composite numbers can model more complex constraints than degree-$d$ polynomials modulo a prime. For example, it is known (cf.~\cite[\S 2]{Barrington1992some}) that there is a degree-$3$ polynomial~$f$ over the integers modulo~$6$ which represents a logical \OR of size~$27$ in the following way:
\begin{equation} \label{eq:or}
f(x_1,\ldots,x_{27}) \not\equiv_6 0 \Leftrightarrow (x_1\vee\ldots\vee x_{27}).
\end{equation}
By this expressibility of a size-$27$ \OR by a polynomial of degree $3$ over $\mathbb{Z}/6\mathbb{Z}$ using the same variables, one easily constructs a linear-parameter transformation from \problem{$27$-cnf-sat} to \problem{$3$-Polynomial non-root CSP} over~$\mathbb{Z}/6\mathbb{Z}$ by mimicking the proof of Theorem~\ref{thm:nonroot:q:lb}. Since \problem{$27$-cnf-sat} does not have a kernel of size~$\Oh(n^{27-\epsilon})$ for any~$\epsilon > 0$ unless \containment (Theorem~\ref{thm:lower_bound:CNF}), this linear-parameter transformation rules out kernels of size~$\Oh(n^{27-\epsilon})$ for \problem{$3$-Polynomial non-root CSP} over~$\mathbb{Z}/6\mathbb{Z}$ under the same conditions. Plugging in the degree of $3$ and modulus $6$ into the bound of Theorem~\ref{thm:UB:non-root} would give a reduction to $\Oh(n^{3\cdot(6-1)}) = \Oh(n^{15})$ constraints and would contradict the lower bound. The example therefore shows that the problem is more complex for composite moduli: the bound for the prime case cannot be matched. In particular, we will see that the exponent in the kernel size may depend super-linearly on the degree~$d$ of the CSP. For general non-primes, we give a lower bound using a construction by Bhowmick \etal~\cite{Bhowmick2015Nonclassical} of low-degree polynomials representing \OR in the sense of Equation~\ref{eq:or}.

\begin{theorem}
\label{thm:nonroot:composite:lb}
Let $m$ be a non-prime with a prime factorization consisting of $r$ distinct primes, such that $m = \prod_{i \in [r]} p_i$. Let~$d$ be an even integer. Then \nonrootCSP over~$\mathbb{Z}/m\mathbb{Z}$ parameterized by the number of variables $n$ does not have a generalized kernel of size $\Oh(n^{(d/2)^r-\varepsilon})$ for any $\varepsilon  > 0$, unless \containment.\footnote{This theorem corrects a statement in the extended abstract of this work, in which a lower bound of $\Omega(n^{(d^r)/2-\varepsilon})$ was erroneously claimed.}
\end{theorem}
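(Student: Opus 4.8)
The plan is to repeat the strategy of Theorem~\ref{thm:nonroot:q:lb}, but to implement a logical \OR not by a degree-$1$ polynomial over~$\mathbb{Q}$ but by a low-degree polynomial over~$\mathbb{Z}/m\mathbb{Z}$, and then reduce from \textsc{cnf-sat} with large (but constant-size) clauses. The first step is to invoke the construction of Bhowmick \etal~\cite{Bhowmick2015Nonclassical} to obtain, for every sufficiently large even integer $d$, an explicit polynomial $f$ over $\mathbb{Z}/m\mathbb{Z}$ of degree at most $d$ in $N := (d/2)^r$ variables that represents \OR in the sense of Equation~\ref{eq:or}: for every $(b_1,\ldots,b_N) \in \{0,1\}^N$ we have $f(b_1,\ldots,b_N) \not\equiv_m 0$ if and only if $b_1 \vee \cdots \vee b_N$. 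The reason the exponent $(d/2)^r$ appears is that, under the isomorphism $\mathbb{Z}/m\mathbb{Z} \cong \prod_{i\in[r]} \mathbb{Z}/p_i\mathbb{Z}$, each prime factor contributes multiplicatively to the arity of the \OR that can be represented while contributing only additively to the degree; extracting this precise degree--arity trade-off for the specific squarefree modulus $m$ is the technical core of the proof.

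Given such an $f$, I would describe a linear-parameter transformation from $N$-\textsc{cnf-sat} to \nonrootCSP over~$\mathbb{Z}/m\mathbb{Z}$. For a formula $\varphi$ over variables $x_1,\ldots,x_n$, keep the same $n$ variables and replace each clause $C = (\ell_1 \vee \cdots \vee \ell_k)$, with $k \le N$, by the single constraint $f(\hat\ell_1,\ldots,\hat\ell_k,0,\ldots,0) \not\equiv_m 0$, where $\hat\ell_j := x_i$ when $\ell_j = x_i$ and $\hat\ell_j := 1 - x_i$ when $\ell_j = \neg x_i$, and the remaining $N-k$ arguments of $f$ are filled with the constant~$0$. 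Substituting $0/1$-valued expressions and constants into $f$ does not increase its degree, so each constraint is a polynomial inequality of degree at most~$d$; and since padding an \OR with zeros does not change its value, the constraint is satisfied by a $0/1$-assignment precisely when $C$ is. No new variables are created, so the parameter~$n$ is preserved, and for fixed $m$ and $d$ the polynomial $f$ has only $\Oh(1)$ monomials, so the transformation runs in polynomial time.

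To finish, I would combine this transformation with the known lower bound for \textsc{cnf-sat}. As long as $d$ is large enough that $N = (d/2)^r \ge 3$ --- the few smaller cases yield only trivial or previously known bounds --- Theorem~\ref{thm:lower_bound:CNF} states that $N$-\textsc{cnf-sat} parameterized by the number of variables has no generalized kernel of size $\Oh(n^{N-\varepsilon})$ unless \containment. By the composition property of linear-parameter transformations recalled in Section~\ref{sec:preliminaries}, a generalized kernel of size $\Oh(n^{N-\varepsilon})$ for \nonrootCSP over~$\mathbb{Z}/m\mathbb{Z}$ would give one of the same size for $N$-\textsc{cnf-sat}, a contradiction. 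Hence \nonrootCSP over~$\mathbb{Z}/m\mathbb{Z}$ parameterized by the number of variables has no generalized kernel of size $\Oh(n^{(d/2)^r-\varepsilon})$ for any $\varepsilon > 0$, unless \containment.

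The main obstacle is the first step: correctly pinning down the Bhowmick \etal\ construction so that it yields arity exactly $(d/2)^r$ at degree $d$ for even $d$ and the given composite modulus, including verifying any divisibility or size conditions on $d$ relative to the prime factors $p_1,\ldots,p_r$. Once that polynomial is in hand, the remainder is a routine adaptation of the proof of Theorem~\ref{thm:nonroot:q:lb}, with the only minor care being the treatment of clauses shorter than $N$ (handled above by padding arguments of $f$ with the constant~$0$) and the boundary values of $d$ for which $(d/2)^r < 3$.
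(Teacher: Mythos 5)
Your proposal matches the paper's proof essentially step for step: it invokes the Bhowmick \etal construction of an \OR-representing polynomial of degree $2\lceil N^{1/r}\rceil$ (so degree $d$ at arity $N=(d/2)^r$ for even $d$), then performs the same linear-parameter transformation from $(d/2)^r$-\textsc{cnf-sat} as in Theorem~\ref{thm:nonroot:q:lb} and concludes via Theorem~\ref{thm:lower_bound:CNF}. The only difference is presentational: the paper simply cites the degree--arity bound from Bhowmick \etal rather than treating its derivation as a technical core to be re-established, and it does not spell out the padding of short clauses, which is routine.
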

\begin{proof}
For any integer~$N \geq 1$, Bhowmick \etal \cite[Appendix A]{Bhowmick2015Nonclassical} provide a way to construct a polynomial $f$ of degree  $2\lceil N^{1/r}\rceil $ such that for all $x_1,\ldots,x_N \in \{0,1\}$,
\begin{equation}
f(x_1,\ldots,x_N) \not\equiv_m 0 \Leftrightarrow (x_1\vee\dots\vee x_N).
\end{equation}
This implies that for even values of~$d$ and~$N = (d/2)^r$, we can find a polynomial~$f$ of degree~$d$ satisfying the above equation. As such, \nonrootCSP can express a logical \OR of size $(d/2)^r$ without introducing auxiliary variables. As in the proof of Theorem~\ref{thm:nonroot:q:lb}, this gives a linear-parameter transformation from \problem{$(d/2)^r$-cnf-sat} to \nonrootCSP. By Theorem~\ref{thm:lower_bound:CNF}, the latter problem does not have a generalized kernel of size $\Oh(n^{(d/2)^r-\varepsilon})$ for any $\varepsilon  > 0$, unless \containment. Hence the same lower bound applies to the CSP.
\end{proof}

In case $m$ does not have a prime factorization in which all primes are distinct, it is possible to obtain weaker a lower bound using a result by Barrington \etal \cite{Barrington1994representing}, which proves that there exists a polynomial of degree $\Oh(\ell N^{1/r})$ that represents a logical \OR when taken modulo $m$. Here $\ell$ is the largest prime factor of~$m$. For prime moduli, the following result provides a lower bound almost matching the upper bound in Section \ref{sec:UB:non-root}.

\begin{theorem}
\label{thm:nonroot:prime:lb}
Let $p$ be a prime. Then \nonrootCSP over~$\mathbb{Z}/p\mathbb{Z}$ parameterized by the number of variables $n$ does not have a generalized kernel of size $\Oh(n^{d(p-1)-\varepsilon})$ for any $\varepsilon > 0$, unless \containment.
\end{theorem}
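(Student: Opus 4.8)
The plan is to prove the bound by a linear-parameter transformation into \nonrootCSP over \primefield from \problem{$d(p-1)$-cnf-sat}, mimicking the proof of Theorem~\ref{thm:nonroot:q:lb}. By Theorem~\ref{thm:lower_bound:CNF}, \problem{$d(p-1)$-cnf-sat} has no generalized kernel of size $\Oh(n^{d(p-1)-\varepsilon})$ unless \containment, provided $d(p-1)\ge 3$; for the few pairs $(d,p)$ with $d(p-1)\le 2$ the resulting CSP is either polynomial-time solvable or the stated bound is already covered by earlier results, so we assume $d(p-1)\ge 3$. It then suffices to transform an $n$-variable \problem{$d(p-1)$-cnf-sat} instance into an equivalent \nonrootCSP instance over \primefield on the same $n$ variables, replacing each clause by a single polynomial inequality of degree at most $d$; since no auxiliary variables are introduced, such a transformation preserves the parameter and is trivially polynomial-time. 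The whole content of the argument is thus a gadget expressing a disjunction of up to $d(p-1)$ literals as one degree-$d$ non-root constraint over \primefield.

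To build the gadget, I would split the at most $d(p-1)$ literals of a clause $C$ into $d$ groups $G_1,\ldots,G_d$, each of size at most $p-1$ (empty groups allowed). For each $j$ put $\sigma_j := \sum_{\ell\in G_j} v(\ell)$, where $v(x)=x$ and $v(\neg x)=1-x$; this is a degree-$1$ polynomial taking integer values in $\{0,1,\ldots,p-1\}$, so $\sigma_j\equiv_p 0$ if and only if every literal of $G_j$ is false. Hence $C$ is \emph{unsatisfied} precisely when $(\sigma_1,\ldots,\sigma_d)$ is the zero vector of $\mathbb{F}_p^d$. What remains is a polynomial $g$ of degree $d$ in $d$ variables over $\mathbb{F}_p$ vanishing \emph{only} at the origin, i.e.\ a degree-$d$ anisotropic form. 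For this I would use the norm form of the extension $\mathbb{F}_{p^d}/\mathbb{F}_p$: fixing an $\mathbb{F}_p$-basis $b_1,\ldots,b_d$ of $\mathbb{F}_{p^d}$, set
\[
g(\sigma_1,\ldots,\sigma_d)\ :=\ N_{\mathbb{F}_{p^d}/\mathbb{F}_p}\!\Big(\textstyle\sum_{j=1}^d \sigma_j b_j\Big)\ =\ \prod_{i=0}^{d-1}\Big(\textstyle\sum_{j=1}^d \sigma_j b_j^{p^i}\Big).
\]
Cyclic permutation of the factors by the Frobenius shows $g$ has coefficients in $\mathbb{F}_p$; it is homogeneous of degree exactly $d$; and since the $b_j$ are linearly independent and Frobenius is an automorphism, $g(\vec\sigma)\equiv_p 0$ iff $\sum_j\sigma_j b_j=0$ iff $\vec\sigma=\vec 0$, a property that persists when some coordinates are forced to $0$ (so short clauses cause no trouble). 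Substituting the degree-$1$ forms $\sigma_j$ into $g$ and reducing $x_i^c$ to $x_i$ yields a multilinear polynomial $f_C$ over \primefield of degree at most $d$ with $f_C\not\equiv_p 0$ exactly when $C$ is satisfied.

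Assembling the pieces: given a \problem{$d(p-1)$-cnf-sat} formula on $x_1,\ldots,x_n$, output the \nonrootCSP instance over \primefield on the same variables with inequality list $\{\,f_C\not\equiv_p 0 \mid C\text{ a clause}\,\}$. This runs in polynomial time (for fixed $p,d$ a basis of $\mathbb{F}_{p^d}$ is found in constant time), preserves the parameter, and produces a satisfiable instance exactly when the formula is satisfiable. This is a linear-parameter transformation, so by Theorem~\ref{thm:lower_bound:CNF} together with the transfer of kernelization lower bounds along such transformations (Section~\ref{sec:preliminaries}), \nonrootCSP over \primefield has no generalized kernel of size $\Oh(n^{d(p-1)-\varepsilon})$ for any $\varepsilon>0$ unless \containment. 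Note that this is essentially tight: Theorem~\ref{thm:UB:non-root} gives a kernelization with $n^{d(p-1)}+1$ constraints.

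The only genuinely nontrivial point is the construction and verification of the degree-$d$ anisotropic form: that it really has degree $d$, has coefficients in $\mathbb{F}_p$, and has no nontrivial zero (also on coordinate subspaces, for short clauses). Everything else — grouping the literals, handling negations via $1-x$, bounding the parameter, checking the running time — is a routine adaptation of the proof of Theorem~\ref{thm:nonroot:q:lb}.
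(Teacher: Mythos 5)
Your proposal is correct and follows essentially the same route as the paper: a linear-parameter transformation from \problem{$d(p-1)$-cnf-sat} that replaces each clause by a single degree-$d$ non-root constraint over~$\mathbb{Z}/p\mathbb{Z}$ on the same variables, combined with Theorem~\ref{thm:lower_bound:CNF}. The only difference is that the paper simply cites Beigel for the existence of a degree-$d$ polynomial weakly representing an \OR of $d(p-1)$ variables modulo~$p$, whereas you construct it explicitly (group sums of at most $p-1$ literals fed into the norm form of $\mathbb{F}_{p^d}/\mathbb{F}_p$); your construction is correct and is essentially the standard proof of the cited fact, and the caveat about the degenerate cases $d(p-1)\le 2$ is shared with the paper's own argument.
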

\begin{proof}
We use a linear-parameter transformation from \problem{$d(p-1)$-cnf-sat}. We proceed similarly as in the proof of Theorem~\ref{thm:nonroot:composite:lb}. It is known (cf.~\cite[Theorem 24]{Beigel93}) that for each prime~$p$ and integer~$d$, there is a polynomial~$f$ of degree~$d$ modulo~$p$, such that for any~$x_1, \ldots, x_{d(p-1)} \in \{0,1\}$ we have:
\[f(x_1,\ldots,x_{d(p-1)}) \not\equiv_p 0 \Leftrightarrow (x_1 \vee x_2 \vee \dots\vee x_{d(p-1)}).\]
This allows the linear-parameter transformation to be carried out as in Theorem~\ref{thm:nonroot:composite:lb}.
\end{proof}

\section{Conclusion} \label{sect:conclusion}

We have given upper and lower bounds on the kernelization complexity of binary CSPs that can be represented by polynomial (in)equalities, obtaining tight sparsification bounds in several cases. Our main conceptual contribution is to analyze constraints on binary variables based on the minimum degree of multivariate polynomials whose roots, or non-roots, capture the satisfying assignments. The ultimate goal of this line of research is to characterize the optimal sparsification size of a binary CSP based on easily accessible properties of the constraint language. To reach this goal, several significant hurdles have to be overcome.

For \nonrootCSP over the integers modulo~$6$, we do not know of any way to reduce the number of constraints to polynomial in~$n$. This difficulty is connected to longstanding questions regarding the minimum degree of a multivariate polynomial modulo~$6$ that represents the \OR-function of~$n$ variables in the sense of Equation~\ref{eq:or}. As exploited in the construction of Theorem~\ref{thm:nonroot:composite:lb}, if the \OR-function with~$g(d)$ inputs can be represented by polynomials of degree~$d$, then \nonrootCSP cannot be compressed to size~$\Oh(n^{g(d)-\varepsilon})$ unless \containment. By contraposition, a kernelization with size bound~$\Ohtilde(n^{h(d)})$ implies a lower bound of~$h^{-1}(d)$ on the degree of a polynomial representing an \OR of arity~$h(d)$, assuming \ncontainment. Kernel bounds where~$h(d)$ is polynomially bounded in~$d$, would therefore establish lower bounds of the form~$\Omega(n^\alpha)$ on the degree of polynomials representing an $n$-variable \OR modulo 6, for some~$\alpha>0$. However, the current-best degree lower bound~\cite{TardosB98} is only~$\Omega(\log n)$, which has not been improved in nearly two decades (cf.~\cite[\S 1.4]{Bhowmick2015Nonclassical}).

When it comes to CSPs whose constraints are of the form ``the number of satisfied literals in the clause belongs to set~$S$'', many cases remain unsolved. We can prove (see Appendix \ref{ap:primesat}) that for constraints of the form ``the number of satisfied literals is a prime number'', no generalized kernel of size polynomial in~$n$ exists unless \containment. On the other hand, Corollary~\ref{cor:S_SAT} gives good compressions for problems of the type ``the number of satisfied literals in the clause is a multiple of three''. Is sparsification possible when a constraint requires the number of satisfied literals to be a square, for example?

A simple example of a CSP whose kernelization complexity is currently unclear has constraints of the form ``the number of satisfied literals is one or two, modulo six''. The approach of Theorem \ref{thm:kernel} fails, since there is no polynomial modulo six with root set~$\{1,2\}$.

Finally, we mention that all our results extend to the setting of min-ones and max-ones CSPs, in which one has to find a satisfying assignment that sets at least, or at most, a given number of variables to true. For example, our results easily imply that \textsc{Exact Hitting Set} parameterized by the number of variables~$n$ has a sparsification of size~$\Oh(n^2)$, which cannot be improved to~$\Oh(n^{2-\varepsilon})$ unless \containment.

\subparagraph*{Acknowledgements}
We are grateful to Fedor Petrov for suggesting Lemma~\ref{lemma:spanningset:modm}.

\bibliographystyle{plainurl}
\bibliography{report}
\clearpage
\appendix
\section{Prime SAT}\label{ap:primesat}

% History dates
%\received{todo}{todo}{todo}

% Electronic Appendix
%\elecappendix

\medskip

In this appendix we consider the following variant of the satisfiability problem, in which a clause is satisfied if the number of satisfied literals is a prime.

\defparproblem{\problem{Prime-Sat}}
{A set of clauses $\mathcal{C}$ over variables~$V := \{x_1, \ldots, x_n\}$. Each clause is a set of distinct literals of the form~$x_i$ or~$\neg x_i$.}
{The number of variables $n$}
{Does there exist a truth assignment for the variables~$V$ such that the number of satisfied literals in clause $i$ is a prime for all $i$?}

\begin{theorem} \label{thm:primesat:lb}
\problem{Prime-Sat} parameterized by the number of variables does not have a polynomial kernel unless \containment.
\end{theorem}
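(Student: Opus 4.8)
The plan is to prove the lower bound by an \OR-cross-composition into \problem{Prime-Sat} (equivalently, a polynomial-parameter transformation from a parameterized problem already known to lack a polynomial kernel, such as \textsc{cnf-sat} parameterized by the number of variables~\cite{SatisfiabilityDell14,FortnowS11}); by Theorem~\ref{thm:cross_composition_LB} it suffices to exhibit such a composition of cost $t^{o(1)}$ from an NP-hard source. The heart of the construction is a small toolbox of gadgets that simulate Boolean logic using only ``the number of satisfied literals is prime'' constraints. First, a two-literal clause $\{a,b\}$ is satisfied iff both $a$ and $b$ are satisfied, since the only prime in $\{0,1,2\}$ is $2$; thus adding a clause $\{T,z\}$ with a fresh $z$ forces $T$ to be true, so we may assume a globally-true variable $T$ is available (and symmetrically force any literal to any value). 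Using $T$, a three-literal clause $\{a,b,T\}$ is satisfied iff $a\vee b$ (the primes in $\{0,1,2,3\}$ are exactly $\{2,3\}$, i.e.\ ``at least two satisfied''), the clause $\{\neg a,b,T\}$ expresses the implication $a\to b$, and the four-literal clause $\{a,b,c,T\}$ expresses the not-all-equal constraint on $a,b,c$. Consequently every binary Boolean constraint is available at the cost of only the one shared variable $T$.

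The second ingredient is to express a wide disjunction $\ell_1\vee\dots\vee\ell_q$. A single \problem{Prime-Sat} clause cannot do this, since it is symmetric in its literals and this would require the interval $\{1,\dots,q\}$ to consist of primes, which fails for $q\ge 3$. I would therefore build the disjunction from a balanced binary tree with leaves $\ell_1,\dots,\ell_q$ and fresh ``gate'' variables at the internal nodes: for a node $v$ with children $v_\ell,v_r$, the clauses $\{\neg v_\ell,v,T\}$, $\{\neg v_r,v,T\}$, $\{\neg v,v_\ell,v_r,T\}$ together enforce $v\leftrightarrow(v_\ell\vee v_r)$, and one asserts the root $\rho$ by the clause $\{\rho,T\}$. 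This costs $\Theta(q)$ fresh variables and clauses per disjunction. A fresh-variable-free alternative is to pad a single clause with dummy variables and invoke Bertrand's postulate---there is a prime between $N$ and $2N$---so that whenever some $\ell_i$ is true the dummies can push the count of satisfied literals onto a nearby prime, while the all-false case stays clear of the primes; however, making the all-false case reject forces the dummies to be constrained rather than free, which is delicate, so I would favour the tree.

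The main obstacle, and where I expect the real work, is keeping the number of variables of the produced \problem{Prime-Sat} instance polynomially (ideally nearly linearly) bounded in the parameter, since each wide disjunction consumes fresh variables and the source may contain many of them---a naive clause-by-clause translation blows the parameter up to the full instance size. I would address this by basing the composition on a construction in which the ``wide'' constraints are few and of controlled arity---for instance reusing the graph $G'$ of Lemma~\ref{lem:cross_composition_erbds}, where all but $O(k)$ of the constraint (blue) vertices have degree bounded by a polynomial in the size of a single input instance (requirement~\ref{req:degrees}), so that only a polynomially-bounded number of gate variables is needed, while ``exactly one neighbour'' constraints are obtained by combining an at-least-one disjunction gadget with the free ``not both'' binary gadgets over all pairs---and then iterating the composition (in the spirit of the passage from Theorem~\ref{thm:exact_RBDS} to Theorem~\ref{thm:drootcspmod:lb}) to drive the cross-composition cost down to $t^{o(1)}$. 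The remaining technical points are to check that the ``prime count'' constraints faithfully mimic the exact-domination constraints of that construction, so that no spurious prime-sized solution is introduced, and that the total count of forcing and gate variables respects the budget of Theorem~\ref{thm:cross_composition_LB}.
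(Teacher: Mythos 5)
There is a genuine gap, and it lies exactly where you place the ``real work''. Your plan to reuse the graph $G'$ of Lemma~\ref{lem:cross_composition_erbds} does not survive the translation into \problem{Prime-Sat}: besides the few high-degree blue vertices outside $V$, the construction contains $\Theta(\sqrt{t}\cdot k\cdot m_B)$ blue vertices \emph{inside} $V$, each carrying an ``at least one neighbour'' constraint over a neighbourhood of size up to $\Theta(\sqrt{t}\cdot m_R)$ (their neighbours in $U$ range over all $\sqrt{t}$ column-groups). In the paper's CSP lower bounds these constraints are free, since a single linear equation modulo $m$ expresses them with no auxiliary variables; under your tree gadget each of them costs $\Theta(\text{degree})$ fresh gate variables, for a total of $\Theta(t\cdot\mathrm{poly}(m_R+m_B))$ new variables, which destroys the $\Oh(\sqrt{t}\cdot\mathrm{poly})$ parameter budget of a degree-$2$ cross-composition, let alone cost $t^{o(1)}$. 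Your fallback of ``iterating the composition'' in the spirit of Theorem~\ref{thm:drootcspmod:lb} is also unsubstantiated: that iteration hinges on multiplying constraints by selector variables $y^z_{i_z}$, i.e.\ on raising the polynomial degree, and \problem{Prime-Sat} clauses (being symmetric counting constraints) offer no analogue of this selection mechanism. Similarly, your alternative of a polynomial-parameter transformation from \textsc{cnf-sat} fails because the number of clauses (hence of tree gadgets and fresh variables) is not bounded by any polynomial in $n$.

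The missing idea, which the paper uses, is that one does not need unbounded-width \OR{}s at all: it suffices to give, for \emph{every fixed} $d$, a linear-parameter transformation from \dSAT, since Theorem~\ref{thm:lower_bound:CNF} then rules out kernels of size $\Oh(n^{d-\varepsilon})$ for all $d$ and hence any polynomial kernel. A width-$d$ \OR \emph{can} be encoded in a single \problem{Prime-Sat} clause, contrary to your dismissal: pad with $a$ forced-true literals and replace each literal $\ell_j$ by $b$ copies of $\neg\ell_j$, so that $i$ satisfied original literals yield $a+(d-i)b$ satisfied literals; choosing $\{a+ib : 0\le i<d\}$ to be an arithmetic progression of $d$ primes whose next term $a+db$ is composite (Green--Tao guarantees such a progression, and a short argument shows it can be taken maximal) makes the count prime iff $i>0$. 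Since $a$ and $b$ are constants depending only on $d$, the forced-true pool and the $b$ duplicate copies of each variable (made legal by small equality gadgets exploiting that $2,3,5$ are prime while $0,1,4$ are not) cost only a constant-factor blowup in the number of variables, independent of the number of clauses. Your Bertrand-postulate aside gestures at padding but does not supply this arithmetic-progression structure, and Bertrand's postulate alone cannot: you need $d$ consecutive ``satisfied'' counts to be prime and the ``all false'' count to be composite, which is precisely what the progression of primes provides.
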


\begin{proof}
We show the non-existence of a polynomial kernel by giving a linear parameter transformation from \dSAT for any~$d$, which establishes the claimed lower bound by Theorem~\ref{thm:lower_bound:CNF}. So fix an integer~$d$ and let an instance $\mathcal{F}$ of \dSAT be given. 

It is proven in \cite{green2008primes} that the primes contain arbitrarily long arithmetic progressions; hence there is an arithmetic progression of length at least~$d$ among the primes. Let~$\{c + i \cdot b \mid i \in \mathbb{N}_0 \wedge i < d\}$ be an arithmetic progression of~$d$ primes. We claim that this arithmetic progression has a finite length: there is some integer~$j$ such that~$c + j \cdot b$ is not prime. To see this, note that $b+1$ divides $c + (c + b + 1) \cdot b = (c+b)(b+1)$, which bounds the length of this progression. Hence we can choose~$a \geq c$ such that~$\{a + i \cdot b \mid i \in \mathbb{N}_0 \wedge i < d\}$ is a set of~$d$ primes, while~$a + d \cdot b$ is not prime. Using~$a$ and~$b$, we transform the instance~$\mathcal{F}$ of \dSAT into an equivalent instance of \problem{Prime-Sat}, as follows.

For each clause $C_i = (\ell_1,\ldots,\ell_d)$ in $\mathcal{F}$, we add a clause~$C'_i$
\[(\underbrace{1,\dots,1}_{a \text{ copies}},\underbrace{\neg\ell_1,\dots,\neg\ell_1}_{b \text{ copies}},\ldots,\underbrace{\neg\ell_d,\dots,\neg\ell_d}_{b \text{ copies}})\]
to the \problem{Prime-Sat} instance $\mathcal{F'}$. If an assignment of the variables in~$\mathcal{F}$ satisfies~$i$ literals of~$C_i$, then the corresponding clause~$C'_i$ will have~$a + (d-i) \cdot b$ satisfied literals. By our choice of~$a$ and~$b$, this number of prime if and only if~$i > 0$. Hence~$C_i$ is satisfied for \dSAT exactly when~$C'_i$ is satisfied for \problem{Prime-Sat}.

So far, the construction uses multiple occurrences of the same variable, and also uses the constant $1$. Formally, this is not allowed in the definition of \problem{Prime-Sat}. We resolve this issue by replacing the constants by $a$ new variables $T_1,\ldots,T_a$. These can be forced to \true by adding clauses $(T_i,T_{i+1})$ for $i \in [a-1]$, since two is a prime number while zero and one are not. For each variable $x$ we add $b$ distinct copies $x_1,\dots,x_b$ and require them to be equal with clauses $(T_1,T_2,T_3,x_i,x_{i+1})$ for $i \in [b-1]$. Since three and five are primes while four is not, while all~$T_i$ are forced to true by the earlier part of the construction, this clause is only satisfied when the two copies are both \true, or both \false. In this way we eliminate the need for repeated variables in the clauses of~$\mathcal{F}'$. As the number of variables increases by a constant factor depending only on~$d$, which is fixed, this yields a valid linear-parameter transformation for each~$d$.
\end{proof}

\end{document}